\newtheorem{theorem}{Theorem}
\newtheorem{definition}{Definition}
\newtheorem{lemma}{Lemma}
\newtheorem{corollary}{Corollary}
\newtheorem{remark*}{Remark}
\newtheorem*{openprob*}{Open Problem}
\newtheorem*{conjecture*}{Conjecture}
\newtheorem*{example*}{Example}
\newtheorem*{note*}{Note}
\newtheorem*{prob*}{Problem}
\newtheorem*{acknowledgements*}{Acknowledgements}
\newlength\algowd
\title{The complexity of optimal design of temporally connected graphs.}
\author[1]{Eleni C. Akrida}
\author[1]{Leszek G\k{a}sieniec}
\author[2]{George B. Mertzios}
\author[1]{Paul G. Spirakis}
\affil[1]{Department of Computer Science, University of Liverpool, UK\\
  \texttt{\{Eleni.Akrida,L.A.Gasieniec,P.Spirakis\}@liverpool.ac.uk}}
\affil[2]{School of Engineering and Computing Sciences, Durham University, UK\\
  \texttt{George.Mertzios@durham.ac.uk}}
\begin{document}

\maketitle

\begin{abstract}
We study the design of small cost temporally connected graphs, under various constraints. We mainly consider undirected graphs of $n$ vertices, where each edge has an associated set of discrete availability instances (labels). A journey from vertex $u$ to vertex $v$ is a path from $u$ to $v$ where successive path edges have strictly increasing labels. A graph is temporally connected iff there is a $(u,v)$-journey for any pair of vertices $u,v,~u\not= v$. We first give a simple polynomial-time algorithm to check whether a given temporal graph is temporally connected. We then consider the case in which a designer of temporal graphs can \emph{freely choose} availability instances for all edges and aims for temporal connectivity with very small \emph{cost}; the cost is the total number of availability instances used. We achieve this via a simple polynomial-time procedure which derives designs of cost linear in $n$. We also show that the above procedure is (almost) optimal when the underlying graph is a tree, by proving a lower bound on the cost for any tree. However, there are pragmatic cases where one is not free to design a temporally connected graph anew, but is instead \emph{given} a temporal graph design with the claim that it is temporally connected, and wishes to make it more cost-efficient by removing labels without destroying temporal connectivity (redundant  labels).
Our main technical result is that computing the maximum number of redundant labels is APX-hard, i.e., there is no PTAS unless $P=NP$. On the positive side, we show that in dense graphs with random edge availabilities, there is asymptotically almost surely a very large number of  redundant labels. A temporal design may, however, be \emph{minimal}, i.e., no redundant labels exist. We show the existence of minimal temporal designs with at least $n \log{n}$ labels.
\end{abstract}

\section{Introduction and motivation}

A temporal network is, roughly speaking, a network that changes with time. A great variety of modern and traditional networks are not static and change over time. For example, social networks, wired or wireless networks may change dynamically, transport network connections may only operate at certain times, etc. Dynamic networks in general have been attracting attention over the past years~\cite{avin,xuan,casteigts,dutta,spirakisc}, exactly because they model real-life applications. In this work, following the model of~\cite{kempe, spirakis} and~\cite{akrida}, we consider \emph{discrete time} and restrict our attention to systems in which only the connections between the participating entities may change but the entities remain unchanged. So we consider networks, the links of which are available only at certain discrete time instances, e.g. days or hours. This is a natural assumption when the dynamicity of the system is inherently discrete, e.g., in synchronous mobile distributed systems that operate in discrete rounds. Moreover, it gives a purely combinatorial flavour to the resulting models and problems.

In several such dynamic settings, maintaining connections may come at a cost; consider the transport network example above or an unstable chemical or physical structure, where energy is required to keep a link available. We define the cost as the total number of discrete time instances at which the network links become available. We focus on design issues of temporal networks that are temporally connected; a temporal network is temporally connected if information can travel over time from any node to any other node following \emph{journeys}, i.e., paths whose successive edges have strictly increasing availability time instances. If one has absolute freedom to design a small cost temporally connected temporal network on an underlying static network, i.e, choose the edge availabilities, then a reasonable design would be to select a rooted spanning tree and choose appropriate availabilities to construct time-respecting paths from the leaves to the root and \emph{then} from the root back to the leaves. However, in more complicated scenarios one may not be free to \emph{choose} edge availabilities arbitrarily but instead \emph{specific} link availabilities might pre-exist for the network; then, one is able to design a temporally connected temporal network using only the pre-existing availabilities or a subset of them. Imagine a hostile network on a complete graph where availability of a link means a break in its security, e.g., when the guards change shifts, and only then are we able to pass a message through the link. So, if we wish to send information through the network, we may only use the times when the shifts change and it is reasonable to try and do so by using as few of these breaks as possible. In such scenarios, we may need to first verify that the pre-existing edge availabilities indeed define a temporally connected temporal network. Then, we may try to reduce the cost of the design by \emph{removing} unnecessary (redundant) edge availabilities if possible, without loosing temporal connectivity. Consider, again, the clique network of $n$ vertices with one time availability per edge; it is clearly temporally connected with cost $\Theta(n^2)$. However, it is not straightforward if all these edge availabilities are necessary for temporal connectivity. We resolve here the complexity of finding the maximum number of redundant labels in any given temporal graph.

\subsection{The model and definitions}
It is generally accepted to describe a network topology using a graph, the vertices and edges of which represent the communicating entities and the communication opportunities between them respectively. We consider graphs whose edge availabilities are described by sets of positive integers (labels), one set per edge.
\begin{definition}[Temporal Graph]
Let $G=(V,E)$ be a (di)graph. A temporal graph on $G$ is an ordered triple $G(L)=(V,E,L)$, where $L=\{L_{e} \subseteq \mathbb{N^*}:e\in E\}$ is an \emph{assignment} of labels to the edges (arcs) of $G$. $L$ is called a \emph{labelling} of $G$.
\end{definition}

\begin{definition}[Time edge]
Let $e=\{u,v\}$ (resp. $e=(u,v)$) be an edge (resp. arc) of the underlying (di)graph of a temporal graph and consider a label $l\in L_e$. The ordered triplet $(u,v,l)$ is called \emph{time edge}.
\end{definition}

Note that an undirected edge $e=\{u,v\}$ is associated with $2\cdot |L_e|$ time edges, namely both $(u,v,l)$ and $(v,u,l)$ for every $l\in L_e$.

The labels of an edge (arc) $e$ are the \emph{discrete time instances} at which $e$ is available. In many networks and in several applications, the availability of links comes at a cost. For example, in secure networks there is a cost (per discrete time instance) to keep a link secure. We abstract such considerations by the concept of the \emph{cost} of a temporal graph and wish to have temporal graphs of low cost.
\begin{definition}[Cost of a labelling]
Let $G(L)=(V,E,L)$ be a temporal (di)graph and $L$ be its labelling. The \emph{cost} of $L$ is defined as $c(L)= \sum_{e\in E} |L_e|$.
\end{definition}

A basic assumption that we follow here is that when a message or an entity passes through an available link at time $t$, then it can pass through a subsequent link only at some time $t'>t$ and only at a time at which that link is available. 
\begin{definition}[Journey]
A \emph{temporal path} or \emph{journey} $j$ from a vertex $u$ to a vertex $v$ {\emph ($(u, v)$-journey)} is a sequence of time edges $(u, u_1, l_1)$, $(u_1, u_2, l_2)$, $\ldots$ , $(u_{k-1}, v, l_k)$, such that $l_i < l_{i +1}$, for each $1 \leq i \leq k - 1$. We call the last time label, $l_k$, {\emph arrival time} of the journey.
\end{definition}
\begin{definition}[Foremost journey]
A $(u,v)$-journey $j$ in a temporal graph is called \emph{foremost journey} if its arrival time is the minimum arrival time of all $(u,v)$-journeys' arrival times, under the labels assigned to the underlying graph's edges. We call this arrival time the \emph{temporal distance}, $\delta(u,v)$, of $v$ from $u$.
\end{definition}

In this work, we focus on \emph{temporally connected} temporal graphs, i.e., temporal graphs that have the following property:
\begin{definition}[Property TC]
A temporal (di)graph $G(L) = (V,E,L)$ satisfies the property TC, or equivalently $L$ satisfies TC on $G$, if for any pair of vertices $u,v \in V,~u\not=v$, there is a $(u,v)$-journey \emph{and} a $(v,u)$-journey in $G(L)$. A temporal (di)graph that satisfies the property TC is called \emph{temporally connected}.
\end{definition}
\begin{example*}
An undirected complete graph, $K_n$, is temporally connected under any labelling $L$ with $L_e\not= \emptyset$ for every $e \in E(K_n)$. Indeed, there is a $(u,v)$-journey and a $(v,u)$-journey between any $u,v\in V(K_n),~u\not=v$, namely the time edge $(u,v,l)$ and the time edge $(v,u,l)$ respectively, for any $l \in L_{\{u,v\}}$.
\end{example*}

\begin{definition}[Minimal temporal graph]
A temporal graph $G(L)=(V,E,L)$ over a (strongly) connected (di)graph is \emph{minimal} if $G(L)$ has the property TC, and the removal of any label from any $L_e,~e\in E$, results in a $G(L')$ that \emph{does not} have the property TC.
\end{definition}
\begin{definition}[Removal profit]
Let $G(L)=(V,E,L)$ be a temporally connected temporal graph. The \emph{removal profit} $r(G,L)$ is the largest total number of labels that can be removed from $L $ without violating TC on $G$.
\end{definition}
Here, removal of a label $l$ from $L$ refers to the removal of $l$ only from a particular edge and not from all edges that are assigned label $l$, i.e., if $l \in L_{e_1} \cap L_{e_2}$ and we remove $l$ from both $L_{e_1}$ and $L_{e_2}$, it counts as two labels removed from $L$.

Notice that if many edges have the same label, we can encounter \emph{trivial cases} of minimal temporal graphs. For example, the complete graph where every edge appears at time, say $t=5$, is minimal but there are no journeys of length larger than $1$. To avoid cases where minimality is caused merely due to the assignment of the same label(s) to many (or all) edges, we will often consider a special sub-category of (single-labelled) temporal graphs:
\begin{definition}[SLSE temporal graphs]
A Single-label-single-edge (SLSE) temporal graph is a temporal graph, each edge of which has a single label and no two edges have the same label, i.e., each label is assigned to (at most) a single edge. A labelling that gives an SLSE temporal graph is also called \emph{SLSE labelling}.
\end{definition}

\subsection{Previous work and our contribution}\label{sec:full_paper_related}
In recent years, there is a growing interest in distributed computing systems that are inherently dynamic. For example, temporal dynamics of network flow problems were considered in a set of pioneering papers \cite{skutella1, skutella2,woeginger, tardos}. The model we consider here is very closely related to the single-labelled model of the seminal paper of~\cite{kempe} as well as the multi-labelled model of~\cite{spirakis}. In~\cite{kempe}, the authors consider the case of one \emph{real} label per edge and examine how basic graph properties change when we impose the temporal condition; here, we extend that model by considering multiple labels per edge but we restrict our focus to integer labels. In \cite{spirakis}, the model of~\cite{kempe} is also extended to many labels per edge and the authors mainly examine the number of labels needed for a temporal design of a network to guarantee several graph properties with certainty. The latter also defined the cost notion and, amongst other results, gave an algorithm to compute foremost journeys which can be used to decide property TC. However, the time complexity of that algorithm was pseudo-polynomial, as it was dominated by the cube of the maximum label used in the given labelling.

In fact, the problem of testing whether a dynamic graph is temporally connected has been studied before in various settings~\cite{xuan,whitbeck,barjon}. The authors of~\cite{xuan} propose an algorithm for computing foremost journeys in a model of evolving graphs, where nodes and edges are associated with lists of time intervals, representing their existence over time, and each edge has a traversal time. In a similar setting,~\cite{whitbeck} studies temporal reachability graphs, in which a $(u,v)$-edge is present at time $t$ if (in the corresponding time-varying graph) there is a $(u,v)$-journey leaving $u$ after $t$ and arriving at $v$ after at most some specified time-interval. In~\cite{barjon}, the authors investigate discrete-time evolving graphs, for which they compute the \emph{transitive closure of journeys}, i.e., a static directed graph whose edges represent potential journeys. The algorithm they propose depends on the maximum label used, the number of vertices, and the maximum number of edges that simultaneously exist.

Here, we show that if the designer of a temporal graph can select edge availabilities freely, then an asymptotically optimal linear-cost (in the size of the graph) design that satisfies TC can be easily obtained (cf. Section~\ref{sec:cost_opt_design}). We give a matching lower bound to indicate optimality, in the case where the underlying graph is a tree. However, there are pragmatic cases where one is not free to design a temporal graph anew; instead, one is \emph{given} a set of possible availabilities per edge with the claim that they satisfy TC and the constraint that they may only use them or a subset of them for their design. We also propose a simple algorithm to verify TC in low polynomial time (cf. Section~\ref{sec:foremost}). The \emph{given} design may also be minimal; we partially characterise minimal designs in Section~\ref{sec:minimal}. On the other hand, there may be some labels of the initial design that can be removed without violating TC (and also result in a lower cost). In this case, how many labels can we remove at best? Our main technical result is that this problem is APX-hard, i.e. it has no PTAS unless $P=NP$. On the positive side, we show that in the case of complete graphs and random graphs, if the labels are also assigned at random, there is aymptotically almost surely a very large number of labels that can be removed without violating TC. A preliminary version of this work appeared in the $13^{th}$ Workshop on Approximation and Online Algorithms, WAOA 2015~\cite{akrida-waoa}.

Stochastic aspects and/or survivability of network design were also considered in \cite{gupta,lap1, lap2}.

\subsubsection{Further related work}
Below, we provide a short survey of papers with studies on networks labelled by time units or segments, in addition to the ones mentioned above.
	
\noindent \textbf{Labelled Graphs.} Labelled graphs have been widely used both in Computer Science and in Mathematics, e.g.,~\cite{molloy}. 

\noindent \textbf{Continuous Availabilities (Intervals).} Some authors have assumed the availability of an edge for a whole time-interval [$t_1,t_2$] or multiple such time-intervals and not just for discrete moments as we assume here. Examples of such studies are~\cite{xuan, tardos, akrida-algo}.

\noindent \textbf{Dynamic Distributed Networks.} In recent years, there is a growing interest in distributed computing systems that are inherently dynamic~\cite{angluin, avin, casteigts, clementi, dutta, kuhn, spirakisb, spirakisc, o'dell, sch}.

\noindent \textbf{Distance labelling.} A distance labelling of a graph $G$ is an assignment of unique labels to vertices of $G$ so that the distance between any two vertices can be inferred from their labels alone~\cite{gavoille, katz}.

\noindent \textbf{Random labellings.} Random temporal networks have been considered before, e.g., in~\cite{chaintreau,clementi,akrida}. In~\cite{chaintreau}, the authors model opportunistic mobile networks as a type of random temporal networks, where each edge exists at each time-step with a fixed probability, and show a small diameter in general for that type of networks. In ~\cite{clementi}, the authors examine the speed of information dissemination in a type of dynamic graphs, where each edge exists at each time-step with some probability depending on whether it existed in the previous time-step. The \emph{Expected Temporal Diameter} of the model of (random) temporal graphs that we consider here was first examined in \cite{akrida}.

\section{A low polynomial time algorithm for deciding TC
}\label{sec:foremost}
In this section, we propose a simple polynomial-time algorithm which, given a temporal (di)graph $G(L)=(V,E,L)$ and a source vertex $s \in V$, computes a \emph{foremost} $(s,v)$-journey, for every $v \not= s$, if such a journey exists. Curiously enough, the previously known algorithm was pseudo-polynomial~\cite{spirakis}. Our algorithm significantly improves the running time. In fact, we conjecture it is optimal.

\begin{theorem}
Algorithm \ref{alg:foremost} satisfies the following, for every vertex $v \in V,~v\not=s$:
\begin{enumerate}[label=(\alph*)]
\item If $arrival\_time[v] < + \infty$, then there exists a foremost journey from $s$ to $v$, the arrival time of which is exactly $arrival\_time[v]$. This journey can be constructed by following the $parent[v]$ pointers in reverse order.
\item If $arrival\_time[v] = + \infty$, then no $(s,v)$-journey exists.
\item The time complexity of Algorithm \ref{alg:foremost} is dominated by the sorting time of the set of time edges (resp. time arcs).
\end{enumerate}
\end{theorem}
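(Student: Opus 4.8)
The plan is to prove (a) and (b) together, by two inductions over the time edges in sorted order --- one for ``every value that is written is correct'' and one for ``no reachable vertex is left with a value that is too large'' --- and then to read off (c) from the loop structure. I will use the following reading of the algorithm: $arrival\_time[s]$ is initialised to $0$ (any value strictly below all labels works) and $arrival\_time[v]=+\infty$ for $v\neq s$; all time edges are sorted by their label; when a time edge $(u,v,l)$ is processed, we set $arrival\_time[v]\leftarrow l$ and $parent[v]\leftarrow u$ provided $arrival\_time[u]<l$ and $l<arrival\_time[v]$. Two facts will be used repeatedly: $arrival\_time[\cdot]$ is non-increasing during the execution, and (because of the strict inequality $arrival\_time[u]<l$) a time edge of label $l$ can only extend a journey that has already arrived strictly before time $l$, so time edges sharing a label cannot be chained.

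For soundness I would prove the invariant that, at every moment of the execution, $arrival\_time[v]=l<+\infty$ implies that there is an $(s,v)$-journey with arrival time exactly $l$, and that reading off the $parent$ pointers from $v$ backwards spells out such a journey. This is an induction on the updates: an update triggered by $(u,v,l)$ requires $arrival\_time[u]=l'<l$, and by the induction hypothesis there is an $(s,u)$-journey arriving at $l'$; appending the time edge $(u,v,l)$ gives an $(s,v)$-journey arriving at $l$ since $l'<l$. For the reconstruction, note that along a chain of $parent$ pointers the values $arrival\_time[v]>arrival\_time[parent[v]]>\cdots$ are strictly decreasing (each $parent$ pointer is written together with its $arrival\_time$ value, which only decreases afterwards), so the chain cannot cycle and must terminate at $s$; reversed, it is a sequence of time edges with strictly increasing labels ending at the label $arrival\_time[v]$. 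In particular $arrival\_time[v]\ge\delta(s,v)$ whenever it is finite.

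For optimality I would prove the complementary claim: for every vertex $x$ and every $(s,x)$-journey $J$ with arrival time $a$, once all time edges of label $\le a$ have been processed we have $arrival\_time[x]\le a$. The induction is on the number of edges of $J$; the case $x=s$ is immediate, and if $J$ ends with the time edge $(w,x,a)$ after a journey $J'$ from $s$ to $w$ with arrival time $a'<a$, then the induction hypothesis applied to $J'$ gives $arrival\_time[w]\le a'$ once all time edges of label $\le a'$ are processed, hence $arrival\_time[w]\le a'<a$ still holds when $(w,x,a)$ is examined (labels are non-decreasing in the sorted order and $arrival\_time$ only decreases), so the relaxation at $(w,x,a)$ forces $arrival\_time[x]\le a$. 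Applying this with $J$ a foremost $(s,v)$-journey gives $arrival\_time[v]\le\delta(s,v)$ at termination, and combined with soundness this yields $arrival\_time[v]=\delta(s,v)$, which is (a); statement (b) is the contrapositive of the displayed claim.

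Part (c) follows from the shape of the algorithm: initialisation costs $O(|V|)$; sorting the $m=\sum_{e\in E}|L_e|$ time arcs (or the $2\sum_{e\in E}|L_e|$ time edges in the undirected case) costs $O(m\log m)$; the single pass over the sorted list does $O(1)$ work per time edge, i.e.\ $O(m)$ in total; so the running time is $O(|V|+m\log m)$, dominated by the sort. The one genuinely delicate point throughout is the treatment of time edges that share a label: journeys must never be allowed to use two edges of equal label, which is exactly why the relaxation uses the strict inequality $arrival\_time[u]<l$, and making the completeness induction airtight --- in particular the claim that $arrival\_time[w]$ has already dropped to $\le a'<a$ before any label-$a$ time edge is examined --- is the step I would write out in most detail.
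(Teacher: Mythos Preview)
Your proposal is correct and follows essentially the same inductive approach the paper merely sketches in one paragraph (the paper writes only that ``for each label considered, it computes the foremost journeys from $s$ which arrive at that time'' and relegates the induction to a footnote); you have supplied the missing details in a clean way. One minor discrepancy worth flagging: you read the update condition as the Dijkstra-style relaxation $l<arrival\_time[v]$ and consequently speak of $arrival\_time[\cdot]$ being ``non-increasing'', whereas the algorithm as written uses the condition $b\notin R$, so each vertex is assigned its $arrival\_time$ exactly once and never revisited --- but since time edges are processed in non-decreasing label order the two conditions are equivalent, and every step of your argument still applies verbatim.
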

\begin{proof}[Proof sketch]
The algorithm actually considers each existing label in the sequence of time labels, from the smallest to the largest one. For each label considered, it computes the foremost journeys from $s$ which arrive at that time\footnote{One can prove this by induction.}. The algorithm examines each time edge (resp. time arc) exactly once.
\end{proof}

\begin{corollary}
The time complexity of Algorithm \ref{alg:foremost} is $O\big(c(L)\cdot \log{c(L)}\big)$.
\end{corollary}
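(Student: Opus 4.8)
The plan is to read the bound off directly from part~(c) of the theorem above, which already asserts that the running time of Algorithm~\ref{alg:foremost} is dominated by the time needed to sort the set of time edges (resp. time arcs). So all that remains is (i) to express the number of time edges in terms of $c(L)$ and (ii) to bound the cost of sorting that many elements.

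First I would invoke the definition of cost, $c(L)=\sum_{e\in E}|L_e|$, together with the remark following the definition of a time edge: each undirected edge $e=\{u,v\}$ and each label $l\in L_e$ give rise to exactly two time edges, $(u,v,l)$ and $(v,u,l)$ (in the directed case, exactly one time arc per arc--label pair). Hence the total number of time edges is $2\,c(L)$ in the undirected case and $c(L)$ in the directed case; in either case it is $\Theta(c(L))$.

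Next I would note that sorting a collection of $N$ elements by a comparison-based method such as mergesort costs $O(N\log N)$ time, under the standard word-RAM assumption that a comparison of two (integer) labels takes $O(1)$ time. Substituting $N=\Theta(c(L))$ gives a sorting time of $O\!\big(c(L)\log c(L)\big)$. By part~(c) of the theorem this term dominates the rest of the work — the single scan over the sorted time edges and the constant-time updates of the $arrival\_time$ and $parent$ entries — so the overall running time is $O\!\big(c(L)\log c(L)\big)$, as claimed.

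There is essentially no real obstacle here; the corollary is an immediate consequence of part~(c). The only mild point worth stating is the model assumption: if one insisted on counting bit operations rather than word operations, an extra factor polylogarithmic in the largest label would appear, but under the usual RAM convention the stated bound is exact.
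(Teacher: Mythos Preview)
Your proof is correct and follows essentially the same approach as the paper's own proof: both invoke part~(c) of the preceding theorem and then bound comparison-based sorting of $\Theta(c(L))$ time edges as $O(c(L)\log c(L))$. You are slightly more careful about the factor of~$2$ in the undirected case and about the word-RAM assumption, but these are minor elaborations on the same one-line argument.
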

\begin{proof}
The time complexity of the algorithm is dominated by the sorting time of $S(L)$. One can sort $S(L)$ by comparison-based sorting resulting in running time $O(|S(L)|\cdot \log{|S(L)|}) = O\big(c(L)\cdot \log{c(L)}\big)$.
\end{proof}

\begin{algorithm}[ht]
\caption{Foremost journey algorithm}
\label{alg:foremost}
\SetAlgoLined

\KwIn{A temporal (di)graph $G(L)=(V,E,L)$ of $n$ vertices, the set of all time edges (arcs) of which is denoted by $S(L)$; a designated source vertex $s \in V$}
\KwOut{A foremost $(s,v)$-journey from $s$ to all $v \in V \setminus \{s\}$, where such a journey exists; if no $(s,v)$-journey exists, then the algorithm reports it.}

Sort $S(L)$ in increasing order of labels \tcc*{Note that $|S(L)|=c(L)$}
Let $S'$ be the sorted array of time edges (resp. time arcs) according to time labels\;
$R:=\{s\}$  \tcc*{The set of vertices to which $s$ has a foremost journey}
$arrival\_time[s] := 0$\;
\For{all $v \in V \setminus \{s\}$}{
		$parent[v]:= \emptyset$\;
		$arrival\_time[v] := +\infty$\;
}
\For{all time edges (resp. time arcs) $(a,b,l)$ in the order given by $S'$}{
			\If{$a \in R$ \textbf{and} $b \not\in R$ \textbf{and} $arrival\_time[a]<l$ }{
				$parent[b]:= a$\;
				$arrival\_time[b] := l$\;
				$R := R \cup \{b\} $\;
			}		
}

\end{algorithm}

\begin{conjecture*}
We conjecture that any algorithm that computes journeys out of a vertex $s$ must sort the time edges (resp. time arcs) by their labels, i.e., we conjecture that Algorithm \ref{alg:foremost} is asymptotically optimal with respect to the running time.
\end{conjecture*}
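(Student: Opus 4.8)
The plan is to establish a matching lower bound $\Omega\big(c(L)\log c(L)\big)$ in the comparison-based model, i.e.\ where an algorithm may access the labels only through pairwise comparisons, which is exactly the model in which the Corollary's upper bound is stated. I would first argue that restricting to this model is \emph{necessary}: if the labels are integers bounded by a polynomial in $c(L)$, one can radix-sort $S(L)$ in $O(c(L))$ time, so the conjectured optimality can only hold against comparison-based algorithms. I would therefore make this hypothesis explicit, turning the conjecture into a precise statement about algebraic decision trees over the labels.

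Within this model the natural route is a reduction from sorting, phrased as an information-theoretic argument on the decision tree. Concretely, I would fix, for each size, a single ``hard'' underlying (di)graph $G$ with $c(L)=\Theta(m)$ time edges and then vary only the labelling. The aim is to exhibit a $G$ for which the correct output -- the array $arrival\_time[\cdot]$ together with the $parent[\cdot]$ tree demanded by parts (a) and (b) of the Theorem above -- takes at least $\big(\Theta(c(L))\big)!$ distinct values as the labelling ranges over all orderings of the labels. Since a comparison-based algorithm is an algebraic decision tree in which distinct outputs require distinct leaves, the existence of $k!$ distinct outputs forces depth $\Omega(k\log k)$, yielding the bound for $k=\Theta(c(L))$ and matching the Corollary.

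The technical core, and the step I expect to be the main obstacle, is constructing such an output-rich instance. I would look for a graph whose foremost-journey tree is \emph{forced}, by the strictly-increasing-label constraint, to visit a set of $\Theta(c(L))$ designated vertices in increasing order of their incident labels, so that reading out the parent pointers in the canonical vertex order literally returns the labels in sorted order; producing the required output therefore sorts the labels. The difficulty is genuine, and it is why the statement is so far only a conjecture: on some instances no sorting is needed -- e.g.\ on a star centred at $s$ each foremost journey is a single edge and $arrival\_time[v]$ can be reported with no comparisons -- so the construction must force \emph{data-dependent} routing in which the relative order of $\Omega(c(L))$ label-pairs provably determines the reported tree. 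One must also keep the graph oblivious to the label values (the reduction places the input values on edges by index, in $O(c(L))$ time, without itself sorting) while still guaranteeing that the journeys exist and that $G(L)$ is temporally connected. Reconciling this obliviousness with the demand for a fully sorted output is precisely where a naive linear-size gadget collides with sorting-network-type size obstructions, and is the crux that any proof of the conjecture must overcome.
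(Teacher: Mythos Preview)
The paper does not prove this statement: it is explicitly labelled a \emph{conjecture} and left open, so there is no proof in the paper to compare your proposal against. Your write-up is therefore not a blind reconstruction of an existing argument but a research plan toward an open problem, and you yourself acknowledge as much when you say the crux ``is why the statement is so far only a conjecture.''

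As a plan, your framing is sound. You are right that the statement cannot hold unconditionally (integer labels of polynomial size admit linear-time radix sort), so restricting to the comparison model is necessary before anything can be proved; the paper does not make this restriction explicit, so your reformulation sharpens the conjecture. The reduction-from-sorting strategy via decision-tree leaf counting is the natural one, and your identification of the two obstructions is accurate: first, the output size is only $O(n)$ entries (the arrays $arrival\_time[\cdot]$ and $parent[\cdot]$), so any hard family must have $c(L)=\Theta(n)$ for the information-theoretic count to have a chance; second, and more seriously, you must build a fixed graph on which every permutation of the input labels yields a \emph{distinct} foremost-journey tree, while the placement of labels on edges is done obliviously in linear time. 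Your star counterexample shows the first attempt fails, and your remark about ``sorting-network-type size obstructions'' correctly anticipates that a gadget forcing the tree to trace the sorted order may itself require super-linear structure, which would defeat the reduction.

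In short: there is no gap relative to the paper because the paper contains no proof; your proposal is a coherent outline of how one might attack the conjecture, with the hard step honestly flagged rather than resolved.
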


Note that Algorithm \ref{alg:foremost} can even compute foremost $(s,v)$-journeys, if they exist, that \emph{start} from a given time $t_{start}>0$. Simply, one ignores the time edges (arcs) with labels smaller than the start time.

\section{Asymptotically cost-optimal design for TC in undirected graphs.}\label{sec:cost_opt_design}
In this section, we study temporal design issues on connected undirected graphs, so that the resulting temporal graphs are temporally connected. In this scenario, the designer has absolute freedom to choose the edge availabilities of the underlying graph.

\begin{lemma}\label{thm:star}
There is an infinite family of graphs $G_n$ of $n$ vertices, for which the cost of any labelling that satisfies TC is at least $2n-3$.
\end{lemma}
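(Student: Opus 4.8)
The plan is to exhibit an explicit family $G_n$ and argue that no labelling of cost less than $2n-3$ can satisfy TC on it. The natural candidate is the \emph{star} $K_{1,n-1}$ with center $c$ and leaves $v_1,\dots,v_{n-1}$; this is why the lemma is labelled \texttt{thm:star}. The star has only $n-1$ edges, so a labelling has cost $\sum_{i} |L_{\{c,v_i\}}|$, and the claim is that this sum must be at least $2n-3 = 2(n-1)-1$. Equivalently, every leaf edge must carry at least two labels, with the single exception of (at most) one edge which may carry just one.

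The key structural observation is that any journey between two leaves $v_i$ and $v_j$ must pass through the center $c$ and therefore uses exactly the two edges $\{v_i,c\}$ and $\{c,v_j\}$: it consists of a time edge $(v_i,c,l)$ followed by $(c,v_j,l')$ with $l<l'$. Now fix a leaf edge $e_i=\{c,v_i\}$ and suppose, for contradiction, that $|L_{e_i}|=1$, say $L_{e_i}=\{l_i\}$. For the journey \emph{from} $v_i$ \emph{to} any other leaf $v_j$ we need a label of $e_j$ strictly greater than $l_i$; for the journey \emph{from} $v_j$ \emph{to} $v_i$ we need a label of $e_j$ strictly smaller than $l_i$. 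Hence every other edge $e_j$ ($j\neq i$) must contain both a label $>l_i$ and a label $<l_i$, so $|L_{e_j}|\geq 2$ for all $j\neq i$. Thus at most one leaf edge can be singly labelled: if two edges $e_i,e_j$ both had a single label, the argument applied to $e_i$ would force $|L_{e_j}|\geq 2$, a contradiction. Therefore at least $n-2$ of the $n-1$ leaf edges carry $\ge 2$ labels and the remaining one carries $\ge 1$, giving $c(L)\geq 2(n-2)+1 = 2n-3$.

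The main thing to be careful about is the boundary bookkeeping — making sure the ``at most one exceptional edge'' count is turned into the additive bound correctly, and checking small cases (the family is infinite, so $n\ge 3$ say, where $K_{1,n-1}$ is genuinely a tree and $2n-3\ge 3$). One should also note explicitly that the star is connected, so TC is meaningful, and that the bound is essentially tight: it is easy to give a labelling of cost $2n-3$ (pick one leaf $v_1$, label $e_1$ with $\{1\}$ — wait, better: label the ``down'' direction and ``up'' direction appropriately, e.g. assign $e_1=\{1\}$ and for $i\ge 2$ assign $e_i=\{0,2\}$ is not valid since labels are in $\mathbb{N}^*$; instead take $e_1=\{2\}$, $e_i=\{1,3\}$ for $i\ge2$), which shows the lemma is the best possible up to the additive constant and meshes with the ``almost optimal'' claim in the abstract. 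I do not expect any real obstacle here; the only subtlety is the strict-inequality direction argument above, which is the crux, and it goes through cleanly because all journeys between leaves have length exactly two.
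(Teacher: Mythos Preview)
Your proof is correct and follows essentially the same approach as the paper: exhibit the star $K_{1,n-1}$, observe that any two singly-labelled leaf edges obstruct a journey between their leaves, conclude that at most one edge may carry a single label, and hence $c(L)\ge 2(n-2)+1=2n-3$. Your tightness example $L_{e_1}=\{2\}$, $L_{e_i}=\{1,3\}$ for $i\ge 2$ is exactly the one the paper gives as well.
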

\begin{proof}
Consider the star graph of $n$ vertices, $n\geq 4$. Let $v_n$ be the root and $v_1, v_2, \ldots, v_{n-1}$ be the leaves. In any labelling on the star graph, which assigns only one label to two (or more) edges $(v_n, v_x),~(v_n,v_y),~x,y=1,2,\ldots,n-1,~x\not=y$, at least one of the vertices $v_x,v_y$ cannot reach the other via a journey. Therefore, any TC satisfying labelling on the star graph must assign at least $2$ labels to all edges of the graph, except possibly on one edge where it assigns a single label. The TC satisfying labelling which assigns labels $1,3$ to all edges except for one and label $2$ to the remaining edge has, therefore, minimum cost, namely $2n-3$ (cf.~Figure \ref{fig:star}). 

\begin{figure}[!htb]
\centering
\includegraphics[width=0.3\textwidth]{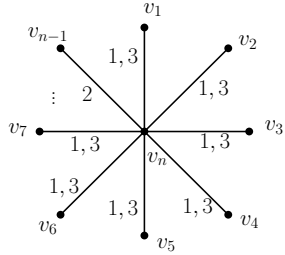}
\caption{Labelling a star graph in an optimal way}
\label{fig:star}
\end{figure}
\end{proof}


In fact, the result of Lemma \ref{thm:star} is optimal for any tree. Theorem~\ref{thm:thmdesign} shows a lower bound for trees and an asymptotically optimal\footnote{Any connected undirected graph needs at least $n-1$ labels on its edges to be temporally connected, and we show a TC satisfying labelling of $2n-2=\Theta(n)$ labels.} way of labelling any connected undirected graph to satisfy TC.

\begin{theorem}\label{thm:thmdesign}
\begin{enumerate}[label=(\alph*)]
\item For any tree $G=(V,E)$ of $n$ vertices and for any labelling $L$ that satisfies the property TC on $G$, the cost of $L$ is $c(L) \geq 2n-3$.
\item Given a connected undirected graph $G=(V,E)$ of $n$ vertices, we can design a labelling $L$ of cost $c(L) = 2(n-1)$ that satisfies the property TC on $G$. $L$ can be computed in polynomial time.
\end{enumerate}
\end{theorem}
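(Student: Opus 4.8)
The plan is to prove the two parts separately, as they require quite different arguments. For part~(b), the construction is the natural rooted-spanning-tree design sketched in the introduction, and the work is just to make it precise and count labels. For part~(a), the lower bound for trees, I expect to need an inductive/extremal argument on the structure of the tree, generalizing the star analysis of Lemma~\ref{thm:star}.

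\textbf{Part (b).} First I would take any spanning tree $T$ of $G$ and root it at an arbitrary vertex $r$; let $h$ be the height of $T$. Following the two-phase idea: in the ``up'' phase, assign to each tree edge at depth $d$ (the edge joining a vertex at depth $d$ to its parent at depth $d-1$) the label $h-d+1$, so that labels strictly increase along any leaf-to-root path, giving a journey from every vertex up to $r$ with arrival time $\le h$. In the ``down'' phase, assign to that same edge a second label $h+d$, so labels strictly increase along any root-to-leaf path, all these labels being $>h$; this yields a journey from $r$ down to every vertex. Concatenating an up-journey from $u$ to $r$ (arrival $\le h$) with a down-journey from $r$ to $v$ (departing at a label $\ge h+1$) gives a $(u,v)$-journey for every ordered pair, so $L$ satisfies TC. Each of the $n-1$ edges of $T$ receives exactly two labels and all non-tree edges of $G$ receive none, so $c(L)=2(n-1)$. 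Everything here is clearly polynomial-time computable (a BFS/DFS to build $T$ and compute depths, then a linear pass to assign labels). I should also remark that the footnote's lower bound $n-1$ (each vertex needs an incident time edge to be reachable) shows this is asymptotically optimal.

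\textbf{Part (a).} The aim is to show $c(L)\ge 2n-3$ for any TC labelling $L$ of a tree on $n$ vertices; equivalently, since a tree has $n-1$ edges, at most one edge may carry a single label and every other edge must carry $\ge 2$ labels. I would argue by contradiction: suppose two distinct edges $e_1=\{a,b\}$ and $e_2=\{c,d\}$ each have $|L_{e_i}|=1$, say $L_{e_1}=\{l_1\}$ and $L_{e_2}=\{l_2\}$. The key structural observation is that in a tree, for any edge $e$ the journey between the two endpoints' ``sides'' is forced to use $e$, and more generally any journey between a vertex in one component of $T-e$ and a vertex in the other must traverse $e$ at its unique available time. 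Removing $e_1$ splits $T$ into components $A\ni a$ and $B\ni b$; every $A$-to-$B$ journey crosses $e_1$ at time $l_1$, and every $B$-to-$A$ journey also crosses $e_1$ at time $l_1$. The hard part is to derive a contradiction from having \emph{two} such bottleneck edges: one shows that one side of $e_1$, together with the single-label edge $e_2$, forces a pair of vertices whose mutual reachability requires traversing $e_2$ in both directions and $e_1$ in a way incompatible with $l_1$ being a single fixed value --- essentially the same obstruction as in the star, where a single shared/solitary label on two incident edges kills one of the two required journeys. I would make this rigorous by a careful case analysis on the relative position of $e_1$ and $e_2$ in the tree (whether they lie on a common path, share the ``between'' region, etc.), in each case exhibiting an ordered pair $(x,y)$ for which no journey exists because the forced crossing times of $e_1$ and $e_2$ cannot be simultaneously increasing in both the $(x,y)$ and the $(y,x)$ direction. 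An alternative, possibly cleaner route is induction on $n$: pick a leaf $\ell$ with neighbor $p$; if $\ell$'s edge has $\ge 2$ labels, contract/delete it and apply the inductive hypothesis to the smaller tree (after checking the restricted labelling still satisfies TC, or accounting for the loss carefully); if it has exactly one label, then $p$ must have another incident edge which is forced to carry $\ge 2$ labels, and one peels off a small configuration. I expect the main obstacle to be exactly this bookkeeping --- ensuring that when we remove a leaf the induced labelling on the smaller tree is still TC (it is, since journeys not involving $\ell$ are unaffected) and that the ``at most one single-label edge'' budget is maintained globally rather than locally.
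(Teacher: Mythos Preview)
Your part~(b) is correct and essentially matches the paper's construction (the paper phrases the up/down labels slightly differently, using subtree maxima rather than raw depth, but the idea and the label count are identical).

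For part~(a) you sketch two routes. Your first route---assume two edges $e_1,e_2$ each carry a single label and derive a contradiction---actually works and is cleaner than you seem to expect: no case analysis on relative position is needed. In a tree, for any two edges one can pick vertices $x,y$ so that the unique $x$--$y$ path traverses both $e_1$ and $e_2$, say in that order. Since $e_i$ has a single label $l_i$, any $(x,y)$-journey crosses $e_1$ exactly once (it is a cut edge with one label) and must do so before it can reach $e_2$, forcing $l_1<l_2$; the $(y,x)$-journey symmetrically forces $l_2<l_1$. Hence at most one tree edge can carry a single label, giving $c(L)\ge 2(n-2)+1=2n-3$. This is a genuinely different and shorter argument than the paper's.

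The paper instead proves~(a) by induction on $n$, but with a different case split from yours: remove a leaf $u$ with neighbour $u'$, observe (as you do) that the restricted labelling $L'$ still satisfies TC on $G\setminus u$, and then split on whether $c(L')\ge 2k-2$ or $c(L')=2k-3$. In the latter case some edge of $G\setminus u$ has a unique label $l^\ast$, and one argues that $\{u,u'\}$ must carry both a label $<l^\ast$ and a label $>l^\ast$, hence at least two labels. Your induction splits instead on whether the leaf edge has $\ge 2$ labels or exactly one; the $\ge 2$ case is fine, but in the single-label case you only get $c(L)\ge c(L')+1\ge 2k-2$, which is one short, and ``peel off a small configuration'' does not obviously recover the missing label. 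So if you pursue the inductive route, switch to casing on $c(L')$ as the paper does; or, better, just complete your direct argument as above.
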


\begin{proof}
\begin{enumerate}[label=(\alph*)]

\item\label{item:tree_lower_bound} We prove the statement by induction on the number of vertices of the tree.
\begin{description}
\item[\textbf{Base Case.}] \noindent It is easy to see that the statement holds for any tree of $n\leq 4$ vertices.

\item[\textbf{Induction Hypothesis.}] \noindent Assume that at least $2n-3$ labels are \emph{necessary} to satisfy TC on any tree of $n\leq k$ vertices, $k\in \mathbb{N}$.

\item[\textbf{Inductive Step.}] \noindent We will show that at least $2(k+1)-3=2k-1$ labels are necessary to satisfy TC on any tree of $k+1$ vertices.

Let $G=(V,E)$ be an arbitrary tree of $k+1$ vertices and let $L$ be an arbitrary labelling of $G$ that satisfies TC on $G$. Consider a leaf, $u \in V$, of $G$ and its unique neighbour, $u' \in V$. Note that $L$ must assign at least one label to the edge $\{u,u'\}$ to ``enable'' a journey between them. Now, let $L'$ be the sub-labelling of $L$ on $G\setminus u$. First, we show that, for $L$ to satisfy TC on $G$, it must be that $L'$ satisfies TC on $G \setminus u$.

Assume, to the contrary, that $L'$ does not satisfy TC on $G \setminus u$. Then, there exist two vertices $x,x' \in V(G\setminus u)$ such that the only journey(s) from $x$ to $x'$ in $G(L)$ go through $u$; let $J$ be a $(x,x')$-journey in $G(L)$. It must be:
\begin{eqnarray*}
J &=& \big(  (x,v_1,l_0), \ldots, (v_z,u',l_z), (u',u,l_{small}), \\
  & & (u,u',l_{big}), (u',v_{z'},l_{z'}), \ldots, (v_{last},x',l_{last}) \big) ,
\end{eqnarray*}
for some $v_0, \ldots, v_{last}$ and $l_0<\ldots< l_z<l_{small}<l_{big}< l_{z'} < \ldots <l_{last}$ (cf.~Figure~\ref{fig:journey_through_u}).
\begin{figure}[!htb]
\centering
\includegraphics[width=0.6\textwidth]{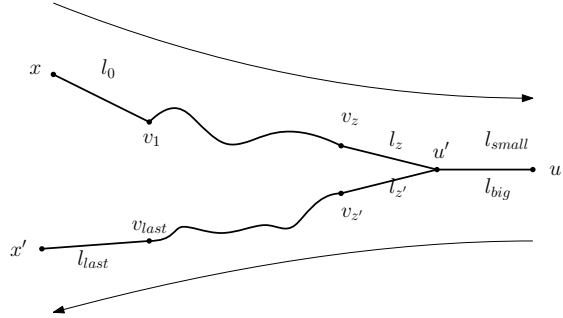}
\caption{A $(x,x')$-journey going through $u$.}
\label{fig:journey_through_u}
\end{figure}

But, then the sub-journey of $J$ which ``ignores'' the time-edges $(u',u,l_{small}),(u,u',l_{big})$ is still a $(x,x')$-journey in $G(L)$, which contradicts the fact that all $(x,x')$-journeys in $G(L)$ go through $u$. Therefore, $L'$ must satisfy TC on $G\setminus u$. Since $G\setminus u$ is a tree of $k$ vertices itself, it must be that $c(L')\geq 2k-3$ (by Induction Hypothesis).

If $c(L')\geq 2k-2$, then (since $L$ assigns at least one label to the edge $\{u,u'\}$), we have $c(L)\geq 2k-2 +1 = 2k-1$ and the Theorem holds.

It remains to check the case where $c(L')=2k-3$ and $L'$ satisfies TC on $G\setminus u$. $L'$ must assign at least one label to every edge of $G\setminus u$ to satisfy TC on it. Also, it must assign exactly one label to at least one edge $\{x,x'\} \in E(G\setminus u)$; if all edges of $G\setminus u$ had at least two labels under $L'$, then it would be $c(L') \geq 2(k-1)=2k-2$. Let $l_{unique}$ be the unique label of the edge $\{x,x'\}$. Also, without loss of generality, assume that $x$ is furthest from $u$ than $x'$ is, i.e., the unique path from $u$ to $x$ goes through $x'$. For $L$ to enable a $(u,x)$-journey in $G(L)$, it must assign to the edge $\{u,u'\}$ (at least) one label $l$ that is strictly smaller than $l_{unique}$. Also, to enable a $(x,u)$-journey, $L$ must assign to the edge $\{u,u'\}$ (at least) one label $l'$ that is strictly greater than $l_{unique}$ and, thus, different from label $l$ (cf~Figure~\ref{fig:journey_from_u_to_x}). So, $L$ assigns to $\{u,u'\}$ at least two labels, which makes the cost of $L$:
\[c(L) \geq c(L') +2 = 2k-3+2 = 2k-1\]

\begin{figure}[!htb]
\centering
\includegraphics[width=0.6\textwidth]{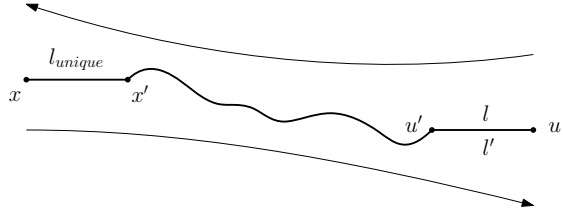}
\caption{$L$ must assign to $\{u,u'\}$ at least $2$ labels.}
\label{fig:journey_from_u_to_x}
\end{figure}

Therefore, in any case, for $L$ to satisfy TC on $G$, it needs to have cost $c(L) \geq 2k -1$.
\end{description}

%
%
%
%

\item\label{item:labelling} Consider a fixed, but arbitrary, spanning tree $T$ of $G$ and let $w$ be the root of $T$. Let $r$ be the length of the longest path from $w$ to any leaf of $T$, i.e., $r$ is the radius of $T$. We assign labels to the edges of $T$ as follows:
\begin{description}
\item[\textbf{Going upwards.}] Any edge incident to a leaf of $T$ gets label $1$. Any edge $e=\{u,v\}$, with $d(w,v) = d(w,u)+1$, where the subtree $T'$ rooted at $v$ has been labelled going upwards to the root, gets a label $l_e=max\{ \text{all labels in } T' \}+1$~(cf.~Figure~\ref{fig:subtree}).
\begin{figure}[!htb]
\centering
\includegraphics[width=0.3\textwidth]{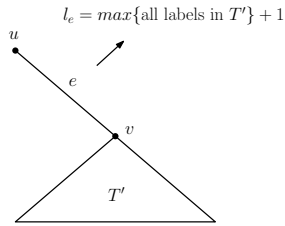}
\caption{Labelling ``going upwards'' to the root}
\label{fig:subtree}
\end{figure}
\item[\textbf{Going downwards.}] Any edge incident to the root gets a label $r+1$. Any edge $e$ in a path from the root to a leaf, the \emph{parent edge}\footnote{The edge before it in the sequence of edges from the root to the respective leaf.} of which has been labelled, going downwards, with label $l'$, gets a label $l_e = l'+1$.
\end{description}
We can easily implement the above process by topologically ordering the vertices of $T$ in levels using \emph{Breadth First Search} and implement the ``going upwards'' and ``going downwards'' procedures accordingly. The above method results in a labelling where:
\begin{enumerate}[label=\arabic*., ref=\arabic*]
\item each edge of $T$ has $2$ labels,
\item each edge of $E\setminus T$ has no label and
\item\label{item:c} for each ordered pair of vertices $u,v\in V,~u\not=v$, there is a $(u,v)$-journey.
\end{enumerate}
To show \ref{item:c}, just notice that one can go from any vertex $u\in V$ to any other vertex $v \in V$ by going up in $T$ from $u$ to $w$ and then going down in $T$ from $w$ to $v$ via strictly increasing labels, by construction.
\end{enumerate}
\end{proof}

\begin{example*} Figure \ref{fig:example} shows an example of the procedure described above. Notice the existence of journeys from any vertex to every other vertex in the resulting temporal graph.
\begin{figure}[!htb]
\centering
\includegraphics[width=0.7\textwidth]{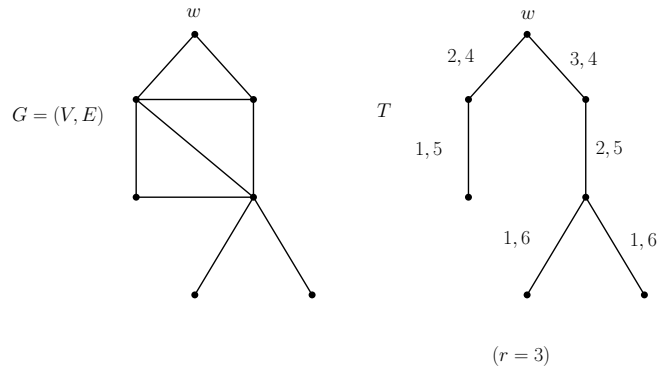}
\caption{Labelling a connected undirected graph to satisfy TC}
\label{fig:example}
\end{figure}
\end{example*}

\begin{conjecture*}
We conjecture that for any connected undirected graph $G$ of $n$ vertices and for any labelling $L$ that satisfies the property TC on $G$, the cost of $L$ is $c(L)\geq 2n-4$.
\end{conjecture*}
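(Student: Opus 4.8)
The plan is to prove the bound by induction on $n$, following the scheme of the tree lower bound in Theorem~\ref{thm:thmdesign}(a) but organised around the block structure of $G$. The base cases $n\le 3$ are trivial, since $2n-4\le 2$ and any connected graph on three vertices already needs at least two labels. For the inductive step I would distinguish two cases according to whether $G$ has a cut vertex, the second case being the one I expect to be genuinely hard.

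\emph{Cut-vertex case.} Write $G=G_1\cup G_2$ with $V(G_1)\cap V(G_2)=\{c\}$, $n_i:=|V(G_i)|\ge 2$ and $n_1+n_2=n+1$; then $E(G)=E(G_1)\sqcup E(G_2)$, so $c(L)=c(L|_{G_1})+c(L|_{G_2})$. Exactly as in the tree proof, a journey between two vertices of the same $G_i$ that leaves $G_i$ must both enter and leave $c$, and that detour can be shortcut, so $L|_{G_i}$ satisfies TC on $G_i$. Now let $A_i$ be the smallest time by which every vertex of $G_i$ has a journey to $c$ inside $G_i$ (so $G_i$ contains a foremost in-tree to $c$ with $n_i-1$ edges, all carrying labels $\le A_i$), and let $B_i$ be the largest $t$ such that, inside $G_i$, $c$ reaches every vertex by a journey departing strictly after $t$ (so, restricting $G_i$ to labels exceeding $B_i-1$, $c$ still reaches everything and hence $G_i$ contains an out-tree from $c$ with $n_i-1$ edges all of whose labels are $\ge B_i$). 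Temporal connectivity of $G$ forces, for all $w_1\in V(G_1)\setminus c$ and $w_2\in V(G_2)\setminus c$, a journey $w_1\to c\to w_2$, whence $A_1<B_2$; symmetrically $A_2<B_1$. If $A_1<B_1$, the in-tree and the out-tree of $c$ inside $G_1$ use pairwise distinct (edge, label) pairs, so $c(L|_{G_1})\ge 2(n_1-1)$ and $c(L)\ge 2(n_1-1)+(2n_2-4)=2n-4$ by the induction hypothesis on $G_2$. Otherwise $A_2<B_1\le A_1<B_2$, hence $A_2<B_2$ and symmetrically $c(L|_{G_2})\ge 2(n_2-1)$, again giving $c(L)\ge 2n-4$. (In particular this subsumes the leaf reduction used in the tree case, and for a pendant vertex joined by a single label it even yields $c(L)\ge 2n-3$.)

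\emph{$2$-connected case --- the main obstacle.} Here one cannot delete a vertex and keep TC on the restriction (already $C_4$ minus a vertex is a $P_3$ whose induced labelling need not be TC), so the induction breaks down, and this is presumably why the statement is only conjectured. The approach I would try is a direct counting argument: fix any root $r$, let $T_{\mathrm{out}}$ be a foremost out-tree from $r$ (as produced by Algorithm~\ref{alg:foremost}) and let $T_{\mathrm{in}}$ be a ``latest-departure'' in-tree to $r$ (Algorithm~\ref{alg:foremost} run on the time-reversed graph). Each is a spanning tree contributing $n-1$ distinct (edge, label) pairs to $L$, so $c(L)\ge 2(n-1)-S$, where $S$ counts the edges used by both $T_{\mathrm{out}}$ and $T_{\mathrm{in}}$ with the \emph{same} label; it then suffices to show $S\le 2$. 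For $C_4$ this holds: if three edges were shared with identical labels the two trees would be the same path $P_4$ through $r$ with the same labelling, contradicting the fact that labels must increase \emph{away from} $r$ along $T_{\mathrm{out}}$ but \emph{towards} $r$ along $T_{\mathrm{in}}$. The hard part will be to turn this monotonicity clash into a general bound: one wants to use $2$-connectivity (two internally vertex-disjoint paths between any pair) to control how a shared sub-forest can lie simultaneously ``downhill from $r$'' in $T_{\mathrm{out}}$ and ``uphill to $r$'' in $T_{\mathrm{in}}$, and to choose $r$ and the two trees so that $S\le 2$ always --- or else to rule out $c(L)=2n-5$ by some other means. It is in fact plausible that $2$-connected graphs with $n\ge 5$ need $\ge 2n-3$ labels, so that $C_4$, $K_4$ and the diamond would be the only tight examples; settling either form of the statement seems to require an idea beyond the tree-based induction.
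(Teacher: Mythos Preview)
This statement is presented in the paper purely as an open \emph{conjecture}; the authors give no proof and no partial argument beyond the tree case of Theorem~\ref{thm:thmdesign}\ref{item:tree_lower_bound}, so there is nothing on the paper's side to compare your proposal against. You have correctly recognised this, and what you have written is a research sketch rather than a completed proof.

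Your cut-vertex reduction is essentially sound and already goes further than anything the paper attempts. One small correction: from TC of $G$ you only obtain $A_1\le B_2$ and $A_2\le B_1$, not strict inequalities (take the $w_1$ realising $A_1$, compose its foremost journey to $c$ with the required $(c,w_2)$-journey, and observe that the latter departs at time $>A_1$, which by the definition of $B_2$ yields $B_2\ge A_1$). This is harmless, because the out-tree you build in $G_i$ actually uses labels $\ge B_i+1$ (the defining condition holds at $t=B_i$, so $c$ reaches everything departing strictly after $B_i$), while the in-tree uses labels $\le A_i$; hence disjointness of the two (edge,label) sets already follows from $A_i\le B_i$. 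Your dichotomy should therefore read $A_1\le B_1$ versus $A_1>B_1$; in the second case $A_2\le B_1<A_1\le B_2$ gives $A_2<B_2$ and the argument closes exactly as you wrote.

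The $2$-connected case is, as you say, the genuine obstacle, and your foremost out-/in-tree overlap bound $S\le 2$ is a natural first idea but not yet an argument: nothing currently prevents a long path inside both trees from carrying coincident labels at several edges, and the ``monotonicity clash'' you describe only rules out certain configurations, not all of them. Since the paper offers no approach here either, this part of the conjecture remains open on both sides.
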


\section{Minimal Temporal Designs}\label{sec:minimal}
Suppose now that a temporal graph on a (strongly) connected (di)graph $G=(V,E)$ is \emph{given} to a designer with the claim that it satisfies TC. In this scenario, the designer is allowed to only use the given set of edge availabilities, or a subset of them. If the given design is not minimal, they may wish to remove as many labels as possible, thus reducing the cost. Minimality of a design can be verified by running Algorithm \ref{alg:foremost} (cf.~Section \ref{sec:foremost}) for every $s\in V$.


\subsection{A partial characterisation of minimal temporal graphs}\label{sec:full_paper_hyper}
As mentioned earlier, if many edges have the same label, we can encounter \emph{trivial cases} of minimal temporal graphs. To avoid such cases, we focus our attention here to the class of SLSE temporal graphs, in which every edge only becomes available at one moment in time and no two different edges become available at the same time. Are there minimal SLSE temporal graphs with non linear (in the size of the graph) cost? For example, any complete SLSE temporal graph satisfies TC. Are all these $\Theta(n^2)$ labels needed for TC, i.e., are there minimal temporal complete graphs? As we prove in Theorem \ref{thm:clique_code}, the answer is negative. However, we give below a minimal temporal graph on $n$ vertices with non-linear in $n$ cost, namely with $O(n\log{n})$ labels.

\subsubsection{A minimal temporal design of \texorpdfstring{$n \log{n}$}{nlogn} cost}\label{sec:nlogn_minimal}

\begin{definition}[Hypercube graph]
The $k$-hypercube graph, commonly denoted $Q_k$, is a $k$-regular graph of $2^k$ vertices and $2^{k-1} \cdot k$ edges. The $1$-hypercube is the graph of two vertices and one edge. Recursively, the $n$-hypercube is produced by taking two isomorphic copies of the $(n-1)$-hypercube and adding edges between the corresponding vertices.
\end{definition}

\begin{definition}[Flat]
In geometry, a \emph{flat} is a subset of the $n$-dimensional space that is congruent to a Euclidean space of lower dimension, e.g., the flats in the two-dimensional space are points and lines. In the $n$-dimensional space, there are flats of every dimension from $0$, i.e., points, to $n-1$, i.e., hyperplanes.
\end{definition}
\begin{theorem}
There exists an infinite class of minimal temporal graphs on $n$ vertices with $\Theta(n\cdot \log{n})$ edges and $\Theta(n\cdot \log{n})$ labels, such that different edges have different labels.
\end{theorem}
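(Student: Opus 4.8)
The plan is to use the hypercube $Q_k$ as the underlying graph, which on $n = 2^k$ vertices has $\Theta(n \log n)$ edges, so any SLSE-style labelling has $\Theta(n \log n)$ labels automatically. What remains is to design the labelling and prove (i) it satisfies TC and (ii) it is minimal, i.e. removing any single label destroys TC. The natural idea is to exploit the recursive structure of $Q_k$ together with the geometry of its edges: each edge of $Q_k$ is parallel to exactly one of the $k$ coordinate axes, so the edge set partitions into $k$ \emph{dimension classes}, each a perfect matching of size $n/2$. The labelling I would try assigns labels in a way that respects a fixed ordering of the dimensions — roughly, ``on the way up'' one flips coordinates $1, 2, \ldots, k$ in increasing order using small labels, and ``on the way down'' one flips them again in increasing order using large labels — so that from any vertex $u$ to any vertex $v$ one reaches $v$ by a canonical journey that corrects the differing coordinates in order. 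This is the source of the appeal to \emph{flats}: the set of vertices agreeing with a given vertex on a prefix of coordinates is a subcube (a ``flat''), and journeys move between nested flats in a controlled way.

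First I would fix the recursive/geometric labelling precisely and verify TC: show that for every ordered pair $(u,v)$ the canonical route that flips the symmetric-difference coordinates of $u \oplus v$ in increasing dimension order — first the ``upward'' edges, then possibly passing through a canonical pivot vertex, then the ``downward'' edges — has strictly increasing labels. This should be a direct consequence of the construction, analogous to the ``going upwards / going downwards'' argument of Theorem~\ref{thm:thmdesign}\ref{item:labelling}, but now over a subcube lattice rather than a tree. Next, and this is the heart of the argument, I would prove minimality: for each time edge $(a,b,l)$ I must exhibit a pair of vertices $x,x'$ whose \emph{every} journey uses $(a,b,l)$. The strategy is to pick $x$ and $x'$ so that $x \oplus x'$ is exactly the single coordinate flipped by the edge $\{a,b\}$ and, moreover, so that the labels on all ``detour'' edges — any edge that would let a journey avoid $\{a,b\}$ while still arriving at $x'$ — are incompatible (too large or too small) with completing the path. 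In other words, I would show the label $l$ sits at a unique position in the global label order that no alternative path through $Q_k$ can replicate, because any alternative route from $x$ to $x'$ must flip the same coordinate at some edge whose label, by construction, is either larger than a label already used or smaller than a label still needed.

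The main obstacle I anticipate is exactly this minimality step: in a hypercube there are many short paths between two vertices (a shortest path of length $d$ admits $d!$ orderings of coordinate flips, plus longer non-geodesic detours that flip a coordinate twice), so ruling out \emph{all} of them for a well-chosen $(x,x')$ is delicate. The key lemma I would aim to establish is a ``rigidity'' statement: in the proposed labelling, for a suitable choice of $x, x'$ depending on the target edge, the only increasing-label walk from $x$ to $x'$ is the canonical one, because the dimension-ordered label blocks force the coordinates to be corrected in a unique order and forbid any coordinate from being flipped twice (a back-and-forth flip would require a smaller label after a larger one within the same block). Detours that leave and re-enter the relevant flat are excluded for the same reason. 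Once this rigidity lemma is in hand, minimality follows immediately — every time edge lies on the unique journey for its associated pair — and the edge/label count $\Theta(n \log n) = \Theta(2^k \cdot k)$ is automatic from the definition of $Q_k$; letting $k \to \infty$ yields the claimed infinite class.
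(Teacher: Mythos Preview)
Your approach is essentially the paper's: label $Q_k$ so that all edges in dimension $i$ receive distinct labels from the $i$th consecutive block of size $2^{k-1}$, argue that the unique $(u,v)$-journey flips exactly the coordinates of $u\oplus v$ in increasing dimension order, and deduce minimality from this uniqueness (every label lies on the unique journey for some pair). Your ``rigidity lemma'' is precisely the paper's argument: a coordinate flipped twice would require re-entering a lower dimension block after having moved to a higher one, contradicting strictly increasing labels, and flipping any non-differing coordinate forces exactly such a double flip.

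One point to drop: the ``upward / pivot / downward'' two-phase language imported from the tree construction of Theorem~\ref{thm:thmdesign}. In the hypercube there is no pivot and no second pass; the canonical $(u,v)$-journey flips only the bits of $u\oplus v$, once each, in dimension order. If taken literally, a two-label-per-edge scheme (a small ``up'' label and a large ``down'' label on every edge) would \emph{not} be minimal: the direct dimension-ordered path using only the small labels already realises every $(u,v)$-journey, so all the large labels are removable. The paper's construction is the simpler one that your uniqueness argument actually fits---a single label per edge, grouped into blocks by dimension.
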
 
\begin{proof}
We present a minimal temporal graph on the hypercube graph of $n$ vertices. Consider Protocol \ref{label_protocol'} for labelling the edges of $G=Q_k=(V,E)$. The temporal graph, $G(L)$, that this labelling procedure produces on the hypercube is minimal. Indeed, first we will prove that the temporal graph produced by Protocol \ref{label_protocol'} satisfies TC on $G=Q_k$.

Consider vertices $u,v\in V$ and the steps described in Protocol \ref{journeyQn} to reach $v$, starting from $u$, via temporal edges. The procedure described in Protocol \ref{journeyQn} gives a journey from $u$ to $v$, which is also \emph{unique}. It suffices to consider the $k$-bit binary representation of the vertices of $G$. Notice that if the hamming distance of the labels of two vertices $u,v\in V(G)$ is exactly $m$, then to reach $v$ from $u$ via a temporal path in the temporal graph on $G$, we need to move through vertices by consecutively swapping the bits in which $u$ and $v$ differ in the order of dimensions. This way, we maintain the strictly increasing order of the time labels we use and, swap by swap, we approach the destination. Note also that swapping only the bits in which $u$ and $v$ differ is the only way to not violate the increasing order of time labels we use: without loss of generality, suppose that the $j^{th}$ bit of $u$ is $1$ and so is $j^{th}$ bit of $v$. If, starting from $u$, we swap the $j^{th}$ bit to $0$, i.e., we use an edge, $e$, on the $j^{th}$ dimension, then at a future step, we again need to swap the $j^{th}$ bit back to $1$ (otherwise, we never reach $v$). However, the two swaps cannot be consecutive, because then we would use edge $e$ twice and we violate the increasing order of labels. So, we would need to move to a higher dimension after the first of the two swaps; but, then, we have used labels that are larger than all the labels of the $j^{th}$ dimension, so using any edge of the $j^{th}$ dimension would also violate the increasing order of labels.
\begin{algorithm}[ht]
\SetAlgorithmName{Protocol}{protocol}{List of protocols}
\caption{Labelling the hypercube graph, $G=Q_k$}
\label{label_protocol'}
\SetAlgoLined

	Consider the $k$ dimensions of the hypercube $G=Q_k$, $x_1,x_2,\ldots,x_k$\;
	\For{$i= 1 \ldots k$}{
			Let $X_i:=\{e_{i1},e_{i2},\ldots,e_{i2^{k-1}}\}$ be the list of edges in dimension $x_i$, in an arbitrary order\;
			Let $L_i$ be the (sorted from smallest to largest) list of labels $L_i:=\{ (i-1) \cdot 2^{k-1} +1, (i-1) \cdot 2^{k-1} +2, \ldots, i \cdot 2^{k-1} \}$ \;
	}	
	\For{$i= 1 \ldots k$}{
		\For{$j= 1 \ldots 2^{k-1}$}{
				Assign the (current) first label of $L_i$ to the (current) first edge of $X_i$ \;
				Remove the (current) first label of $L_i$ from the list\;
				Remove the (current) first edge of $X_i$ from the list\;
		}
	}
	\Return{the produced temporal graph, $G(L)$;}
\end{algorithm}

\begin{algorithm}[ht]
\SetAlgorithmName{Protocol}{protocol}{List of protocols}
\caption{A temporal path from $u$ to $v$ in the temporal graph on $G=Q_k$}
\label{journeyQn}
\SetAlgoLined

	\KwIn{The considered temporal graph on the hypercube $G=Q_k$, vertices $u,v \in V(G)$}
	\KwOut{Array $x$ of vertices, which the $(u,v)$-journey passes through}
	$x[0]:=u$\;
	Find the flat of the smallest dimension, $m$, which both $u$ and $v$ lie on\;
	Consider the increasing order of the $m$ dimensions in that flat: $d[1], d[2],\ldots, d[m]$\;
	\For{$i=1 \ldots m$}{
			Use the incident edge of $x[i-1]$ that lies on dimension $d[i]$ and let $x[i]$ be the other endpoint of that edge\;
	}
\end{algorithm}

Since our labelling gives a \emph{unique} $(u,v)$-journey, for every $u,v\in V$, and since all labels assigned to the edges of $E$ are used in the union of all those journeys, the deletion of any single label will violate TC. Therefore, $G(L)$ is minimal. Finally, note that the temporal graph $G(L)$ on the hypercube graph $G=Q_k$ has $n=2^k$ vertices, $\frac{1}{2}n\cdot \log{n}$ edges and $\frac{1}{2}n\cdot \log{n}$ labels.
\end{proof}

\subsubsection{A minimal temporal design of linear in \texorpdfstring{$n$}{n} cost}
In the previous section, we showed that there are graphs of non-linear cost (in the number of vertices) that are minimal. Here, we show that there are classes of minimal graphs whose cost is linear in the number of their vertices.

Indeed, as seen in Lemma~\ref{thm:star} (Section~\ref{sec:cost_opt_design}), the star graph of $n$ vertices needs at least $\Theta(n)$ labels to satisfy TC and, in fact, we present there a TC satisfying labelling of $\Theta(n)$ labels (cf.~Figure~\ref{fig:star}). Theorem~\ref{thm:thmdesign}\ref{item:labelling} (Section~\ref{sec:cost_opt_design}) also gives a class of minimal temporal graphs of linear cost in the number of vertices. Therefore, we have the following Corollary:

\begin{corollary}
There exists an infinite class of minimal temporal graphs on $n$ vertices with $\Theta(n)$ edges and $\Theta(n)$ labels.
\end{corollary}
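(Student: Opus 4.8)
The plan is to produce, for every integer $n\geq 4$, one concrete minimal temporal graph on $n$ vertices with $\Theta(n)$ edges and $\Theta(n)$ labels, and then let $n$ vary to obtain the claimed infinite class. I would take the underlying graph to be the star $S_n$ on $n$ vertices (so it has $n-1=\Theta(n)$ edges) together with exactly the labelling exhibited in the proof of Lemma~\ref{thm:star}: one distinguished leaf-edge carries the single label $2$, and each of the remaining $n-2$ leaf-edges carries the label set $\{1,3\}$. Lemma~\ref{thm:star} already tells us that this labelling satisfies TC and has cost $2n-3=\Theta(n)$, so the only thing that remains is to verify minimality, i.e., that removing any single label destroys TC.

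The key step is the observation that minimality is essentially free here, because this labelling is cost-optimal: Lemma~\ref{thm:star} also provides the matching lower bound, namely that every TC-satisfying labelling of $S_n$ has cost at least $2n-3$. Hence deleting any one label would yield a labelling of cost $2n-4<2n-3$, which by that lower bound cannot satisfy TC; so no label can be removed without violating TC, which is precisely the definition of a minimal temporal graph. If one prefers a self-contained argument that does not invoke the lower bound, I would instead run the short three-case check: removing the unique label of the distinguished edge isolates its leaf; removing the label $1$ from a $\{1,3\}$-edge leaves only label $3$ on it, so that leaf can no longer journey to the distinguished leaf, since the final hop would require a label strictly greater than $3$ on an edge carrying only the label $2$; and removing the label $3$ from a $\{1,3\}$-edge leaves only label $1$ on it, so no other leaf can journey into it, as the final hop would require a label strictly greater than $1$. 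Either way $S_n(L)$ is minimal, and ranging over all $n\geq 4$ proves the corollary.

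I do not anticipate a genuine obstacle; the one point to be careful about is to notice that minimality drops out of the matching cost bound of Lemma~\ref{thm:star} rather than attempting to reprove it by brute force. One could equally base the class on the labelling of Theorem~\ref{thm:thmdesign}\ref{item:labelling} restricted to the star (which again yields a minimal temporal graph, of cost $2n-2$), but one should resist asserting that that construction is minimal on an \emph{arbitrary} tree, since on a path it can leave a removable label.
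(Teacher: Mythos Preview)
Your proposal is correct and follows essentially the same approach as the paper: the paper also invokes the star labelling of Lemma~\ref{thm:star} (cost $2n-3$, matching the lower bound there, hence minimal) and mentions Theorem~\ref{thm:thmdesign}\ref{item:labelling} as a second source of linear-cost minimal examples. Your write-up is more explicit than the paper's one-sentence pointer, and your caution about not claiming minimality of the Theorem~\ref{thm:thmdesign}\ref{item:labelling} construction on an arbitrary tree is well taken.
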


\subsubsection{SLSE Cliques of at least 4 vertices are not minimal}\label{sec:clique_minimal}

The complete graph on $n$ vertices, $K_n$, with an SLSE labelling $L$, i.e., a labelling that assigns a single label per edge, different labels to different edges, is an interesting case, since $K_n(L)$ always satisfies TC. However, it is not minimal as the theorem below shows.

\begin{theorem}\label{thm:clique_code}
Let $n\in \mathbb{N},~n\geq 4$ and denote by $K_n$ the complete graph on $n$ vertices. There exists \emph{no minimal} SLSE temporal graph on $K_n(L)$. In fact, we can remove (at least) $\lfloor \frac{n}{4} \rfloor$ labels from any SLSE labelling on $K_n(L)$ without violating TC.
\end{theorem}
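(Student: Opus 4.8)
We want to show that any SLSE labelling $L$ of $K_n$ (with $n \geq 4$) has at least $\lfloor n/4 \rfloor$ redundant labels, i.e., $r(K_n, L) \geq \lfloor n/4 \rfloor$.

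The key observation: in $K_n(L)$, any pair of vertices $u, v$ has a direct time edge $(u,v,\ell)$ realizing the $(u,v)$-journey, so TC is satisfied "for free" by all $\binom{n}{2}$ labels. What I want to do is find a small set $F$ of edges whose labels can be deleted, such that for every deleted edge $\{u,v\}$ there is still a $(u,v)$-journey and a $(v,u)$-journey using only the *remaining* labels, and moreover deleting all of $F$ simultaneously doesn't break anything.

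So let me think about when I can safely remove the label of edge $\{u,v\}$, say $L_{\{u,v\}} = \{\ell\}$. It suffices to find a common neighbour $w$ (i.e. a vertex $w \neq u,v$) such that $L_{\{u,w\}} < L_{\{w,v\}}$ — then $(u,w,L_{\{u,w\}}), (w, v, L_{\{w,v\}})$ is a $(u,v)$-journey of length $2$, and $(v, w, L_{\{v,w\}}), (w, u, L_{\{w,u\}})$ is (using the undirected edges) a $(v,u)$-journey, since the same pair of labels works in the reverse traversal. Crucially, this "detour" through $w$ uses only edges other than $\{u,v\}$. If I pick a matching-like structure of edges to delete, each equipped with its own private witness vertex $w$ not used by any other deleted edge, then all removals are simultaneously valid, because no deleted edge is ever used as part of another deleted pair's detour.

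Concretely, the plan is: group the $n$ vertices into $\lfloor n/4 \rfloor$ disjoint groups of $4$ (ignoring leftovers), and in each $4$-vertex group $\{a,b,c,d\}$ use two of the vertices as "witnesses" for removing the label of an edge between the other two — but actually I should be more careful: within a group of $4$ vertices $\{a,b,c,d\}$, I claim I can always remove at least one label. Consider the edge with the *smallest* label among the $\binom{4}{2}=6$ edges inside the group, say it is $\{a,b\}$ with label $\ell_{ab}$. Since $\ell_{ab}$ is the minimum among these $6$ edges, for either witness $w \in \{c,d\}$ we have $\ell_{ab} < \ell_{aw}$, but that's not quite what I need — I need $\ell_{aw} < \ell_{wb}$ for some $w$. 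Hmm. Let me instead pick the edge $\{a,b\}$ with the *largest* label $\ell_{ab}$ in the group; then for $w \in \{c,d\}$, both $\ell_{aw} < \ell_{ab}$ and $\ell_{wb} < \ell_{ab}$, still not directly a monotone path. The right pigeonhole is: among the two candidate witnesses $c, d$, and looking at whether $\ell_{ac} < \ell_{cb}$ or $\ell_{ac} > \ell_{cb}$ (and similarly for $d$), one should be able to route either $\{a,b\}$ or swap to routing a different edge of the group; the main obstacle is verifying that *some* edge in every $4$-clique admits a length-$2$ monotone detour within that $4$-clique. I expect this to be a short case analysis on the relative order of the $6$ labels, using that with $4$ vertices there are enough "middle" vertices. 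Since the groups are vertex-disjoint, a detour for a deleted edge inside group $G_i$ uses only edges inside $G_i$, none of which is the deleted edge of another group $G_j$, so all $\lfloor n/4 \rfloor$ removals are simultaneously safe and TC is preserved; hence $r(K_n,L) \geq \lfloor n/4 \rfloor$.

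The main obstacle, then, is the local claim: in any SLSE-labelled $K_4$, there exist two vertices $u, v$ and a third vertex $w$ with $\ell_{uw} < \ell_{wv}$ and such that after deleting $\ell_{uv}$, TC still holds on the $K_4$ (which reduces to: $u\to v$ and $v\to u$ journeys survive, and removing one label from $K_4$ cannot disconnect any *other* pair since every other pair keeps its direct edge). I would prove the local claim by taking the globally second-smallest labelled edge, or by a direct argument: order the four vertices so that the edge of minimum label in the $K_4$ is $\{a,b\}$; then for the edge $\{a,c\}$, look at $\ell_{ab}$ versus $\ell_{bc}$ — actually the cleanest route is to observe that the three edges incident to any fixed vertex $w$ split the other three vertices, and a simple counting argument shows one of the $\binom{4}{2}$ edges is "bypassable." I will carry this out by explicit case distinction on the rank order of the $6$ labels, which is finite and routine.
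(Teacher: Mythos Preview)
Your overall strategy matches the paper's: partition the vertices into $\lfloor n/4 \rfloor$ disjoint groups of $4$, show that in each such $K_4$ at least one label can be removed while preserving TC on that $K_4$, and observe that the removals are simultaneously safe because every other pair keeps its direct edge and each detour stays inside its own group.

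However, there is a genuine error early on. You claim that if $L_{\{u,w\}} < L_{\{w,v\}}$ then ``$(v, w, L_{\{v,w\}}), (w, u, L_{\{w,u\}})$ is a $(v,u)$-journey, since the same pair of labels works in the reverse traversal.'' This is false: the reverse traversal uses the labels in the order $L_{\{w,v\}}, L_{\{u,w\}}$, which is \emph{decreasing}, hence not a journey. A single two-edge path through one witness $w$ cannot serve both directions. You later correctly restate that both a $(u,v)$-journey and a $(v,u)$-journey must survive, but your heuristic attempts (remove the minimum-label edge, remove the maximum-label edge) are aimed at producing only one direction, and neither works even for that. The paper handles the local $K_4$ claim by an honest case analysis on the relative order of the six labels (three main configurations, one of which splits into five subcases); in several subcases the removable label is neither the minimum nor the maximum, and the required detours have length~$3$ rather than~$2$. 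So the local claim, while finite, is the real content of the proof and is not as routine as your sketch suggests.
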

\begin{proof}
The proof is divided in two parts, as follows:
\begin{enumerate}[label=(\alph*)]

\item\label{item:code_a} We first show that any SLSE labelling on the complete graph on $4$ vertices produces a temporal graph that is not minimal, i.e., the theorem holds for $K_4$. Consider the six different labels $a,b,c,d,x,y$ assigned by an SLSE labelling to the edges of $K_4$ as shown in Figure~\ref{fig:K_4}.

\begin{figure}[tbh!]
\centering
\includegraphics[scale=0.7]{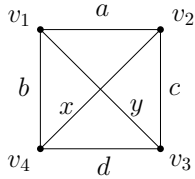}
\caption{Any SLSE labelling on $K_4$ is not minimal.}
\label{fig:K_4}
\end{figure}

Up to their renaming, there are three possible cases for the labels $a,b,c,d$. Counting all the cases of \emph{alternation}, \emph{cycle}, and \emph{entanglement} (see below) would give us all possible $4!=24$ cases.

\begin{enumerate}[label=\arabic*.]
\item (Alternation) $a<b>d<c>a$.\\It is easy to see that in this case, both diagonals can be removed: $v_1$ can reach $v_3$ using labels $a$ and then $c$; $v_3$ can reach $v_1$ using labels $d$ and then $b$; $v_2$ can reach $v_4$ using labels $a$ and then $b$; $v_4$ can reach $v_2$ using labels $d$ and then $c$.
\item (Cycle) $a<b<d<c$.\\Here, diagonal $x$ can be removed: $v_2$ can reach $v_4$ using labels $a$ and then $b$; $v_4$ can reach $v_2$ using labels $d$ and then $c$.
\item (Entanglement) $a<b<c<d$.\\This is a more complex case, for which we distinguish the following five sub-cases:
	\begin{enumerate}[label=\roman*)]
	\item $x<b$ and $y<c$.\\We can remove label $a$: $v_1$ can reach $v_2$ using labels $y$ and then $c$; $v_2$ can reach $v_1$ using labels $x$ and then $b$.
	\item $x<b$ and $y>c$.\\We can remove label $b$: $v_1$ can reach $v_4$ using labels $a$, then $c$ and then $d$; $v_4$ can reach $v_1$ using labels $x$, then $c$ and then $y$ (notice that $x<b<c<y$).
	\item $x>b$ and $y>c$.\\We can remove label $a$: $v_1$ can reach $v_2$ using labels $b$ and then $x$; $v_2$ can reach $v_1$ using labels $c$ and then $y$.
	\item $x>b$ and $b<y<c$.\\We can remove label $x$: $v_2$ can reach $v_4$ using labels $a$ and then $b$; $v_4$ can reach $v_2$ using labels $b$, then $y$ and then $c$.
	\item $x>b$ and $y<b$.\\We can remove label $c$: $v_2$ can reach $v_3$ using labels $a$, then $b$ and then $d$; $v_3$ can reach $v_2$ using labels $y$, then $b$ and then $x$.
	\end{enumerate}
	Notice that the coverage of the above five cases is complete (cf.~Figure~\ref{fig:coverage}).
	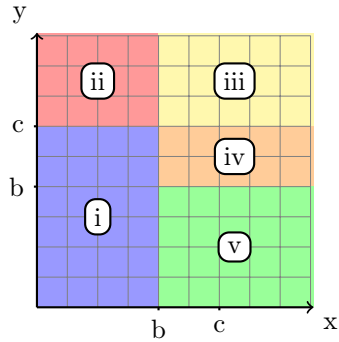
\begin{figure}[tbh!]
	\centering

	\begin{tikzpicture}[scale=0.4]
	
	\fill[blue!40!white] (0,0) rectangle (4,6);	
	\fill[red!40!white] (0,6) rectangle (4,9.1);
	\fill[yellow!40!white] (4,6) rectangle (9.1,9.1);
	\fill[orange!40!white] (4,4) rectangle (9.1,6);
	\fill[green!40!white] (4,0) rectangle (9.1,4);

	
	\draw[step=1cm,gray,very thin] (0,0) grid (9,9);
	\draw[thick,->] (0,0) -- (9.1,0) node[anchor=north west] {x};
	\draw[thick,->] (0,0) -- (0,9.1) node[anchor=south east] {y};
	
	\node[rectangle, rounded corners, draw, thick, fill=white] (i) at (2,3) {i};
	\node[rectangle, rounded corners, draw, thick, fill=white] (iii) at (2,7.5) {ii};
	\node[rectangle, rounded corners, draw, thick, fill=white] (ii) at (6.5,7.5) {iii};
	\node[rectangle, rounded corners, draw, thick, fill=white] (v) at (6.5,5) {iv};
	\node[rectangle, rounded corners, draw, thick, fill=white] (iv) at (6.5,2) {v};
	
	\draw[thick] (4,0) -- (4,-0.1) node[anchor=north] {b};
	\draw[thick] (6,0) -- (6,-0.1) node[anchor=north] {c};
	\draw[thick] (0,4) -- (-0.1,4) node[anchor=east] {b};
	\draw[thick] (0,6) -- (-0.1,6) node[anchor=east] {c};
	
	\end{tikzpicture}

	\caption{The six sub-cases cover all possible scenarios of ``entanglement''.}
	\label{fig:coverage}
	\end{figure}
\end{enumerate}

\item Now, consider the complete graph on $n\geq 4$ vertices, $K_n=(V,E)$. Partition $V$ arbitrarily into $\lceil \frac{n}{4} \rceil$ subsets $V_1, V_2, \ldots, V_{\lceil \frac{n}{4} \rceil}$, such that $|V_i|=4, \forall i=1,2, \ldots, \lceil \frac{n}{4} \rceil -1 $ and $|V_{\lceil \frac{n}{4} \rceil}| \leq 4$. In each $4$-clique defined by $V_i,~i=1,2,\ldots, \lfloor \frac{n}{4} \rfloor$, we can remove a ``redundant'' label, as shown in~\ref{item:code_a}. The resulting temporal graph on $K_n$ still preserves TC since for every ordered pair of vertices $u,v \in V$:
\begin{itemize}
\item if $u,v$ are in the same $V_i$, $i=1,2,\ldots, \lfloor \frac{n}{4} \rfloor$, then there is a $(u,v)$-journey that uses time edges within the 4-clique on $V_i$, as proven in \eqref{item:code_a}.
\item if $u\in V_i$ and $v\in V_j,~i\not = j$, then there is a $(u,v)$-journey that uses the (direct) time edge on $\{u,v\}$.
\end{itemize}
\end{enumerate}
\end{proof}

\subsection{Computing the removal profit is APX-hard}

Note that it is straightforward to check in polynomial time whether a given $L$ satisfies TC on a given (di)graph $G$, by just checking for every possible
(ordered) pair $(u,v)$ of vertices in $G$ whether there is a $(u,v)$-journey in $G(L)$. Recall that the removal profit is the largest number of labels that can be removed from a temporally connected graph without destroying TC. We now show that it is hard to approximate the value of the removal profit arbitrarily well for an arbitrary
graph, i.e.,~there exists no PTAS\footnote{PTAS stands for Polynomial-Time Approximation Scheme.} for this problem, unless P=NP. It is worth noting here that, in our hardness proof below, we consider \emph{undirected} graphs; the fact that all $(u,v)$-journeys, $u\not=v$ exist in any given (unlabelled) connected undirected graph makes the reduction and the analysis much more involved.

We prove our hardness result by providing an approximation preserving
polynomial reduction from a variant of the maximum satisfiability problem,
namely from the \emph{monotone Max-XOR($3$)} problem. Consider a monotone
XOR-boolean formula $\phi $ with variables $x_{1},x_{2},\ldots ,x_{n}$,
i.e.,~a boolean formula that is the conjunction of XOR-clauses of the form $%
(x_{i}\oplus x_{j})$, where no variable is negated. The clause $\alpha
=(x_{i}\oplus x_{j})$ is XOR-satisfied by a truth assignment $\tau $ if and
only if $x_{i}\neq x_{j}$ in $\tau $. The number of clauses of $\phi $ that
are XOR-satisfied in $\tau $ is denoted by $|\tau (\phi )|$. If every
variable $x_{i}$ appears in exactly $r$ XOR-clauses in $\phi $, then $\phi $
is called a \emph{monotone XOR(}$r$\emph{)} formula. The \emph{monotone
Max-XOR(}$r$\emph{)} problem is, given a monotone XOR($r$) formula $\phi $,
to compute a truth assignment $\tau $ of the variables $x_{1},x_{2},\ldots
,x_{n}$ that XOR-satisfies the largest possible number of clauses, i.e.,~an
assignment $\tau $ such that $|\tau (\phi )|$ is maximized. The monotone
Max-XOR($3$) problem essentially encodes the \emph{Max-Cut} problem on $3$%
-regular (i.e.,~cubic) graphs, which is known to be APX-hard \cite{Alimonti}.

\begin{lemma}\hspace{-0,01cm}\protect\cite{Alimonti}
\label{Max_XOR-3-hard-lem}The monotone Max-XOR($3$) problem is APX-hard.
\end{lemma}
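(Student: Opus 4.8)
The plan is to observe that monotone Max-XOR($3$) is, up to a trivial renaming of the objects involved, the same problem as Max-Cut on cubic graphs, and then to invoke the known APX-hardness of the latter from \cite{Alimonti}. Concretely, I would give an approximation-preserving reduction from Max-Cut on $3$-regular simple graphs to monotone Max-XOR($3$).

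First, given a cubic simple graph $H=(V_{H},E_{H})$, I would build a monotone XOR-formula $\phi_{H}$ by introducing one boolean variable $x_{v}$ for each vertex $v\in V_{H}$ and one clause $(x_{u}\oplus x_{v})$ for each edge $\{u,v\}\in E_{H}$. Since $H$ is $3$-regular, every variable $x_{v}$ occurs in exactly three clauses, so $\phi_{H}$ is a legitimate monotone XOR($3$) formula and is produced in linear time. The key correspondence to verify is that a truth assignment $\tau$ of the variables is exactly a bipartition $(S,V_{H}\setminus S)$ of $V_{H}$, with $S=\{v:\tau(x_{v})=\text{true}\}$: a clause $(x_{u}\oplus x_{v})$ is XOR-satisfied by $\tau$ if and only if exactly one of $u,v$ lies in $S$, i.e.,~if and only if the edge $\{u,v\}$ crosses the cut. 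Hence $|\tau(\phi_{H})|$ equals the size of the cut induced by $S$, the optima coincide ($\mathrm{OPT}(\phi_{H})=\mathrm{MAXCUT}(H)$), and any truth assignment maps back to a cut of exactly the same value. This makes the reduction an L-reduction with both constants equal to $1$, which suffices to transfer APX-hardness.

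Since Max-Cut on cubic graphs is APX-hard \cite{Alimonti} and L-reductions preserve APX-hardness, monotone Max-XOR($3$) is APX-hard, as claimed. The only points needing a little care are formal rather than mathematical: I would make sure the source instances are \emph{simple} graphs, so that $\phi_{H}$ contains no degenerate clause $(x_{v}\oplus x_{v})$ and no variable appears more than three times; and I would note explicitly that monotone Max-XOR($3$) lies in NPO (its feasible solutions are all truth assignments and its objective is poly-time computable), so that its APX-hardness is well posed. I do not anticipate a genuine obstacle: the entire content of the statement is the cited APX-hardness of cubic Max-Cut, and the reduction above is essentially the identity map between two equivalent descriptions of one optimisation problem.
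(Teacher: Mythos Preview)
Your proposal is correct and matches the paper's approach: the paper does not give a formal proof of this lemma at all but simply cites \cite{Alimonti}, preceded by the one-line remark that ``monotone Max-XOR($3$) essentially encodes the Max-Cut problem on $3$-regular (i.e.,~cubic) graphs.'' Your write-up makes that encoding explicit as an L-reduction with constants $1$, which is exactly what the paper's remark is pointing to.
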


Now we provide our reduction from the monotone Max-XOR($3$) problem to the
problem of computing $r(G,L )$. Let $\phi $ be an arbitrary monotone XOR($3$) formula 
with $n$ variables $x_{1},x_{2},\ldots,x_{n}$ and $m$ clauses. 
Since every variable $x_{i}$ appears in $\phi $ in
exactly $3$ clauses, it follows that $m=\frac{3}{2}n$. We will construct
from $\phi $ a graph $G_{\phi }=(V_{\phi },E_{\phi })$ and a
labelling $L _{\phi }$ of $G_{\phi }$.

First we construct for every variable $x_{i}$, where $1\leq i \leq n$, 
the gadget-graph~$G_{\phi,i}$ together with a labelling $L _{\phi ,i}$ of its edges, 
as illustrated in~Figure~\ref{removal-variable-gadget-fig}. 
In this figure, the labels of every edge in $L _{\phi ,i}$ are drawn next to the edge. 
We call the induced subgraph of $G_{\phi ,i}$ on the $4$ vertices $%
\{s^{x_{i}},u_{0}^{x_{i}},w_{0}^{x_{i}},v_{0}^{x_{i}}\}$ the \emph{base} of $%
G_{\phi ,i}$. Moreover, for every $p\in \{1,2,3\}$, we call the induced
subgraph of $G_{\phi ,i}$ on the $4$ vertices $%
\{t_{p}^{x_{i}},u_{p}^{x_{i}},w_{p}^{x_{i}},v_{p}^{x_{i}}\}$ the $p$\emph{th
branch} of $G_{\phi ,i}$. Finally, we call the edges $\{  u_{0}^{x_{i}} , w_{0}^{x_{i}}  \}$ and $ \{   w_{0}^{x_{i}} ,  v_{0}^{x_{i}} \}$ the \emph{%
transition edges} of the base of $G_{\phi ,i}$ and, for every $p\in
\{1,2,3\} $, we call the edges $  \{  u_{p}^{x_{i}} , w_{p}^{x_{i}}  \}$ and $%
\{  w_{p}^{x_{i}}  ,  v_{p}^{x_{i}}  \}$ the \emph{transition edges} of the $p$th branch of $G_{\phi ,i}$. 
For every $p\in \{1,2,3\}$ we associate the $p$th appearance of the variable $x_{i}$  with the $p$th
branch of~$G_{\phi ,i}$.

We continue the construction of $G_{\phi ,i}$ and $L _{\phi ,i}$ as
follows. First, we add an edge between any possible pair of vertices $%
w_{p}^{x_{i}},w_{q}^{x_{j}}$, where $p,q\in \{0,1,2,3\}$ and $i,j\in
\{1,2,\ldots ,n\}$, and we assign to this new edge $%
e=  \{  w_{p}^{x_{i}}  ,  w_{q}^{x_{j}}  \} $ the unique label~$L _{\phi }(e)=\{7\}$.
Note here that we add this edge $ \{ w_{p}^{x_{i}}  , w_{q}^{x_{j}}  \}$ also in the
case where $i=j$ (and $p\neq q$).

\begin{figure}[tbh]
\centering\includegraphics[scale=0.6]{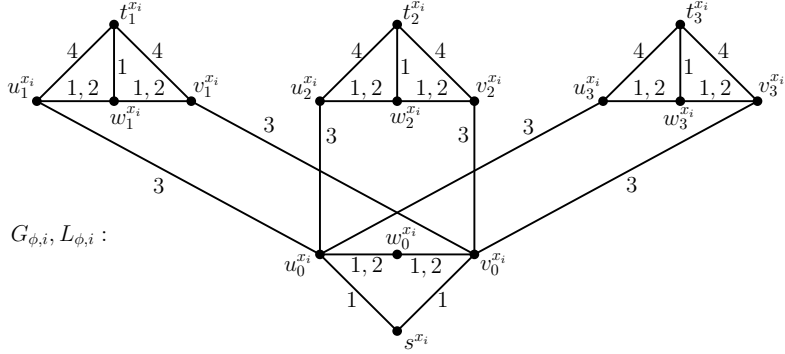}
\caption{The gadget $G_{\protect\phi ,i}$ for the variable $x_{i}$.}
\label{removal-variable-gadget-fig}
\end{figure}

Intuitively, the base of $G_{\phi,i}$ (cf.~Figure~\ref{removal-variable-gadget-fig}) 
corresponds to the variable $x_{i}$ and, 
for every $p\in\{1,2,3\}$, the $p$th branch of~$G_{\phi,i}$, together with the two edges $\{u_{0}^{x_{i}},u_{p}^{x_{i}}\}$ and $\{v_{0}^{x_{i}},v_{p}^{x_{i}}\}$, 
correspond to the clause of $\phi$ in which $x_{i}$ appears for the $p$th time in $\phi$.

Consider now a clause ${\alpha =(x_{i}\oplus x_{j})}$ of $\phi $. Assume that
the variable $x_{i}$ (resp.~$x_{j}$) of $\alpha $ corresponds to
the $p$th (resp.~to the $q$th) appearance of $x_{i}$ (resp.~of~$x_{j}$) in~$%
\phi $. Then we identify the vertices ${u_{p}^{x_{i}},v_{p}^{x_{i}},w_{p}^{x_{i}},t_{p}^{x_{i}}}$ 
of the $p$th branch of~$G_{\phi ,i}$ with the vertices $%
v_{q}^{x_{i}},u_{q}^{x_{i}},w_{q}^{x_{i}},t_{q}^{x_{i}}$ of the $q$th branch
of $G_{\phi ,j}$, respectively (cf.~Figure~\ref{removal-clause-gadget-fig}). 
Now we add an edge between any
possible pair of vertices $t_{p}^{x_{i}},t_{q}^{x_{j}}$, $%
i,j\in \{1,2,\ldots ,n\}$, and $p,q\in \{1,2,3\}$.
We assign to this new edge $e=  \{  t_{p}^{x_{i}}  ,  t_{q}^{x_{j}}  \}$ the unique label~$L _{\phi}(e)=\{7\}$.

Furthermore, for every $i\in \{1,2,\ldots ,n\}$ and every $p\in
\{1,2,3\}$ we define for simplicity of notation the temporal paths $%
P_{i,p}=(s^{x_{i}},u_{0}^{x_{i}},u_{p}^{x_{i}},t_{p}^{x_{i}})$ and $%
Q_{i,p}=(s^{x_{i}},v_{0}^{x_{i}},v_{p}^{x_{i}},t_{p}^{x_{i}})$.

The intuition behind the composition of the gadget-graphs~$G_{\phi,i}$ 
(cf.~Figure~\ref{removal-clause-gadget-fig}) is the following. 
If variable $x_{i}$ is false in a truth assignment $\tau$ of $\phi$, 
then all edges of the paths $P_{i,1},P_{i,2},P_{i,3}$ keep their labels as in $L_{\phi}$. 
Otherwise, if $x_{i}$ is true in $\tau$, 
then all edges of the paths $Q_{i,1},Q_{i,2},Q_{i,3}$ keep their labels as in $L_{\phi}$. 
Furthermore, depending on the value of $x_{i}$ in the assignment $\tau$, 
each of the transition edges $\{u_{p}^{x_{i}},w_{p}^{x_{i}}\}$ and $\{w_{p}^{x_{i}},v_{p}^{x_{i}}\}$, 
where $p\in\{1,2,3\}$, keeps exactly one of its two labels from $L_{\phi}$. 
Consider now a clause ${\alpha =(x_{i}\oplus x_{j})}$ of $\phi$ which corresponds to the 
$p$th branch of $G_{\phi,i}$ and to the $q$th branch of $G_{\phi,j}$. 
Then the only case where \emph{both} edges 
$\{t_{p}^{x_{i}},u_{p}^{x_{i}}\}$ and $\{t_{p}^{x_{i}},v_{p}^{x_{i}}\}$ keep their labels from $L_{\phi}$,
is when the two variables $x_{i},x_{j}$ have \emph{equal} truth value in the 
corresponding truth assignment $\tau$ of $\phi$; 
that is, when the clause ${\alpha =(x_{i}\oplus x_{j})}$ is \emph{not} XOR-satisfied by $\tau$. 
Therefore, intuitively, by a careful counting of the labels it turns out that, if more clauses can be satisfied by a truth assignment $\tau$, then a TC preserving sub-labelling $L$ of $L_\phi$ can be constructed which avoids more labels from $L_{\phi}$, and vice versa (cf.~Theorem~\ref{cost-removing-labels-upper-lower-bound-thm}).

To finalize the construction of the graph $G_{\phi}$, we add a new vertex $t_{0}$ to ensure 
the existence of a temporal path between each pair of vertices of $G_{\phi}$, as follows. 
This new vertex $t_{0}$ is adjacent to vertex $w_{0}^{x_{n}}$ 
and to all vertices in the 
set $\{s^{x_{i}},t_{1}^{x_{i}},t_{2}^{x_{i}},t_{3}^{x_{i}},u_{p}^{x_{i}},v_{p}^{x_{i}}:1\leq i\leq n,\ 0\leq p\leq 3\} 
$. First we assign to the edge $\{  t_{0}  ,  w_{0}^{x_{n}}  \}$ the unique label~$%
L _{\phi }(  \{  t_{0}  ,  w_{0}^{x_{n}}  \} )=\{5\}$. Furthermore, for every vertex $%
t_{p}^{x_{i}}$, where $1\leq i\leq n$ and $1\leq p\leq 3$, we assign to the
edge $  \{  t_{0}  ,  t_{p}^{x_{i}}  \} $ the unique label~$L _{\phi
}(  \{  t_{0}  ,  t_{p}^{x_{i}}  \} )=\{5\}$. Finally, for each of the vertices $z\in
\{s^{x_{i}},u_{p}^{x_{i}},v_{p}^{x_{i}}:1\leq i\leq n,\ 0\leq p\leq 3\}$ we
assign to the edge $ \{ t_{0}  ,  z  \} $ the unique label~$L _{\phi }(  \{  t_{0}  ,  z  \}  )=\{6\}$. 
The addition of the vertex $t_{0}$ and the labels of the (dashed) edges
incident to $t_{0}$ are illustrated in~Figure \ref{removal-t0-vertex-gadget-fig}.
Denote the vertex sets $A=%
\{s^{x_{i}},u_{p}^{x_{i}},v_{p}^{x_{i}}:1\leq i\leq n,\ 0\leq p\leq 3\}$, $%
B=\{w_{p}^{x_{i}}:1\leq i\leq n,\ 0\leq p\leq 3\}$, and $C=\{t_{p}^{x_{i}}:1%
\leq i\leq n,\ 1\leq p\leq 3\}$. 
Note that $V_{\phi }=A\cup B\cup C\cup \{t_{0}\}$. 
This completes the construction of the graph $G_{\phi }$ and its labelling $L _{\phi }$.

\begin{figure}[tbh]
\centering
\begin{subfigure}[t]{.8\linewidth}
\centering \includegraphics[scale=0.6]{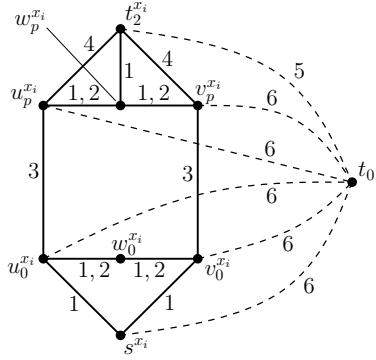} \hspace{0,2cm}
\caption{The addition of vertex $t_{0}$. There exists in $G_{\protect\phi }$ also the edge $  \{  t_{0}  ,  w_{0}^{x_{n}}  \}  $ with label~$\protect L _{\protect\phi }(\{  t_{0}  ,  w_{0}^{x_{n}}  \}  )=\{5\}$.}
\label{removal-t0-vertex-gadget-fig}
\end{subfigure}
\begin{subfigure}[b]{.8\linewidth}
\centering \includegraphics[scale=0.6]{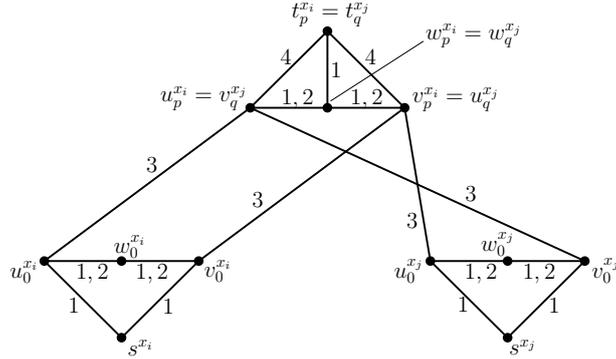}
\caption{The gadget for the clause $(x_{i}\oplus x_{j})$.}
\label{removal-clause-gadget-fig}
\end{subfigure}
\caption{Construction of $G_\phi$ and $L_\phi$.}
\label{removal-gadgets-fig}
\end{figure}

For every $i\in \{1,2,\ldots ,n\}$ the graph $G_{\phi ,i}$ has $16$
vertices. Furthermore, for every $p\in \{1,2,3\}$, the $4$ vertices of the $p
$th branch of $G_{\phi ,i}$ also belong to a branch of $G_{\phi ,j}$, for
some $j\neq i$. Therefore, together with the vertex $t_{0}$, the graph $%
G_{\phi }$ has in total $10n+1$ vertices. We now present the auxiliary
lemmas~\ref{total-number-labels-lem}-\ref{lambda-necessary-labels-lem} which
are necessary for the proof of Theorem~\ref%
{cost-removing-labels-upper-lower-bound-thm}.

\begin{lemma}
\label{total-number-labels-lem}
The labelling $L _{\phi }$ assigns $\frac{17}{4}n^{2}+28n+1$ labels to the edges of $G_{\phi}$.
\end{lemma}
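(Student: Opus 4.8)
The plan is to count the labels of $L_\phi$ by classifying the edges of $G_\phi$ into three groups: (i) the edges internal to the gadgets $G_{\phi,i}$, (ii) the ``$\{7\}$-edges'' among the $w$-vertices and among the $t$-vertices, and (iii) the edges incident to the new vertex $t_0$. Since $L_\phi$ is built so that each edge receives a fixed number of labels (either one or two, as drawn in Figure~\ref{removal-variable-gadget-fig}), each group contributes a quantity that is either linear or quadratic in $n$, and summing the three contributions should yield $\tfrac{17}{4}n^2 + 28n + 1$.

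First I would count group~(i). For each $i$, the gadget $G_{\phi,i}$ consists of a base on $\{s^{x_i},u_0^{x_i},w_0^{x_i},v_0^{x_i}\}$ and three branches, each on $\{t_p^{x_i},u_p^{x_i},w_p^{x_i},v_p^{x_i}\}$, together with the connecting edges $\{u_0^{x_i},u_p^{x_i}\}$ and $\{v_0^{x_i},v_p^{x_i}\}$. I would read off from Figure~\ref{removal-variable-gadget-fig} how many labels each such edge carries (the transition edges $\{u_p^{x_i},w_p^{x_i}\}$ and $\{w_p^{x_i},v_p^{x_i}\}$ carry two labels each, the others one), getting a constant $c_0$ of labels per variable-gadget; however, one must be careful that each branch is \emph{shared} between two gadgets $G_{\phi,i}$ and $G_{\phi,j}$ by the clause-identification, so the internal branch-edges must be counted once per clause rather than once per (variable, appearance) pair. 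Since there are $m = \tfrac{3}{2}n$ clauses and $n$ bases, this group contributes a linear term $\alpha n$ for an explicit constant $\alpha$.

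Next, group~(ii): the set $B$ of $w$-vertices has $|B| = 4n$ (namely $w_p^{x_i}$ for $0\le p\le 3$, $1\le i\le n$), and \emph{every} pair of them is joined by an edge with the single label $\{7\}$, contributing $\binom{4n}{2}$ labels; the set $C$ of $t$-vertices has $|C| = 3n$ after the clause-identifications (each $t_p^{x_i}$ is identified in pairs), and every pair of them is likewise joined by a single-label $\{7\}$-edge, contributing $\binom{3n}{2}$ labels. Together these give $\binom{4n}{2} + \binom{3n}{2} = \tfrac{16n^2 - 4n}{2} + \tfrac{9n^2 - 3n}{2} = \tfrac{25}{2}n^2 - \tfrac{7}{2}n$ — here I should double-check the cardinalities of $B$ and $C$ against the construction, since the quadratic coefficient $\tfrac{17}{4}$ in the target must come from combining this with group~(iii). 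For group~(iii), $t_0$ is adjacent to $w_0^{x_n}$ (one label, $\{5\}$), to each of the $3n$ vertices in $C$ (one label $\{5\}$ each), and to each of the $|A| = 7n$ vertices in $A$ (one label $\{6\}$ each), contributing $1 + 3n + 7n = 10n + 1$ labels.

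The main obstacle — and the only place where real care is needed — is bookkeeping the shared branches and the identified $t$-vertices, so that $|B|$, $|C|$, and the count of internal gadget-edges are each taken exactly once and not double-counted across the two gadgets meeting at a clause. Once the correct cardinalities $|A| = 7n$, $|B| = 4n$, $|C| = 3n$ and the per-edge label counts are pinned down, the rest is the routine arithmetic $\alpha n + \big(\binom{4n}{2} + \binom{3n}{2}\big) + (10n+1) = \tfrac{17}{4}n^2 + 28n + 1$, which fixes the value of $\alpha$ and completes the proof. (If the plain sum of $\binom{4n}{2}+\binom{3n}{2}$ overshoots $\tfrac{17}{4}n^2$, then not all pairs in $B$, resp.\ $C$, are actually joined — e.g.\ only pairs with distinct variable indices — and I would instead count $\binom{4n}{2}$ minus the within-gadget pairs; the construction text must be reread on this point, but in either case the count is an explicit polynomial in $n$ and the claimed identity follows by direct substitution.)
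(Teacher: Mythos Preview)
Your overall strategy---partition the edges of $G_\phi$ into gadget-internal edges, the $\{7\}$-labelled clique edges on $B$ and on $C$, and the edges at $t_0$, then sum---is exactly what the paper does. The problem is that your cardinalities for $A$, $B$, $C$ are wrong, and your parenthetical fallback misdiagnoses the source of the overshoot.

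The clause-identification does not just glue the branch \emph{edges}; it also identifies the branch \emph{vertices} $u_p^{x_i}=v_q^{x_j}$, $v_p^{x_i}=u_q^{x_j}$, $w_p^{x_i}=w_q^{x_j}$, $t_p^{x_i}=t_q^{x_j}$. Hence the $3n$ symbols $t_p^{x_i}$ collapse in pairs to $|C|=m=\tfrac{3}{2}n$, not $3n$; the $w$-vertices are the $n$ base vertices $w_0^{x_i}$ together with one $w$-vertex per clause, so $|B|=n+m=\tfrac{5}{2}n$, not $4n$; and similarly $|A|=n+2n+2m=6n$, not $7n$. With the correct sizes, the clique contributions are $\binom{5n/2}{2}+\binom{3n/2}{2}=\tfrac{17}{4}n^2-2n$, matching the target quadratic coefficient, and the degree of $t_0$ is $|A|+|C|+1=\tfrac{15}{2}n+1$. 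Your suspicion that ``not all pairs in $B$, resp.\ $C$, are actually joined'' is incorrect: the construction explicitly joins \emph{every} pair in $B$ and every pair in $C$; the overshoot comes entirely from overcounting $|B|$ and $|C|$. Once you fix the three cardinalities, the paper's per-variable count (12 labels) and per-clause count (7 labels) for the gadget-internal edges give $12n+7m=\tfrac{45}{2}n$, and the total $\bigl(\tfrac{15}{2}n+1\bigr)+\bigl(\tfrac{17}{4}n^2-2n\bigr)+12n+\tfrac{21}{2}n=\tfrac{17}{4}n^2+28n+1$ drops out.
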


\begin{proof}
The vertex $t_{0}$ has in total $3$ incident edges (to vertices $%
s^{x_{i}},u_{0}^{x_{i}},v_{0}^{x_{i}}$) to every base of a variable $x_{i}$
of $\phi $, $3$ incident edges (to vertices $%
t_{p}^{x_{i}},u_{p}^{x_{i}},v_{p}^{x_{i}}$, where $1\leq p\leq 3$) to every
clause $(x_{i}\oplus x_{j})$ of $\phi $ (i.e.,~to one branch to $x_{i}$ and
one branch of $x_{j}$), and one incident edge to vertex $w_{0}^{x_{n}}$.
That is, $t_{0}$ has in total $3n+3m+1=3n+3\cdot \frac{3}{2}n+1=\frac{15}{2}%
n+1$ incident edges, each of them having one label~in $L _{\phi }$.

Furthermore there exist in total $\frac{m(m-1)}{2} $ edges among the
vertices $\{t_{p}^{x_{i}}:1\leq i\leq n,\ 1\leq p\leq 3\}$, as well as $%
\frac{(n+m)(n+m-1)}{2}$ edges among the vertices $\{w_{p}^{x_{i}}:1\leq
i\leq n,\ 0\leq p\leq 3\}$, each of them having one label~in $L _{\phi
}$. Therefore, since $m=\frac{3}{2}n$, $L _{\phi }$ assigns in total $%
\frac{17}{4}n^{2}- 2n$ labels for these edges.

Moreover, the labelling $L _{\phi }$ assigns to every variable $x_{i}$
of $\phi $ in total $12$ labels, i.e.,~two labels for each of the transition
edges $   \{ u_{0}^{x_{i}} , w_{0}^{x_{i}} \}   , \    \{   w_{0}^{x_{i}}  ,  v_{0}^{x_{i}}   \}$ and one
label~for each of the edges $\{   \{  s^{x_{i}}  ,  u_{0}^{x_{i}}   \}    ,\    \{  s^{x_{i}}  ,   v_{0}^{x_{i}}  \} ,\   \{ u_{0}^{x_{i}}  ,  u_{p}^{x_{i}}  \} , \  \{  v_{0}^{x_{i}}  ,  v_{p}^{x_{i}}  \}  :1\leq p\leq 3\}$.

Finally, $L _{\phi }$ assigns to every clause $(x_{i}\oplus x_{j})$ of 
$\phi $ in total $7$ labels, i.e.,~two labels for each of the transition
edges $  \{ u_{p}^{x_{i}}  ,  w_{p}^{x_{i}}  \},\ \{  w_{p}^{x_{i}}  ,  v_{p}^{x_{i}}  \}  $ and one
label~for each of the edges $   \{u_{p}^{x_{i}}  ,  t_{p}^{x_{i}}  \} ,\
  \{ v_{p}^{x_{i}}  ,  t_{p}^{x_{i}}  \},\ \{  t_{p}^{x_{i}}  ,  w_{p}^{x_{i}}  \}   $, where $x_{i}$ is
associated with the $p$th branch of~$G_{\phi ,i}$. That is, $L _{\phi
} $ assigns in total $7m=\frac{21}{2}n$ labels for all clauses of $\phi $.

Summarizing, the labelling $L _{\phi }$ assigns to the edges of the graph $G_{\phi }$ a total of
$\left( \frac{15}{2}n+1\right) +\left( \frac{17}{4}n^{2}- 2n\right) +12n+%
\frac{21}{2}n=\frac{17}{4}n^{2}+28n+1$ labels.
\end{proof}

\begin{lemma}
\label{lambda-reachabilities-preserve-lem}
The labelling $L_{\phi }$ satisfies TC on $G_{\phi} $.
\end{lemma}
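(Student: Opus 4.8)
The plan is to exhibit, for every ordered pair $(a,b)$ of distinct vertices of $G_{\phi}$, an $(a,b)$-journey, organising the argument around the hub vertex $t_{0}$ together with the two cliques induced by the vertex sets $B$ and $C$. First I would record the structural facts that can be read off directly from Figure~\ref{removal-variable-gadget-fig} and from the definition of the global edges: (i)~every \emph{internal} label of $L_{\phi}$ --- that is, every label occurring inside some gadget $G_{\phi,i}$, in particular on the paths $P_{i,p}$ and $Q_{i,p}$ and on the transition edges --- is smaller than $5$; (ii)~$t_{0}$ is joined to every vertex of $A$ by an edge with label $6$, and to every vertex of $C\cup\{w_{0}^{x_{n}}\}$ by an edge with label $5$; (iii)~the set $B$ (resp.\ $C$) induces a clique, all of whose edges carry the label $7$.

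Using (i) I would then establish the two reachability claims that drive everything else. \emph{Claim~(a):} every vertex $v\in A\cup(B\setminus\{w_{0}^{x_{n}}\})$ has a journey to some vertex of $C$ all of whose labels are smaller than $5$. This is verified by walking inside the gadget that contains $v$: from a vertex $s^{x_{i}}$ one follows the path $P_{i,1}$; from $u_{0}^{x_{i}}$ or $u_{p}^{x_{i}}$ (resp.\ from $v_{0}^{x_{i}}$ or $v_{p}^{x_{i}}$) one follows the appropriate suffix of $P_{i,p}$ (resp.\ of $Q_{i,p}$); and from a $w$-vertex one first crosses a transition edge of its branch, using its smaller label, and then continues as in the previous cases. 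In each case the labels used increase along the walk and stay below $5$ by~(i). \emph{Claim~(b):} every vertex of $G_{\phi}$ has a journey to some vertex of $B$ all of whose labels are smaller than $7$; this is proved in the same way, using additionally that $t_{0}$ reaches $w_{0}^{x_{n}}\in B$ along its label-$5$ edge and that every $t_{p}^{x_{i}}\in C$ reaches $w_{p}^{x_{i}}\in B$ through the single edge $\{t_{p}^{x_{i}},w_{p}^{x_{i}}\}$, whose label is below $7$ by~(i).

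With (a), (b), (ii) and (iii) in hand I would assemble the required journeys by a short case analysis on the type of the destination $b$. If $b\in A$: reach a vertex $t\in C$ from $a$ using Claim~(a) --- taking $t=a$ when $a\in C$, and replacing this prefix by the single edge $a\xrightarrow{5}t_{0}$ when $a=w_{0}^{x_{n}}$, or by nothing when $a=t_{0}$ --- and then follow $t\xrightarrow{5}t_{0}\xrightarrow{6}b$; every label of the prefix is below $5$, so the concatenation is a journey. If $b\in C$: use Claim~(a) to reach some $t'\in C$ with labels below $5$ and, if $t'\neq b$, append the clique edge $t'\xrightarrow{7}b$; here if $a\in C$ the single edge $a\xrightarrow{7}b$ works, and if $a=t_{0}$ the single edge $t_{0}\xrightarrow{5}b$ works. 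If $b=t_{0}$: reach any vertex of $C$ (possible for every $a\neq t_{0}$, either trivially, or along $w_{0}^{x_{n}}$'s label-$5$ edge, or by Claim~(a)) and append its label-$5$ edge to $t_{0}$. If $b\in B$: use Claim~(b) to reach some $w'\in B$ with labels below $7$ and, if $w'\neq b$, append $w'\xrightarrow{7}b$; here if $a\in B$ the single edge $a\xrightarrow{7}b$ works, and if $a=t_{0}$ use $t_{0}\xrightarrow{5}w_{0}^{x_{n}}\xrightarrow{7}b$ (or simply $t_{0}\xrightarrow{5}b$ when $b=w_{0}^{x_{n}}$). In every case we obtain an $(a,b)$-journey, hence $L_{\phi}$ satisfies TC on $G_{\phi}$.

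The difficulty here is bookkeeping rather than depth. The delicate part is to be sure that the prefixes produced by Claims~(a) and~(b) always end with a label small enough to be extended by the label-$5$, label-$6$ or label-$7$ ``exit'' edges, and that the handful of vertices lying outside the generic pattern are all covered: the hub $t_{0}$ itself, the distinguished vertex $w_{0}^{x_{n}}$, the $w$-vertices (which have no incident label-$5$ or label-$6$ edge), and pairs of vertices that lie in the same branch after the identification of a branch of $G_{\phi,i}$ with a branch of $G_{\phi,j}$. Each of these reduces to a finite check against the concrete labels in Figure~\ref{removal-variable-gadget-fig}.
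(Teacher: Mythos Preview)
Your plan is correct and follows essentially the same route as the paper: both arguments route every ordered pair through the hub vertex $t_{0}$ together with the two label-$7$ cliques on $B$ and on $C$, using that all gadget-internal labels are at most $4$. The only difference is organisational --- you case-split on the destination and factor out two reachability claims, whereas the paper splits on the source/destination pair --- so the verifications are the same.

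One small write-up issue: there is no reason to exclude $w_{0}^{x_{n}}$ from Claim~(a); your own argument for $w$-vertices (cross a transition edge via its label $1$, then ascend) applies to $w_{0}^{x_{n}}$ exactly as to every other $w_{0}^{x_{i}}$. As written, the exclusion leaves the case $a=w_{0}^{x_{n}}$, $b\in C$ uncovered in your case analysis; simply dropping the exclusion fixes this.
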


\begin{proof}
We will prove that there exists a temporal path in $L _{\phi }$
between any pair of vertices of $V_{\phi }=A\cup B\cup C\cup \{t_{0}\}$. 

For any two vertices $b,b^{\prime }\in B$ there exists a temporal path from $%
b$ to $b^{\prime }$ and from $b^{\prime }$ to $b$, due to the edge $%
\{ b , b^{\prime }  \}$ with label~$7$. Similarly, for any two vertices $c,c^{\prime
}\in C$ there exists a temporal path from $c$ to $c^{\prime }$ and from $%
c^{\prime }$ to $c$, due to the edge $  \{  c  ,  c^{\prime }  \}  $ with label~$7$. Let $%
a_{1},a_{2}\in A$. There exists a temporal path from $a_{1}$ to $a_{2}$ as
follows: start from $a_{1}$, follow $P_{i,p}$ (or $Q_{i,p}$) upwards until $%
t_{p}^{x_{i}}$ with greatest label~$4$, then go to $t_{0}$ with label~$5$,
and finally from $t_{0}$ to $a_{2}$ with label~$6$. In the special case
where $a_{1}$ and $a_{2}$ lie on the same path $P_{i,p}$ (resp. $Q_{i,p}$)
and $a_{1}$ appears before $a_{2}$ in $P_{i,p}$ (resp. $Q_{i,p}$), there
exists clearly a temporal path from $a_{1}$ to $a_{2}$ along $P_{i,p}$
(resp. $Q_{i,p}$).

Let $a\in A$ and $b\in B$. Note that $b=w_{p}^{x_{i}}$ for some $i\in
\{1,2,\ldots ,n\}$ and some $p\in \{0,1,2,3\}$. There exists the temporal
path from $b$ to $a$ as follows. First follow the edge $%
\{  w_{p}^{x_{i}}  ,  u_{p}^{x_{i}}  \} $ (with label~$1$), then follow upwards the path $%
P_{i,p}$ until one of the vertices $%
\{t_{1}^{x_{i}},t_{2}^{x_{i}},t_{3}^{x_{i}}\}$ (with maximum label~$4$),
then go to $t_{0}$ with label~$5$ and finally reach $a$ with label~$6$.
Furthermore there exists the temporal path from $a$ to $b$ as follows.
Assume first that $a=s^{x_{i}}$, for some $i\in \{1,2,\ldots ,n\}$. If $%
b=w_{0}^{x_{i}}$ then there exists the temporal path on the edges $%
\{  s^{x_{i}}  ,  u_{0}^{x_{i}}  \}  $ (with label~$1$) and $  \{  u_{0}^{x_{i}}  ,  w_{0}^{x_{i}}  \}$
(with label~$2$). If $b\neq w_{0}^{x_{i}}$ then there exists the temporal
path from $s^{x_{i}}$ to $w_{0}^{x_{i}}$ (with maximum label~$2$), followed
by the edge $\{  w_{0}^{x_{i}}  ,  b  \}  $ (with label~$7$). Assume now that $a\neq
s^{x_{i}}$, for every $i\in \{1,2,\ldots ,n\}$. That is, $a=u_{p}^{x_{i}}$
or $a=v_{p}^{x_{i}}$, for some $i\in \{1,2,\ldots ,n\}$ and some $p\in
\{0,1,2,3\}$. If $b=w_{p}^{x_{i}}$ then there exists the temporal path from $%
a$ to $b$ on the edge $\{ a , b \} $ (with label~$1$). If $b\neq w_{p}^{x_{i}}$ then
there exists the temporal path from $a$ to $b$ through the edges $%
\{ a , w_{p}^{x_{i}} \} $ (with label~$1$) and $  \{ w_{p}^{x_{i}}  ,  b  \} $ (with label~$7$). That
is, there exists a temporal path in $L _{\phi }$ between any $a\in A$
and any $b\in B$.

Let $b\in B$, i.e.,~$b=w_{p}^{x_{i}}$ for some $i\in \{1,2,\ldots ,n\}$ and
some $p\in \{0,1,2,3\}$. Then there exists a temporal path from $b$ to every
vertex $c\in C$ as follows. If $p=0$ then start with the edge $%
\{ w_{0}^{x_{i}} , u_{0}^{x_{i}}  \} $ (of label~$1$), continue upwards with a temporal
path (of maximum label~$4$) until $t_{1}^{x_{i}}\in C$ and then continue to
any other vertex $c\in C$ with the edge $  \{ t_{1}^{x_{i}} , c \} $ (of label~$7$). If $%
p\in \{1,2,3\}$ then reach $t_{p}^{x_{i}}\in C$ with the edge $%
\{ w_{p}^{x_{i}}  ,  t_{p}^{x_{i}}  \} $ (of label~$1$) and continue to any other vertex $%
c\in C$ with the edge $ \{  t_{p}^{x_{i}} , c  \}  $ (of label~$7$). That is, there exists
a temporal path from any $b\in B$ to any vertex of the set $C$. Now let $%
c\in C$, i.e.,~$c=t_{p}^{x_{i}}$ for some $i\in \{1,2,\ldots ,n\}$ and some $%
p\in \{1,2,3\}$. Then there exists a temporal path from $c$ to every vertex $%
b\in B$ as follows. First reach the vertex $w_{p}^{x_{i}}\in B$ with the
edge $ \{ t_{p}^{x_{i}}  ,  w_{p}^{x_{i}}  \} $ (of label~$1$) and then continue to any
other vertex $c\in C$ with the edge $  \{ w_{p}^{x_{i}}  ,  c  \} $ (of label~$7$). That
is, there exists a temporal path in $L _{\phi }$ between any $b\in B$
and any $c\in C$.

Let $a\in A$, i.e.,~$a\in \{s^{x_{i}},u_{p}^{x_{i}},v_{p}^{x_{i}}\}$ for some 
$i\in \{1,2,\ldots ,n\}$ and some $p\in \{0,1,2,3\}$. Then there exists at
least one path from $a$ upwards to a vertex $c\in
\{t_{1}^{x_{i}},t_{2}^{x_{i}},t_{3}^{x_{i}}\}$ (with maximum label~$4$).
Once we have (temporally) reached $c$ from $a$, we can (temporally) continue
to any other $c^{\prime }\in C$ through the edge $\{ c , c^{\prime } \} $ (of label~$7$%
). That is, there exists a temporal path from any $a\in A$ to any vertex of $%
C$. Now let $c\in C$, i.e.,~$c=t_{p}^{x_{i}}$ for some $i\in \{1,2,\ldots
,n\} $ and some $p\in \{1,2,3\}$. Then there exists a temporal path from $c$
to every vertex $a\in A$ as follows. First reach the vertex $t_{0}$ with the
edge $ \{  t_{p}^{x_{i}}  ,  t_{0}  \}$ (of label~$5$) and then continue to any vertex $%
a\in A$ with the edge $ \{  t_{0} , a \}$ (of label~$6$). That is, there exists a
temporal path in $L _{\phi }$ between any $a\in A$ and any $c\in C$.

Finally, there exists a temporal path between $t_{0}$ and every vertex of $%
A\cup C\cup \{w_{0}^{x_{n}}\}$, since $t_{0}$ is a neighbour with all these
vertices. Let $b\in B$, i.e.,~$b=w_{p}^{x_{i}}$ for some $i\in \{1,2,\ldots
,n\}$ and some $p\in \{0,1,2,3\}$. Then there exists a temporal path from $%
w_{p}^{x_{i}}$ to $t_{0}$ with the edges $ \{ w_{p}^{x_{i}}  ,  u_{p}^{x_{i}}   \}  $ (with
label~$1$) and $   \{  u_{p}^{x_{i}}  ,  t_{0}  \}  $ (with label~$6$). On the other hand,
there exists a temporal path from $t_{0}$ to every vertex $%
b=w_{p}^{x_{i}}\in B$, as follows. First reach the vertex $w_{0}^{x_{n}}$
with the edge $  \{  t_{0}  ,  w_{0}^{x_{n}}  \}  $ (of label~$5$) and then, if $b\neq
w_{0}^{x_{n}}$, continue with the edge $  \{  w_{0}^{x_{n}}  ,  b  \}  $ (of label~$7$). That
is, there exists a temporal path in $L _{\phi }$ between $t_{0}$ and
any vertex in $A\cup B\cup C$.

Summarizing, there exists a temporal path between any pair of vertices of $%
V_{\phi }=A\cup B\cup C\cup \{t_{0}\}$, i.e.,~the labelling $L
_{\phi }$ satisfies TC on $G_{\phi }$.
\end{proof}

\begin{lemma}
\label{lambda-necessary-labels-lem}
Let $L \subseteq L _{\phi }$
be a labelling of the graph $G_{\phi }$. If $L$ satisfies TC on $G_{\phi }$, then $L $ contains:
\begin{enumerate}[label=(\alph*)]
\item at least one label~for every transition edge $%
\{ u_{p}^{x_{i}} , w_{p}^{x_{i}}  \} $ and $ \{ w_{p}^{x_{i}} , v_{p}^{x_{i}} \}$, where $i\in
\{1,2,\ldots ,n\}$ and $p\in \{0,1,2,3\}$,

\item the label~of each edge $ \{ t_{p}^{x_{i}} , w_{p}^{x_{i}} \} $, where $i\in
\{1,2,\ldots ,n\}$ and $p\in \{1,2,3\}$,

\item the labels of all edges of $G_{\phi }$ among the vertices $%
\{t_{p}^{x_{i}}:1\leq i\leq n,\ 1\leq p\leq 3\}$,

\item the labels of all edges among the vertices $\{w_{p}^{x_{i}}:1\leq
i\leq n,\ 0\leq p\leq 3\}$,

\item the label~of each edge incident to $t_{0}$, and

\item the labels of all edges of the path $P_{i,p}$ or the labels of
all edges of the path $Q_{i,p}$, where $i\in \{1,2,\ldots ,n\}$ and $p\in
\{1,2,3\}$.
\end{enumerate}
\end{lemma}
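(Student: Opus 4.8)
The plan is to use the elementary observation that, since $L\subseteq L_{\phi}$, every journey of $G_{\phi}(L)$ is also a journey of $G_{\phi}(L_{\phi})$. Hence a label $\ell$ of an edge $e$ must belong to $L$ as soon as there is an ordered pair $(a,b)$ of vertices such that \emph{every} $(a,b)$-journey of the \emph{fixed} temporal graph $G_{\phi}(L_{\phi})$ uses the time-edge on $e$ with label $\ell$: by TC, $L$ still admits some $(a,b)$-journey, and that journey, being a journey of $G_{\phi}(L_{\phi})$ too, must be one of the enumerated ones, so $\ell\in L(e)$. For the ``at least one label'' claims in item~(a) one applies the same principle with ``uses some time-edge on $e$'' in place of ``uses that specific time-edge''. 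Thus each item reduces to exhibiting a witnessing pair and classifying its journeys in $G_{\phi}(L_{\phi})$; this is carried out by reading the label values off Figures~\ref{removal-variable-gadget-fig}--\ref{removal-clause-gadget-fig} and using the layered structure of $L_{\phi}$: the edges at $t_{0}$ carry only labels $5,6$; the edges inside $B$ and inside $C$ carry only the maximum label $7$; each transition edge carries two ``middle'' labels; and $\{s^{x_{i}},u_{0}^{x_{i}}\}$, $\{s^{x_{i}},v_{0}^{x_{i}}\}$, $\{t_{p}^{x_{i}},w_{p}^{x_{i}}\}$ carry label $1$.

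I would handle the items in an order that allows re-use. Items~(c)--(e) are the cleanest: for an edge $e=\{b,b'\}$ with $b,b'\in B$ I take the pair $(b,b')$ and show that the single time-edge $(b,b',7)$ is the \emph{only} $(b,b')$-journey --- a journey whose last step has label $7$ cannot be extended, so it must be this whole journey, while a journey reaching $b'$ along a smaller label would have to enter $B$ from outside, which (checking the figure) is impossible because every non-$7$ edge at a vertex of $B$ stays within the variable gadget it belongs to, and from the rest of that gadget the only way back to $B$ is another label-$7$ edge. The same argument covers the $C$--$C$ edges and the edges $\{t_{0},w_{0}^{x_{n}}\}$, $\{t_{0},t_{p}^{x_{i}}\}$ (label $5$) and $\{t_{0},z\}$ with $z\in A$ (label $6$), since any detour would need two consecutive steps among the top labels $\{5,6,7\}$, which the construction forbids. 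For item~(b), the pair $(t_{p}^{x_{i}},w_{p}^{x_{i}})$ works: every first step out of $t_{p}^{x_{i}}$ other than this very edge has label $4$, $5$ or $7$, which is too large to allow re-entering $w_{p}^{x_{i}}$ later (all of its incident edges have label $\le 4$, except the label-$7$ edges inside $B$, which can never be reached at a time below $7$ after such a first step). For item~(a), deleting all labels of a transition edge $\{u_{p}^{x_{i}},w_{p}^{x_{i}}\}$ leaves some pair --- for instance $(u_{p}^{x_{i}},w_{p}^{x_{i}})$, or one built from $w_{p}^{x_{i}}$ and an $A$-vertex on the appropriate side --- all of whose journeys in $G_{\phi}(L_{\phi})$ traversed that edge, which one verifies with the same ``no label-$7$ shortcut back into $B$'' observation together with items~(b)--(d).

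The delicate item, and the one I expect to be the main obstacle, is~(f): for every $i$ and $p$, all three edges of $P_{i,p}$ survive or all three edges of $Q_{i,p}$ survive. Here the witnessing pair is drawn from $s^{x_{i}}$, the (identified) branch vertices $u_{p}^{x_{i}},v_{p}^{x_{i}}$, and $t_{0}$: the only useful first step out of $s^{x_{i}}$ is along $\{s^{x_{i}},u_{0}^{x_{i}}\}$ or $\{s^{x_{i}},v_{0}^{x_{i}}\}$ (leaving first to $t_{0}$ is a dead end, $6$ being maximal among the edges at $t_{0}$), and from there one must argue that the journeys realizing the connections forced by TC climb monotonically up one \emph{entire} side, $P_{i,p}$ or $Q_{i,p}$. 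The work is in ruling out the ``mixed'' journeys: those that switch sides through $w_{0}^{x_{i}}$ using its two ``middle''-labelled transition edges, and those that reach $t_{p}^{x_{i}}$ through some other branch $p'$ and then hop inside $C$ with label $7$. Carrying this distinction out exhaustively against the exact labels of Figures~\ref{removal-variable-gadget-fig} and~\ref{removal-clause-gadget-fig}, while keeping track of the identifications made for each clause $(x_{i}\oplus x_{j})$ (so that a branch of $G_{\phi,i}$ is simultaneously a branch of $G_{\phi,j}$), is where the bulk of the effort lies; this is also exactly where the undirectedness of $G_{\phi}$ bites, since in an unlabelled undirected graph every pair of vertices is already connected and one must reason purely about label-monotone paths. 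Once items~(a)--(f) are established, the forced labels are precisely those used in the counting of Theorem~\ref{cost-removing-labels-upper-lower-bound-thm}.
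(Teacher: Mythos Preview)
Your witness-pair strategy is exactly the paper's, and for items~(a)--(d) your choices are correct and essentially coincide with the paper's (in~(b) the paper takes $(t_{p}^{x_{i}},w_{q}^{x_{j}})$ for an arbitrary $w$-vertex rather than your $(t_{p}^{x_{i}},w_{p}^{x_{i}})$, but either works).

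There is, however, a genuine error in your treatment of item~(e), specifically for the edges $\{t_{0},t_{p}^{x_{i}}\}$. Your blanket claim that a detour ``would need two consecutive steps among the top labels $\{5,6,7\}$, which the construction forbids'' is false: each $t_{p}^{x_{i}}$ carries both a label-$5$ edge (to $t_{0}$) and label-$7$ edges (to the other $C$-vertices), so $t_{0}\to t_{q}^{x_{j}}\to t_{p}^{x_{i}}$ with labels $5,7$ is a perfectly good alternate journey, and in the reverse direction $t_{p}^{x_{i}}\to u_{p}^{x_{i}}\to t_{0}$ with labels $4,6$ works too. The endpoints of this edge are therefore \emph{not} a valid witness pair. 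The paper instead uses the non-obvious pair $(u_{p}^{x_{i}},s^{x_{i}})$: any such journey must end $\cdots\to t_{0}\to s^{x_{i}}$ with final labels $5,6$, and one then checks that the only $C$-vertex reachable from $u_{p}^{x_{i}}$ before time~$5$ is $t_{p}^{x_{i}}$ itself, which forces the label on $\{t_{0},t_{p}^{x_{i}}\}$. As for item~(f), the paper's argument is a one-line assertion with witness $(s^{x_{i}},t_{p}^{x_{i}})$ and does not explicitly address the label-$7$ hop inside~$C$ that you flag; your instinct that~(f) requires a more careful case analysis than the other items is well placed.
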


\begin{proof}
\begin{enumerate}[label=(\alph*)]
\item First assume that $L $ does not keep any time label~on
the transition edge $ \{ u_{p}^{x_{i}}  ,  w_{p}^{x_{i}}  \}$ (resp. $%
\{ w_{p}^{x_{i}} , v_{p}^{x_{i}}  \}  $), where $i\in \{1,2,\ldots ,n\}$ and $p\in
\{0,1,2,3\}$. Then there does not exist in $L $ any temporal path from 
$u_{p}^{x_{i}}$ (resp. $v_{p}^{x_{i}}$) to $w_{p}^{x_{i}}$, even if $L 
$ maintains all other edge labels from $L _{\phi }$. This is a
contradiction. Therefore $L $ keeps at least one label~on the
transition edge $ \{  u_{p}^{x_{i}} , w_{p}^{x_{i}}  \}$ (resp. $%
\{ w_{p}^{x_{i}} , v_{p}^{x_{i}}  \}  $).

\item Now assume that $L $ does not contain the label~of some
edge $ \{ t_{p}^{x_{i}}  ,  w_{p}^{x_{i}}  \}  $, where $i\in \{1,2,\ldots ,n\}$ and $p\in
\{1,2,3\}$. Then there does not exist in $L $ any temporal path from $%
t_{p}^{x_{i}}$ to any vertex $w_{q}^{x_{j}}\in B$, even if $L $
maintains all other edge labels from $L _{\phi }$. This is a
contradiction to the assumption that $L $ satisfies TC on $G_{\phi }$. Therefore $L $ contains the label~of each edge $%
\{  t_{p}^{x_{i}} , w_{p}^{x_{i}}  \}  $, where $i\in \{1,2,\ldots ,n\}$ and $p\in
\{1,2,3\}$.

\item Consider two vertices $t_{p}^{x_{i}}\neq t_{q}^{x_{j}}$, $1\leq i<j\leq n$, $1\leq p,q\leq 3$. If $L $ does not contain the label
of the edge $ \{  t_{p}^{x_{i}}  ,  t_{q}^{x_{j}}  \}  $, then there does not exist in $%
L $ any temporal path from $t_{p}^{x_{i}}$ to $t_{q}^{x_{j}}$, which is a
contradiction. Therefore $L $ contains the labels of all edges of $%
G_{\phi }$ among the vertices $\{t_{p}^{x_{i}}:1\leq i\leq n,\ 1\leq p\leq
3\}$.

\item Assume that $L $ does not contain the label~of the edge $%
\{  w_{p}^{x_{i}} , w_{q}^{x_{j}}  \}  $, for some $i,j\in \{1,2,\ldots ,n\}$ and$\
p,q\in \{0,1,2,3\}$. Then there does not exist in $L $ any temporal
path from $w_{p}^{x_{i}}$ to $w_{q}^{x_{j}}$, which is a contradiction.
Therefore $L $ contains the labels of all edges among the vertices $%
\{w_{p}^{x_{i}}:1\leq i\leq n,\ 0\leq p\leq 3\}$.

\item We now prove that $L$ contains the label~of each edge
incident to $t_{0}$. Recall that the neighbours of $t_{0}$ in $G_{\phi }$ are
exactly the vertices of the set $A\cup C\cup \{w_{0}^{x_{n}}\}$. Assume $%
L $ does not have the label~of the edge $e= \{ t_{0}  ,   w_{0}^{x_{n}}   \}  $. Then
there exists no temporal path in $L $ from $t_{0}$ to any vertex $%
w_{p}^{x_{i}}\in B$, even if $L $ maintains all other edge labels from 
$L _{\phi }$. This is a contradiction to the assumption that $L $
satisfies TC on $G_{\phi }$. Now assume that there exists a
vertex $a\in A=\{s^{x_{i}},u_{p}^{x_{i}},v_{p}^{x_{i}}:1\leq i\leq n,\ 0\leq
p\leq 3\}$ such that $L $ does not have the label~of the edge $e=  \{  t_{0}  ,  a  \}
$. Then there does not exist in $L $ any temporal path from vertex $%
t_{0}$ to vertex $a$, which is again a contradiction. Finally assume that
there exists a vertex $t_{p}^{x_{i}}\in C$, such that $L $ does not
have the label~of the edge $e=  \{ t_{0}  ,  t_{p}^{x_{i}}  \}  $. Then there does not exist
in $L $ any temporal path from vertex $u_{p}^{x_{i}}$ to vertex $%
s^{x_{i}}$, which is a contradiction. Therefore $L $ contains the
label~of each edge incident to $t_{0}$.

\item Assume that $L $ misses from $L _{\phi }$ at least
one label~of the path $P_{i,p}$ and at least one label~of the path $Q_{i,p}$%
, for some $i\in \{1,2,\ldots ,n\}$ and $p\in \{1,2,3\}$. Then there does
not exist any temporal path from $s^{x_{i}}$ to $t_{p}^{x_{i}}$, which is a
contradiction. Therefore $L $ contains the labels of all edges of the
path $P_{i,p}$ or the labels of all edges of the path $Q_{i,p}$, where $i\in
\{1,2,\ldots ,n\}$ and $p\in \{1,2,3\}$.
\end{enumerate}
\end{proof}

We are now ready to provide the proof of Theorem~\ref{cost-removing-labels-upper-lower-bound-thm}.

\begin{theorem}
\label{cost-removing-labels-upper-lower-bound-thm}
There exists a truth assignment $\tau $ of $\phi $ with $|\tau (\phi )|\geq k$ if and only if there exists a TC satisfying labelling $L\subseteq L_{\phi}$ of $G_{\phi}$ such that $|L_{\phi} \setminus L|\geq 9n+k$.
\end{theorem}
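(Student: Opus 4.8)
The plan is to prove the two implications via a dictionary between truth assignments of $\phi$ and TC-preserving sublabellings of $L_{\phi}$, with all counting controlled by Lemmas~\ref{total-number-labels-lem}--\ref{lambda-necessary-labels-lem}. The dictionary is: setting $x_i$ to \emph{false} corresponds to keeping all three paths $P_{i,1},P_{i,2},P_{i,3}$ and discarding the ``$Q$-side'' base edges $\{s^{x_i},v_0^{x_i}\}$ and $\{v_0^{x_i},v_p^{x_i}\}$ ($p\in\{1,2,3\}$); setting $x_i$ to \emph{true} is symmetric with $P$ and $Q$ swapped; and on every transition edge one keeps exactly one of its two labels, chosen so that the ``used'' direction of the gadget still carries a journey. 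The mechanism of the branch gadget (cf.\ the intuition preceding Lemma~\ref{total-number-labels-lem}) is that, on the shared branch of a clause $(x_i\oplus x_j)$, \emph{both} edges $\{u_p^{x_i},t_p^{x_i}\}$ and $\{v_p^{x_i},t_p^{x_i}\}$ are forced to keep a label exactly when $x_i$ and $x_j$ receive equal values, whereas if the clause is XOR-satisfied one of these two labels is free to drop.

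For ``$\Rightarrow$'', from a $\tau$ with $|\tau(\phi)|\geq k$ I would let $L$ consist of all labels named in Lemma~\ref{lambda-necessary-labels-lem}(b)--(e), the base $P$-side (resp.\ $Q$-side) edges of each variable according to $\tau$ together with the matching single label on each transition edge, and, for each clause, only the one branch-to-$t_p^{x_i}$ label that is actually needed given its two variables' values. Two things must then be checked. First, $L$ satisfies TC: one re-runs the case analysis of Lemma~\ref{lambda-reachabilities-preserve-lem}, noting that all pairs inside $B$, inside $C$, and all pairs routed through $t_0$ are untouched (the labels $5,6,7$ survive), so the real work is the intra-base and intra-branch journeys, e.g.\ $s^{x_i}\to v_0^{x_i}$ must now go $s^{x_i}\to u_0^{x_i}\to w_0^{x_i}\to v_0^{x_i}$, and every vertex $v_p^{x_i}$ that lost incident edges must still reach $t_0$ and be reached from it through the retained labels $1$ and $6$. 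Second, the count: this $L$ avoids $6$ labels in each variable's base ($6n$ in total), $2$ labels on the transition edges of each clause's branch ($2m=3n$ in total), and one more label for each XOR-satisfied clause, i.e.\ $|L_{\phi}\setminus L| = 9n+|\tau(\phi)|\geq 9n+k$.

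For ``$\Leftarrow$'', let $L\subseteq L_{\phi}$ satisfy TC with $|L_{\phi}\setminus L|\geq 9n+k$. By Lemma~\ref{lambda-necessary-labels-lem}, every missing label is either one of the two labels of a transition edge (at most one per edge, hence at most $5n$ in all), or a base $P$/$Q$-side label, or one of the two branch-to-$t_p^{x_i}$ labels of a clause; moreover, by part~(f), for each $(i,p)$ the labelling $L$ fully contains $P_{i,p}$ or fully contains $Q_{i,p}$. Pick such a choice $\chi(i,p)\in\{P,Q\}$ for every $(i,p)$. A direct enumeration of these three slack sources gives $|L_{\phi}\setminus L|\le 9n-\mu+\sigma$, where $\mu$ counts the variables for which $\chi(i,1),\chi(i,2),\chi(i,3)$ are not all equal (each such ``mixed'' variable is forced to keep one extra base edge, costing a unit of slack), and $\sigma$ counts the clauses whose two branch-choices $\chi(i,p)$ and $\chi(j,q)$ differ. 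Taking $\tau(x_i)$ to be the majority value of $\chi(i,1),\chi(i,2),\chi(i,3)$, a charging argument gives $\sigma\le |\tau(\phi)|+\mu$: a clause counted by $\sigma$ that is \emph{not} XOR-satisfied by $\tau$ must use the minority branch of one of its two variables, and since a mixed variable has a unique minority branch these clauses inject into the mixed variables. Hence $9n+k\le |L_{\phi}\setminus L|\le 9n-\mu+\sigma\le 9n+|\tau(\phi)|$, so $|\tau(\phi)|\geq k$.

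The main obstacle is the TC verification in ``$\Rightarrow$'': the construction identifies the $p$th branch of $G_{\phi,i}$ with the $q$th branch of $G_{\phi,j}$ while swapping the $u$- and $v$-roles of the two variables, so the ``correct'' label to retain on a transition edge is prescribed by \emph{both} incident variables at once; one must check that the two prescriptions agree and that, after all removals, each of the $10n+1$ vertices can still both reach and be reached from the hub $t_0$, on which the long-range journeys rely. The ``$\Leftarrow$'' direction is then just the displayed slack identity together with the majority/charging lemma, whose only subtle point—injectivity of the charging—is settled by the ``unique minority branch'' observation.
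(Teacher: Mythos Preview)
Your plan is correct, and the forward direction is essentially the paper's own construction with the same bookkeeping (the paper phrases the count as $6n+3k+2(m-k)$, you as $6n+2m+k$; these are equal).

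The backward direction, however, takes a genuinely different route. The paper does not keep the given $L$: it first replaces $L$ by a minimal one, observes that by Lemma~\ref{lambda-necessary-labels-lem}(f) at least two of $\chi(i,1),\chi(i,2),\chi(i,3)$ agree, and then \emph{modifies} $L$ to a new TC-preserving $L'$ with no more labels in which all three agree. Only after this WLOG normalisation does the paper define $\tau$ (directly from the now-consistent $\chi$) and count. Your approach sidesteps the modification entirely: you keep $L$, define $\tau$ by majority vote on $\chi(i,\cdot)$, and absorb the discrepancy via the slack identity $|L_{\phi}\setminus L|\le 9n-\mu+\sigma$ together with the charging $\sigma\le|\tau(\phi)|+\mu$. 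The trade-off is that the paper must re-verify TC after surgery on $L$ (which it only sketches), whereas you must justify the charging; your ``unique minority branch'' observation does exactly this, since each branch of a variable lies in a unique clause and a mixed variable has exactly one minority branch, so the map from $\sigma$-clauses not satisfied by $\tau$ into mixed variables is injective. Both approaches land on the same inequality $9n+k\le 9n+|\tau(\phi)|$, but yours is arguably cleaner in that it never alters $L$ and hence never re-opens the TC question in the converse direction.
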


\begin{proof}
($\Rightarrow $) Assume that there is a truth assignment $\tau $ that
XOR-satisfies $k$ clauses of $\phi $. We construct a labelling $L $ of $%
G_{\phi }$ by removing $9n+k$ labels from $L _{\phi }$, as follows.
First we keep in $L $ all labels of $L _{\phi }$ on the edges
incident to $t_{0}$. Furthermore we keep in $L $ the label~$\{7\}$ of
all the edges $  \{  t_{p}^{x_{i}} ,  t_{q}^{x_{j}}  \}  $ and the label~$\{7\}$ of all the
edges $w_{p}^{x_{i}}w_{q}^{x_{j}}$. Moreover we keep in $L $ the label~
$\{1\}$ of all the edges $  \{  t_{p}^{x_{i}}  ,  w_{p}^{x_{i}}  \}  $. Let now $i=1,2,\ldots
,n$. If $x_{i}=0$ in $\tau $, we keep in $L $ the labels of the edges
of the paths $P_{i,1},P_{i,2},P_{i,3}$, as well as the label~$1$ of the edge 
$  \{  v_{0}^{x_{i}}  ,  w_{0}^{x_{i}}  \}  $ and the label~$2$ of the edge $%
 \{ w_{0}^{x_{i}}  ,  u_{0}^{x_{i}}  \}  $. Otherwise, if $x_{i}=1$ in $\tau $, we keep in $%
L $ the labels of the edges of the paths $Q_{i,1},Q_{i,2},Q_{i,3}$, as
well as the label~$1$ of the edge $  \{  u_{0}^{x_{i}}  ,  w_{0}^{x_{i}}  \}  $ and the label~$2$ of the edge $  \{  w_{0}^{x_{i}}  ,  v_{0}^{x_{i}}  \}  $.

We now continue the labelling $L $ as follows. Consider an arbitrary
clause $\alpha =(x_{i}\oplus x_{j})$ of $\phi $. Assume that the variable $%
x_{i}$ (resp.~$x_{j}$) of the clause $\alpha $ corresponds to the $p$th
(resp.~to the $q$th) appearance of variable $x_{i}$ (resp.~$x_{j}$) in $\phi 
$. Then, by the construction of $G_{\phi }$, the $p$th branch of $G_{\phi
,i} $ coincides with the $q$th branch of $G_{\phi ,j}$, i.e.,~$%
u_{p}^{x_{i}}=v_{q}^{x_{j}}$, $v_{p}^{x_{i}}=u_{q}^{x_{j}}$, $%
w_{p}^{x_{i}}=w_{q}^{x_{j}}$, and $t_{p}^{x_{i}}=t_{q}^{x_{j}}$ (cf.~Figure~\ref{removal-clause-gadget-fig}). Let $\alpha $ be XOR-satisfied in $\tau $,
i.e.,~$x_{i}=\overline{x_{j}}$. If $x_{i}=\overline{x_{j}}=0$ (i.e.,~$x_{i}=0$
and $x_{j}=1$) then we keep in $L $ the label~$1$ of the edge $%
\{  v_{p}^{x_{i}}  ,  w_{p}^{x_{i}}  \}  $ and the label~$2$ of the edge $%
\{  w_{p}^{x_{i}}  ,  u_{p}^{x_{i}}  \}  $, cf.~Figure~\ref{removal-assignment-fig-1}. In
the symmetric case, where $x_{i}=\overline{x_{j}}=1$ (i.e.,~$x_{i}=1$ and $%
x_{j}=0$), we keep in $L $ the label~$1$ of the edge $%
\{  u_{p}^{x_{i}}  ,  w_{p}^{x_{i}}  \}  $ and the label~$2$ of the edge $%
\{  w_{p}^{x_{i}}  ,  v_{p}^{x_{i}}  \}  $. Let now $\alpha $ be XOR-unsatisfied in $\tau $%
, i.e.,~$x_{i}=x_{j}$. Then, in both cases where $x_{i}=x_{j}=0$ and $%
x_{i}=x_{j}=1$, we keep in $L $ the label~$1$ of both edges $%
\{  v_{p}^{x_{i}}  ,  w_{p}^{x_{i}}  \}  $ and $  \{  w_{p}^{x_{i}}  ,  u_{p}^{x_{i}}  \}  $, cf.~Figure~\ref%
{removal-assignment-fig-2}. This finalizes the labelling $L $ of $%
G_{\phi }$. It is easy to check that $L $ satisfies TC on $G_{\phi }$.

\begin{figure}[tbh]
\centering
\begin{subfigure}[t]{.99\linewidth}
\centering \includegraphics[scale=0.6]{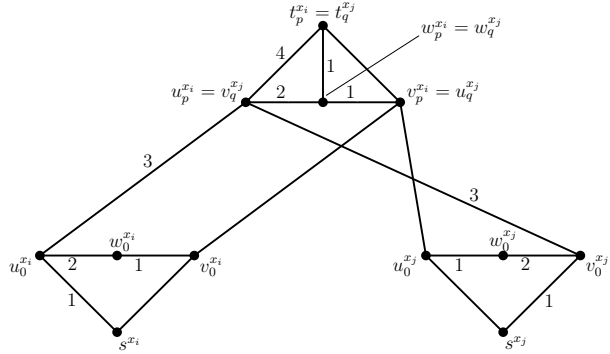} \hspace{0,2cm}
\caption{Case~$x_{i}=\overline{x_{j}}=0$.}
\label{removal-assignment-fig-1}
\end{subfigure}

\begin{subfigure}[b]{.99\linewidth}
\centering \includegraphics[scale=0.6]{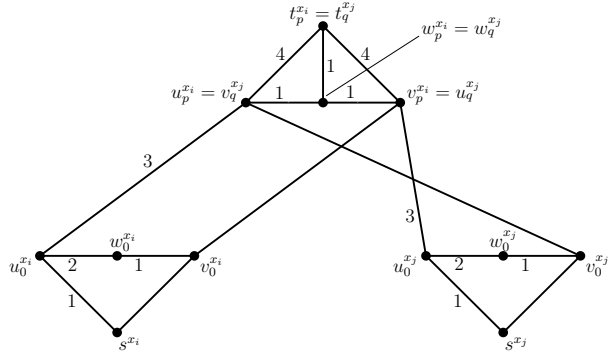}
\caption{Case~$x_{i}=x_{j}=0$.}
\label{removal-assignment-fig-2}
\end{subfigure}

\caption{The labelling $\protect L \subseteq \protect L_{\protect%
\phi}$ of the edges of fig.~\protect\ref{removal-clause-gadget-fig} for
the clause $\protect\alpha =(x_{i}\oplus x_{j})$ of $\protect\phi$.}
\label{removal-assignment-fig}
\end{figure}

Summarizing, the labelling $L $ misses in total $6$ labels of $L _{\phi }$ for the edges $%
\{     \{  s^{x_{i}}  ,  u_{0}^{x_{i}}  \}  ,\   \{  s^{x_{i}}  ,  v_{0}^{x_{i}}  \}  ,\
  \{  u_{0}^{x_{i}}  ,  w_{0}^{x_{i}}  \}  ,\   \{  w_{0}^{x_{i}}  ,  v_{0}^{x_{i}}  \}  ,\
  \{  u_{0}^{x_{i}}  ,  u_{p}^{x_{i}}  \}  ,\   \{  v_{0}^{x_{i}}  ,  v_{p}^{x_{i}}  \}  :1\leq p\leq 3,~ i = 1,2,\ldots ,n \}$.
That is, $L $ misses in total $6n$ labels of $L _{\phi }$ for
all variables $x_{1},x_{2},\ldots ,x_{n}$. For each of the $k$ XOR-satisfied
clauses $(x_{i}\oplus x_{j})$ of $\phi $, the labelling $L $ misses in
total $3$ labels of $L _{\phi }$ for the edges $%
 \{  u_{p}^{x_{i}}  ,  w_{p}^{x_{i}}  \}  ,\   \{  w_{p}^{x_{i}}  ,  v_{p}^{x_{i}}  \}  ,\
  \{  u_{p}^{x_{i}}  ,  t_{p}^{x_{i}}  \}  ,\   \{  v_{p}^{x_{i}}  ,  t_{p}^{x_{i}}  \}  ,\
  \{  t_{p}^{x_{i}}  ,  w_{p}^{x_{i}}  \} $, where $x_{i}$ is associated with the $p$th
branch of~$G_{\phi ,i}$. That is, $L $ misses in total $3k$ labels of $%
L _{\phi }$ for all XOR-satisfied clauses. Furthermore, for each of
the $m-k$ XOR-satisfied clauses $(x_{i}\oplus x_{j})$ of $\phi $, the
labelling $L $ misses in total $2$ labels of $L _{\phi }$ for the
edges $  \{  u_{p}^{x_{i}}  ,  w_{p}^{x_{i}}  \}  ,\   \{  w_{p}^{x_{i}}  ,  v_{p}^{x_{i}}  \}  ,\
  \{  u_{p}^{x_{i}}  ,  t_{p}^{x_{i}}  \}  ,\   \{  v_{p}^{x_{i}}  ,  t_{p}^{x_{i}}  \}  ,\
  \{  t_{p}^{x_{i}}  ,  w_{p}^{x_{i}}  \}  $, where $x_{i}$ is associated with the $p$th
branch of~$G_{\phi ,i}$. That is, $L $ misses in total $2(m-k)=3n-2k$
labels of $L _{\phi }$ for all XOR-satisfied clauses. All other labels
of $L _{\phi }$ remain in the labelling $L \subseteq L
_{\phi }$. Therefore, $L $ misses in total $6n+3k+3n-2k=9n+k$ labels
from $L _{\phi }$.

($\Leftarrow $) Assume that $r(G_\phi,L_\phi )\geq 9n+k$ and let $%
L \subseteq L _{\phi }$ be a TC preserving labelling of $G_{\phi }$ with $%
|L _{\phi }\setminus L |= r(G_\phi, L_\phi) \geq 9n+k$, i.e., $G_\phi (L)$ is minimal. Let ${i\in \{1,2,\ldots ,n\}}
$. For every $p\in \{1,2,3\}$, $L $ contains by Lemma~\ref%
{lambda-necessary-labels-lem}(f) the labels of all edges of the path $P_{i,p}
$ \emph{or} the labels of all edges of the path $Q_{i,p}$. Therefore, there
exist at least two indices $p_{1},p_{2}\in \{1,2,3\}$ such that $L $
contains the labels of all edges of the paths $P_{i,p_{1}},P_{i,p_{2}}$ 
\emph{or} the labels of all edges of the paths $Q_{i,p_{1}},Q_{i,p_{2}}$.
Without loss of generality let $p_{1}=1$ and $p_{2}=2$ and let $L $
contain the labels of all edges of the paths $P_{i,1},P_{i,2}$ (the other
cases can be dealt with in the same way by symmetry). Assume that $L $ also
contains the labels of all edges of the path $%
Q_{i,3}=(s^{x_{i}},v_{0}^{x_{i}},v_{3}^{x_{i}},t_{3}^{x_{i}})$. Then we can
modify the labelling $L $ to a labelling $L ^{\prime }$ as
follows. First remove from $L $ the labels of the edges $%
\{  s^{x_{i}}  ,  v_{0}^{x_{i}}  \}  $ and $  \{  v_{0}^{x_{i}}  ,  v_{3}^{x_{i}}  \}  $ and add instead the
labels of the edges $  \{  u_{0}^{x_{i}}  ,  u_{3}^{x_{i}}  \}  $ and $%
\{  u_{3}^{x_{i}}  ,  t_{3}^{x_{i}}  \}  $ (if they do not exist yet in $L $).
Furthermore change the labels of the transition edges $%
\{  v_{0}^{x_{i}}  ,  w_{0}^{x_{i}}  \}  $ and $  \{  w_{0}^{x_{i}}  ,  u_{0}^{x_{i}}  \}  $ to the labels $1
$ and $2$, respectively. Note that in the resulting labelling $L
^{\prime }$, both edges $  \{  u_{3}^{x_{i}}  ,  t_{3}^{x_{i}}  \}  $ and $%
\{  v_{3}^{x_{i}}  ,  t_{3}^{x_{i}}  \}  $ are labelled. Furthermore $L ^{\prime
}\subseteq L _{\phi }$ and $L ^{\prime }$ does not have more
labels than $L $, and thus $|L _{\phi }\setminus L
^{\prime }|\geq |L _{\phi }\setminus L |=r(G_\phi,L_\phi )$.
Moreover, it is easy to check that $L ^{\prime }$ still satisfies TC on $G_{\phi }$, as $L $ satisfies TC as
well. So, it must also be $|L_\phi \setminus L ^{\prime }| = r(G_\phi, L_\phi)$, i.e., $G_\phi(L ^{\prime })$ is also minimal. Therefore, we may assume without loss of generality that for any
minimal labelling $L \subseteq L _{\phi }$, $L $ contains
the labels of all edges of the paths $P_{i,1},P_{i,2},P_{i,3}$ \emph{or} the
labels of all edges of the paths $Q_{i,1},Q_{i,2},Q_{i,3}$. 

From Lemma~\ref{lambda-necessary-labels-lem}(a), $L$ contains at least $2n+2m$ labels on the edges of the form $\{u_p^{x_i}, w_p^{x_i}\}$ or $\{w_p^{x_i}, v_p^{x_i}\}$, since there are exactly $2n$ transition edges on the different bases of $G_\phi$ and $2m$ transition edges on the different branches of $G_\phi$. From Lemma~\ref{lambda-necessary-labels-lem}(b), $L$ contains $m$ additional labels, one for each branch, more specifically for the respective edge $\{t_p ^{x_i}, w_p ^{x_i} \}$ of the branch. From Lemma~\ref{lambda-necessary-labels-lem}(c), $L$ contains $\frac{m(m-1)}{2}$ extra labels among the vertices $\{t_p^{x_i}: 1 \leq i \leq n, 1 \leq p \leq 3\}$.
From Lemma~\ref{lambda-necessary-labels-lem}(d), $L$ also contains $\frac{(n+m) (n+m-1)}{2}$ additional labels among the vertices $\{w_p^{x_i}: 1\leq i \leq n, 0\leq p \leq 3 \}$. From Lemma~\ref{lambda-necessary-labels-lem}(e), $L$ also contains $\frac{15}{2}n+1$ labels on the edges incident to $t_0$. Finally, from Lemma~\ref{lambda-necessary-labels-lem}(f), $L$ contains at least $4n+m$ additional labels: for each $G_{\phi,i}$, $L$ contains at least $4$ labels, namely one label on the base edge $\{s^{x_i}, u_0^{x_i}\}$ or on the base edge $\{s^{x_i}, v_0^{x_i}\}$ and, for every $p\in \{1,2,3\}$, one label on the edge $\{u_0^{x_i},u_p^{x_i}\}$ or on the edge $\{v_0^{x_i},v_p^{x_i}\}$; also, for each branch of $G_\phi$, $L$ contains at least $1$ label, namely a label on the edge $\{u_p^{x_j}, t_p^{x_j}\}$ or on the edge $\{v_p^{x_j}, t_p^{x_j}\}$, for some $p\in \{1,2,3\}$ and $j \in \{1,2,\ldots, m\}$.

Notice that all the labels of $L$ mentioned above are on different edges, so no subset of labels has been accounted for more than once. Therefore, since $m=\frac{3n}{2}$, $L$ contains at least:
\begin{equation}
c(L) \geq \frac{17}{4} n^2 +17n +\frac{n}{2} +1
\label{eq-1}
\end{equation}
labels.

Now we construct from the labelling $L \subseteq L _{\phi }$ a
truth assignment $\tau $ for the formula $\phi $ as follows. For every $i\in
\{1,2,\ldots ,n\}$, if $L $ contains the labels of all edges of the
paths $P_{i,1},P_{i,2},P_{i,3}$, then we define $x_{i}=0$ in $\tau $.
Otherwise, if $L $ contains the labels of all edges of the paths $%
Q_{i,1},Q_{i,2},Q_{i,3}$, then we define $x_{i}=1$ in $\tau $. We will prove
that $|\tau (\phi )|\geq k$, i.e.,~that $\tau $ XOR-satisfies at least $k$
clauses of the formula $\phi $.

Let $\alpha =(x_{i}\oplus x_{j})$, where ${i,j\in \{1,2,\ldots ,n\}}$, be a clause of $\phi $ that is \emph{not} XOR-satisfied by $\tau$ in $\phi$.  Let $x_{i}$ (resp. $x_{j}$) be associated with the $%
p$th (resp. $q$th) branch of~$G_{\phi ,i}$ (resp. of~$G_{\phi ,j}$).
Since $\alpha$ is \emph{not} XOR-satisfied,
either $x_{i}=x_{j}=0$ or $x_{i}=x_{j}=1$ in $\tau $. If $x_{i}=x_{j}=0$ in $\tau $, it follows by the definition of
the assignment $\tau $ that the labelling $L $ contains the labels of
all edges of the path $P_{i,p}$ \emph{and} of the path $P_{j,q}$. Therefore, the $p^{th}$ branch of $G_{\phi,i}$, which is identified with the $q^{th}$ branch of $G_{\phi,j}$, has both edges $\{t_p ^{x_i}, u_p ^{x_i} \} \equiv \{t_q ^{x_j}, v_q ^{x_j} \}$ and $\{t_p ^{x_i}, v_p ^{x_i} \} \equiv \{t_q ^{x_j}, u_q ^{x_j} \}$ labelled under $L$, with one label each. The same holds if $x_{i}=x_{j}=1$, where all edges of both paths $Q_{i,p}$ \emph{and} $Q_{j,q}$ are labelled. So, for all the branches of $G_\phi$ that correspond to non-satisfied clauses of $\phi$ by the truth assignment $\tau$, $L$ contains an additional label (to the ones accounted for by using the result of Lemma~\ref{lambda-necessary-labels-lem}(f)). The number of clauses that are not satisfied by $\tau$ in $\phi$ is exactly $m-|\tau(\phi)|= \frac{3}{2} n - |\tau(\phi)|$.

Thus, it follows by Equation~(\ref{eq-1}), by adding the extra $\frac{3}{2} n - |\tau(\phi)|$, that $L$ contains in total at least:
\begin{eqnarray*}
c(L) &\geq& \frac{17}{4} n^2 +17n +\frac{n}{2} +1 + (\frac{3n}{2} - |\tau(\phi)|) \\
     &  = & \frac{17}{4} n^2 +19n +1 - |\tau(\phi)|
\end{eqnarray*}
labels.

Recall now that we have already shown in Lemma~\ref{total-number-labels-lem} that $L_\phi$ has a total of $\frac{17}{4} n^2 +28n +1$ labels. Therefore, we have:
\[|L _{\phi }\setminus L | = c(L_\phi) - c(L) \leq 9n+|\tau (\phi )|.\]

However, by our initial assumption:
\[|L _{\phi }\setminus L | =r(G_\phi, L_\phi) \geq 9n+k.\]

Therefore $9n+k\leq |L _{\phi }\setminus L |\leq
9n+|\tau (\phi )|$, and thus $|\tau (\phi )|\geq k$, i.e.,~the truth
assignment $\tau $ satisfies at least $k$ clauses of $\phi $. This completes
the proof of the theorem.
\end{proof}

The next corollary follows immediately by Theorem \ref%
{cost-removing-labels-upper-lower-bound-thm}.

\begin{corollary}
\label{cost-removing-labels-cor}Let OPT$_{\text{mon-Max-XOR}(3)}(\phi )$ the
greatest number of clauses that can be simultaneously XOR-satisfied by a
truth assignment of $\phi $. Then $r(G_{\phi },L_{\phi })=9n+$OPT$_{\text{%
mon-Max-XOR}(3)}(\phi )$.
\end{corollary}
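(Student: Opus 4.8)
The plan is to obtain the corollary as an immediate consequence of Theorem~\ref{cost-removing-labels-upper-lower-bound-thm}; all the combinatorial content already lives in that theorem, and what remains is only to convert its biconditional into an exact equality. Throughout, write $k^{\ast}$ for OPT$_{\text{mon-Max-XOR}(3)}(\phi)$.

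First I would establish $r(G_{\phi},L_{\phi}) \geq 9n + k^{\ast}$. By definition of $k^{\ast}$ there is a truth assignment $\tau$ of $\phi$ with $|\tau(\phi)| = k^{\ast} \geq k^{\ast}$, so applying the ($\Rightarrow$) direction of Theorem~\ref{cost-removing-labels-upper-lower-bound-thm} with $k = k^{\ast}$ yields a TC-preserving sub-labelling $L \subseteq L_{\phi}$ of $G_{\phi}$ with $|L_{\phi} \setminus L| \geq 9n + k^{\ast}$. Since the removal profit is by definition the largest number of labels one can delete while keeping TC, this gives $r(G_{\phi},L_{\phi}) \geq |L_{\phi} \setminus L| \geq 9n + k^{\ast}$.

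Next I would establish the reverse inequality $r(G_{\phi},L_{\phi}) \leq 9n + k^{\ast}$. As the removal profit is a maximum, it is attained by some TC-preserving $L \subseteq L_{\phi}$ with $|L_{\phi} \setminus L| = r(G_{\phi},L_{\phi})$; put $k := r(G_{\phi},L_{\phi}) - 9n$, which is non-negative by the previous step. Then $|L_{\phi}\setminus L| \geq 9n + k$, so the ($\Leftarrow$) direction of Theorem~\ref{cost-removing-labels-upper-lower-bound-thm} hands us a truth assignment $\tau$ with $|\tau(\phi)| \geq k$, and therefore $k^{\ast} \geq k = r(G_{\phi},L_{\phi}) - 9n$, i.e. $r(G_{\phi},L_{\phi}) \leq 9n + k^{\ast}$. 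Combining the two inequalities proves $r(G_{\phi},L_{\phi}) = 9n + k^{\ast}$.

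The only point that needs any care — and it is not really an obstacle — is the bookkeeping around the quantifier ``$\geq$'': Theorem~\ref{cost-removing-labels-upper-lower-bound-thm} is phrased with ``$\geq k$'' on both sides, and it is precisely this uniformity, together with the fact that $r(G_{\phi},L_{\phi})$ is realised by an actual sub-labelling, that lets the equality fall out by a two-sided squeeze. In fact the whole argument collapses to: apply Theorem~\ref{cost-removing-labels-upper-lower-bound-thm} once with $k = k^{\ast}$ and once with $k = r(G_{\phi},L_{\phi}) - 9n$.
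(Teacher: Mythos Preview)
Your proof is correct and follows essentially the same two-sided squeeze as the paper: apply the ($\Rightarrow$) direction of Theorem~\ref{cost-removing-labels-upper-lower-bound-thm} at $k=k^{\ast}$ to get $r(G_{\phi},L_{\phi})\geq 9n+k^{\ast}$, then apply the ($\Leftarrow$) direction to an optimal sub-labelling realising $r(G_{\phi},L_{\phi})$ to get the reverse inequality. The paper's argument is identical in structure and content.
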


\begin{proof}
Let $\tau $ be a truth assignment that satisfies $k=$OPT$_{\text{mon-Max-XOR}%
(3)}(\phi )$ clauses of $\phi $. Then there exists by Theorem \ref%
{cost-removing-labels-upper-lower-bound-thm} a TC satisfying labelling $L\subseteq L_{\phi }
$ of $G_{\phi }$ such that $|L_{\phi }\setminus L|\geq 9n+k$. Thus, since $%
r(G_{\phi },L_{\phi })\geq |L_{\phi }\setminus L|$, it follows that $%
r(G_{\phi },L_{\phi })\geq 9n+$OPT$_{\text{mon-Max-XOR}(3)}(\phi )$.
Conversely, let $L\subseteq L_{\phi }$ be a labelling of $G_{\phi }$ such
that $|L_{\phi }\setminus L|=r(G_{\phi },L_{\phi })$. Then there exists by
Theorem \ref{cost-removing-labels-upper-lower-bound-thm} a truth assignment $%
\tau $ that satisfies at least $r(G_{\phi },L_{\phi })-9n$ clauses of $\phi $%
. Thus OPT$_{\text{mon-Max-XOR}(3)}(\phi )\geq r(G_{\phi },L_{\phi })-9n$,
which completes the proof.
\end{proof}

Using Theorem~\ref{cost-removing-labels-upper-lower-bound-thm} and Corollary~\ref{cost-removing-labels-cor}, we are now
ready to prove the main theorem of this section.

\begin{theorem}
\label{cost-removing-labels-APX-hard-thm}The problem of computing $r(G,L)$ on an undirected temporally connected graph $G(L)$ is APX-hard.
\end{theorem}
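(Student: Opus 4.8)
The plan is to observe that the reduction $\phi \mapsto (G_{\phi},L_{\phi})$ constructed in this section is an \emph{L-reduction} (a PTAS-preserving reduction) from the monotone Max-XOR($3$) problem to the problem of computing the removal profit, and then to invoke Lemma~\ref{Max_XOR-3-hard-lem} together with the standard fact that L-reductions preserve the non-existence of a PTAS. We regard the removal-profit problem as the maximization problem \textsc{MaxRemoval}: given a temporally connected $G(L)$, find a maximum-cardinality set of labels whose deletion from $L$ keeps property TC; its optimum value on $(G,L)$ is $r(G,L)$. Throughout, write $\mathrm{OPT}(\phi)$ for the maximum number of clauses of $\phi$ that can be XOR-satisfied. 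The instance map is $f\colon \phi \mapsto (G_{\phi},L_{\phi})$, which is polynomial-time and, by Lemma~\ref{lambda-reachabilities-preserve-lem}, outputs a legal (temporally connected) \textsc{MaxRemoval} instance.

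First I would verify the ``optima are linearly related'' condition. Corollary~\ref{cost-removing-labels-cor} already gives $r(G_{\phi},L_{\phi}) = 9n + \mathrm{OPT}(\phi)$, so it only remains to bound $n$ linearly in $\mathrm{OPT}(\phi)$. Since $\phi$ is a monotone XOR($3$) formula on $n$ variables it has exactly $m = \frac{3}{2}n$ clauses, each involving two distinct variables; a uniformly random truth assignment XOR-satisfies each clause with probability $\frac{1}{2}$, so by averaging $\mathrm{OPT}(\phi) \geq m/2 = \frac{3}{4}n$, i.e.\ $n \leq \frac{4}{3}\mathrm{OPT}(\phi)$. Substituting, $r(G_{\phi},L_{\phi}) = 9n + \mathrm{OPT}(\phi) \leq 12\,\mathrm{OPT}(\phi) + \mathrm{OPT}(\phi) = 13\,\mathrm{OPT}(\phi)$, so this condition holds with $\alpha = 13$.

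Next I would build the solution map $g$ and check the second L-reduction inequality, reusing the $(\Leftarrow)$ direction of Theorem~\ref{cost-removing-labels-upper-lower-bound-thm}. Given \emph{any} TC-preserving $L \subseteq L_{\phi}$, the argument there first rewrites $L$ into a labelling $L' \subseteq L_{\phi}$ that, for each variable index $i$, keeps all edges of $P_{i,1},P_{i,2},P_{i,3}$ or all edges of $Q_{i,1},Q_{i,2},Q_{i,3}$, and satisfies $|L_{\phi}\setminus L'| \geq |L_{\phi}\setminus L|$; this rewriting is explicit, polynomial-time, and never decreases the number of removed labels. Define $\tau = g(L)$ by setting $x_i = 0$ exactly when $L'$ keeps the paths $P_{i,*}$, and $x_i = 1$ otherwise. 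The counting in the same proof (through Lemmas~\ref{total-number-labels-lem} and~\ref{lambda-necessary-labels-lem}) gives $|L_{\phi}\setminus L| \leq |L_{\phi}\setminus L'| \leq 9n + |\tau(\phi)|$, hence $|\tau(\phi)| \geq |L_{\phi}\setminus L| - 9n$. Therefore
\[
\mathrm{OPT}(\phi) - |\tau(\phi)| \;\leq\; \big( 9n + \mathrm{OPT}(\phi) \big) - |L_{\phi}\setminus L| \;=\; r(G_{\phi},L_{\phi}) - |L_{\phi}\setminus L| ,
\]
which is precisely the second L-reduction inequality with constant $\beta = 1$. Hence $(f,g,\alpha,\beta) = (f,g,13,1)$ is an L-reduction from monotone Max-XOR($3$) to \textsc{MaxRemoval}. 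Since monotone Max-XOR($3$) is APX-hard (Lemma~\ref{Max_XOR-3-hard-lem}) and in particular admits no PTAS unless $P=NP$, the same follows for computing $r(G,L)$; and since monotone Max-XOR($3$) is itself in APX (a constant-factor-approximable bounded-degree Max-Cut variant), the L-reduction upgrades to an AP-reduction and formally witnesses APX-hardness of computing $r(G,L)$ on undirected temporally connected graphs.

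The hard part is a subtlety in the backward direction rather than a new estimate: the proof of Theorem~\ref{cost-removing-labels-upper-lower-bound-thm} is phrased starting from an \emph{optimal} (minimal) TC-preserving sublabelling, whereas an L-reduction requires $g$ to return a feasible truth assignment from an \emph{arbitrary} feasible sublabelling, with the loss bounded additively. The point to nail down is that the path-swapping normalization $L \to L'$ used there is polynomial-time and monotone in $|L_{\phi}\setminus L|$, so it legitimately applies to any TC-preserving $L$ and preserves the inequality $|\tau(\phi)| \geq |L_{\phi}\setminus L| - 9n$; once that is settled, everything else is bookkeeping already contained in Theorem~\ref{cost-removing-labels-upper-lower-bound-thm} and Corollary~\ref{cost-removing-labels-cor}. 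The only genuinely new ingredient is the crude lower bound $\mathrm{OPT}(\phi) \geq \frac{3}{4}n$ needed to pin down the constant $\alpha$.
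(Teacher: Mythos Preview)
Your proposal is correct and follows essentially the same approach as the paper: both establish an L-reduction from monotone Max-XOR($3$) with constants $\alpha=13$ (via the random-assignment bound $\mathrm{OPT}(\phi)\geq \tfrac{3}{4}n$ combined with Corollary~\ref{cost-removing-labels-cor}) and $\beta=1$ (via the $(\Leftarrow)$-construction of Theorem~\ref{cost-removing-labels-upper-lower-bound-thm}). Your extra remark that the normalization $L\to L'$ applies to an \emph{arbitrary} TC-preserving sublabelling, not only an optimal one, is a valid point that the paper glosses over, but it does not change the argument.
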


\begin{proof}
Denote by OPT$_{\text{mon-Max-XOR}(3)}(\phi )$ the greatest number of
clauses that can be simultaneously XOR-satisfied by a truth assignment of $%
\phi $. The proof is done by an \emph{L-reduction} \cite{papadimitriou91}\
from the monotone Max-XOR(3) problem, i.e. by an approximation preserving
reduction which linearly preserves approximability features. For such a
reduction, it suffices to provide a polynomial-time computable function $g$
and two constants $\alpha ,\beta >0$ such that:

\begin{itemize}
\item $r(G_{\phi },L_{\phi })\leq \alpha \cdot $OPT$_{\text{mon-Max-XOR}(3)}(\phi )$, for any monotone XOR(3) formula~$\phi$, and

\item for any TC satisfying labelling $L\subseteq L_{\phi }$ of $G_{\phi }$, $g(L)$ is a
truth assignment for $\phi $ and OPT$_{\text{mon-Max-XOR}(3)}(\phi
)-|g(L)|\leq \beta \cdot (r(G_{\phi },L_{\phi })-|L_{\phi }\setminus L|)$,
where $|g(L)|$ is the number of clauses of $\phi $ that are satisfied by $%
g(L)$.
\end{itemize}

We will prove the first condition for $\alpha =13$. Note that a random truth
assignment XOR-satisfies each clause of $\phi $ with probability $\frac{1}{2}
$, and thus there exists an assignment $\tau $ that XOR-satisfies at least $%
\frac{m}{2}$ clauses of $\phi $. Therefore OPT$_{\text{mon-Max-XOR}(3)}(\phi
)\geq \frac{m}{2}=\frac{3}{4}n$, and thus $n\leq \frac{4}{3}$OPT$_{\text{%
mon-Max-XOR}(3)}(\phi )$. Now Corollary~\ref{cost-removing-labels-cor}
implies that:
\begin{eqnarray}
r(G_{\phi },L_{\phi }) &=&9n+\text{OPT}_{\text{mon-Max-XOR}(3)}(\phi ) 
\notag \\
&\leq &9\cdot \frac{4}{3}\text{OPT}_{\text{mon-Max-XOR}(3)}(\phi )+\text{OPT}%
_{\text{mon-Max-XOR}(3)}(\phi )  \label{apx-hardness-eq-1} \\
&=&13\cdot \text{OPT}_{\text{mon-Max-XOR}(3)}(\phi )  \notag
\end{eqnarray}

To prove the second condition for $\beta =1$, consider an arbitrary labelling 
$L\subseteq L_{\phi }$ of $G_{\phi }$. As described in the ($\Leftarrow $%
)-part of the proof of Theorem \ref%
{cost-removing-labels-upper-lower-bound-thm},\ we construct in polynomial
time a truth assignment $g(L)=\tau $ that satisfies at least $|L_{\phi
}\setminus L|-9n$ clauses of $\phi $, i.e. $|g(L)|=|\tau (\phi )|\geq
|L_{\phi }\setminus L|-9n$. Then:
\begin{eqnarray}
OPT_{\text{mon-Max-XOR}(3)}(\phi )-|g(L)| &\leq &OPT_{\text{mon-Max-XOR}%
(3)}(\phi )-|L_{\phi }\setminus L|+9n  \notag \\
&=&r(G_{\phi },L_{\phi })-9n-|L_{\phi }\setminus L|+9n
\label{apx-hardness-eq-2} \\
&=&r(G_{\phi },L_{\phi })-|L_{\phi }\setminus L|  \notag
\end{eqnarray}

This completes the proof of the Theorem. 
\end{proof}

\begin{note*}
In fact, we have also shown (Theorem~\ref{cost-removing-labels-upper-lower-bound-thm}) that the problem of computing the removal profit is NP-hard in the strong sense, since all numbers used in the reduction are constant integers.
\end{note*}

\subsection{Temporally connected random labellings have high removal profit}\label{sec:random_labels_minimal}

In this section, we show that dense graphs with random labels have the property TC and have a very high removal profit asymptotically almost surely.
More specifically, we consider the complete graph and the Erd\"os-Renyi model of random graphs, $G_{n,p}$ and we examine whether we can delete labels from such temporal graphs and continue preserving TC.

The (single-labelled) model of temporal graphs that we consider here is that of \emph{uniform random temporal graphs}~\cite{akrida-jpdc}.
\begin{definition}\label{def:random}\hspace{-0,01cm}\protect\cite{akrida-jpdc}
A \emph{uniform random temporal graph} is a graph $G$ on $n$ vertices, $n \in \mathbb{N}$, each edge of which receives exactly one label uniformly at random from a set $\{1,2,\ldots, \alpha\},~\alpha \in \mathbb{N}^*$ and the selection of the label of an edge is independent from the selection of the label of any other edge.
\end{definition}

\subsubsection{High removal profit in the complete graph}

\begin{theorem}\label{thm:random_complete}
In the uniform random temporal graph where the underlying graph $G$ is the complete graph (clique) of $n$ vertices and $\alpha \geq 4$, we can delete all but $\Theta(n\log{n} + \log^2{n})$ labels without violating TC, with probability at least $1-\frac{1}{n^2}$.
\end{theorem}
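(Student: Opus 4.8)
The plan is to exhibit, for almost every random labelling, a cheap TC-preserving sub-labelling built around a small random ``relay'' set. Fix a set $R\subseteq V$ of $t=\lceil c\log n\rceil$ vertices, where $c>0$ is a constant to be chosen later, and let $L'$ be the restriction of $L$ to all edges of $K_n$ having at least one endpoint in $R$. There are $t(n-t)+\binom{t}{2}$ such edges, so $c(L')=t(n-t)+\binom{t}{2}=\Theta(n\log n+\log^{2}n)$, which is the claimed amount of retained labels; it then remains to show that $G(L')$ satisfies TC with probability at least $1-\tfrac{1}{n^{2}}$.

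Since $\alpha\geq 4$, partition $\{1,\dots,\alpha\}$ into four consecutive, non-empty blocks $B_{1},B_{2},B_{3},B_{4}$, each of size at least $\lfloor\alpha/4\rfloor\geq 1$; every edge independently receives its label in $B_{j}$ with probability $p_{j}=|B_{j}|/\alpha\geq c_{0}$ for some universal constant $c_{0}>0$ (this is where $\alpha\geq 4$ is used, and one should check it also holds for small $\alpha$). I would then consider the ``bad'' events: (a) some $v\notin R$ has no edge to $R$ with label in $B_{1}$; (b) some $v\notin R$ has no edge to $R$ with label in $B_{4}$; (c) some ordered pair of distinct relays $w_{1},w_{2}\in R$ admits no $w_{3}\in R\setminus\{w_{1},w_{2}\}$ with $L_{\{w_{1},w_{3}\}}\subseteq B_{2}$ and $L_{\{w_{3},w_{2}\}}\subseteq B_{3}$. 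For a fixed $v$ the $t$ relays supply $t$ edge-disjoint chances in (a) and in (b), so each has probability at most $(1-c_{0})^{t}$; for a fixed pair $(w_{1},w_{2})$ the $t-2$ candidates $w_{3}$ use pairwise disjoint pairs of edges, so (c) has probability at most $(1-c_{0}^{2})^{t-2}$. A union bound over the at most $2n$ vertices and at most $t^{2}$ relay pairs bounds the total failure probability by $2n(1-c_{0})^{t}+t^{2}(1-c_{0}^{2})^{t-2}$, and choosing $c$ large enough makes this at most $1/n^{2}$.

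Conditioning on none of the bad events, I would verify TC by producing, for every ordered pair of distinct vertices $x,y$, an $(x,y)$-journey in $G(L')$ whose successive edges lie in blocks of strictly increasing index, hence have strictly increasing labels. For $x,y\notin R$: move from $x$ to a relay $w_{1}$ along an edge with label in $B_{1}$ (exists by $\neg$(a) for $x$), then through $R$ to a relay $w_{4}$ with $L_{\{y,w_{4}\}}\subseteq B_{4}$ (exists by $\neg$(b) for $y$) using at most one edge in $B_{2}$ and one in $B_{3}$ (by $\neg$(c), or directly if $w_{1}=w_{4}$), and finally from $w_{4}$ to $y$ with label in $B_{4}$. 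The mixed cases ($x\in R$, $y\notin R$ and $x\notin R$, $y\in R$) and the case $x,y\in R$ are obtained by truncating this journey appropriately, again invoking $\neg$(c) for the intra-$R$ segment. In every case the journey uses at most one edge per block in increasing block order, so it is a valid journey, and therefore $G(L')$ is temporally connected; combining with the cost count gives the theorem.

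I expect the only real subtlety to be bookkeeping rather than ideas: choosing the block sizes so that $p_{j}$ is bounded below by an absolute constant for \emph{every} $\alpha\geq 4$ (the small-$\alpha$ cases), confirming the mutual independence used for the $(1-c_{0})^{t}$ and $(1-c_{0}^{2})^{t-2}$ bounds (the relevant edge sets really are disjoint), and tuning $c$ against the logarithm base so the union bound clears $1/n^{2}$. The journey case analysis is routine once the four-block structure is fixed, and since nothing in the argument uses an upper bound on $\alpha$, it goes through verbatim for all $\alpha\geq 4$.
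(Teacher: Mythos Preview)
Your proposal is correct and follows the same high-level strategy as the paper: pick a relay set $R$ of size $\Theta(\log n)$, keep only labels on edges touching $R$, partition the label range into four consecutive blocks, and use a union bound to show that block-monotone journeys exist between all pairs with probability $\geq 1-1/n^{2}$.

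Where you and the paper differ is in the internal organisation of the probabilistic argument. The paper equips $R$ with extra structure---a designated centre $v_{0}$ and a split $V_{\text{in}}\cup V_{\text{out}}$---and, for each ordered pair $(w,w')$ outside $R$, bounds the probability that none of $\gamma\log n$ edge-disjoint length-$4$ paths $w\to v_{\text{in}}\to v_{0}\to v_{\text{out}}\to w'$ receives labels in $A_{1},A_{2},A_{3},A_{4}$ in order; the remaining cases (one or both endpoints in $R$) are dispatched by direct edges. You instead leave $R$ unstructured and factor the failure into three global bad events (no $B_{1}$-entry edge, no $B_{4}$-exit edge, no $B_{2}$--$B_{3}$ two-hop inside $R$), then assemble journeys from these pieces uniformly across all four endpoint cases. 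Your decomposition is a bit more modular and avoids the centre vertex; the paper's per-pair path-counting is more pictorial and makes the edge-disjointness explicit. Both yield the same $\Theta(n\log n+\log^{2}n)$ label count and the same kind of constant tuning against $c_{0}=1/7$.
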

\begin{proof}
First, note that any set $\{1,2, \ldots, \alpha\}$ of $\alpha$ consecutive natural numbers can be partitioned into $4$ disjoint almost equal subsets of consecutive numbers, $A_1,A_2,A_3,A_4$. Indeed, let $\alpha= 4k+v$, where $k\in \mathbb{N}$ and $v \in \{1,2,3,4\}$.

For $v=0$, we use $A_1=\{1,\ldots,k\}, A_2=\{k+1, \ldots, 2k\}, A_3=\{2k+1, \ldots, 3k\}, A_4= \{3k+1, \ldots, 4k\}$.

For $v=1$, we use $A_1=\{1,\ldots,k\}, A_2=\{k+1, \ldots, 2k\}, A_3=\{2k+1, \ldots, 3k\}, A_4= \{3k+1, \ldots, 4k+1\}$.

For $v=2$, we use $A_1=\{1,\ldots,k\}, A_2=\{k+1, \ldots, 2k\}, A_3=\{2k+1, \ldots, 3k+1\}, A_4= \{3k+2, \ldots, 4k+2\}$.

For $v=3$, we use $A_1=\{1,\ldots,k\}, A_2=\{k+1, \ldots, 2k+1\}, A_3=\{2k+2, \ldots, 3k+2\}, A_4= \{3k+3, \ldots, 4k+3\}$.

In any of the above four cases, each particular edge of the clique $K_n$ receives a single random label $l$, with:
\[Pr[l \in A_i] \geq \frac{k}{4k+3},~ \forall i=1,2,3,4\]
Since $k\geq 1$ (because $\alpha\geq 4$), we have $\frac{k}{4k+3} \geq \frac{1}{7}$. So, we get the following Lemma:
\begin{lemma}\label{lem:clique_set_partition}
For each particular edge $e$ of $K_n$ and for the label $l$ that it receives, it holds that $Prob[l \in A_i] \geq \frac{k}{4k+3},~ \forall i=1,2,3,4$.
\end{lemma}
Now, colour \emph{green}($g$), \emph{yellow}($y$), \emph{blue}($b$) and \emph{red}($r$) the edges that are assigned a label in $A_1$, $A_2$, $A_3$ and $A_4$ respectively.

\begin{definition}
A temporal router (cf.~Figure~\ref{fig:router_complete}) of a clique $G=K_n=(V,E)$ is a subgraph $R=(V_R,E_R)$ of $G$, with $2\gamma \log{n}+1$ vertices, $\gamma$ being a constant such that $\gamma \geq 4 \cdot \frac{1}{\log_2{\frac{2500}{2499}}}$, with the following properties (all logarithms are with base $2$ here):
\begin{enumerate}[label=\alph*)]
\item $V_R$ is the union of a particular vertex $v_0$ (called the \emph{centre} of $R$) and two equisized vertex sets $V_{in}$ and $V_{out}$, each of $\gamma \log{n}$ vertices, and
\item $R$ is the induced subgraph of $G$ formed from $V_R$ (so it is a clique itself).
\end{enumerate}
\begin{figure}[!htb]
\centering
\includegraphics[width=0.45\textwidth]{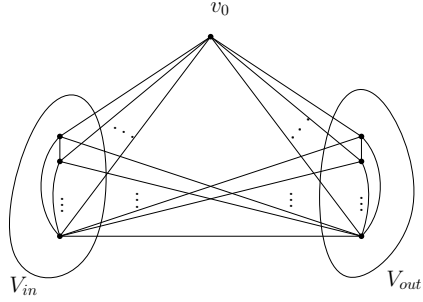}
\caption{Temporal router of a clique}
\label{fig:router_complete}
\end{figure}
Note that $R$ has $|E_R|=2 \gamma \log{n} + \frac{(2 \gamma \log{n})\cdot(2 \gamma \log{n}-1)}{2}$ edges.
\end{definition}

Let $w, w'$ be any two vertices of the clique that are not in $V_R$. We consider the edges connecting $w$ to $V_{in}$ and the edges connecting $w'$ to $V_{out}$; using those edges and the edges of $R$, there are $\gamma \log{n}$ edge-disjoint paths of length $4$ (each) connecting $w$ and $w'$. Let us call those paths \emph{special paths} and note that every such path uses edges of the form $\{w,v_{in}\}, \{v_{in}, v_0\}, \{v_0, v_{out}\}, \{v_{out}, w'\}$, where $v_{in} \in V_{in}$ and $v_{out} \in V_{out}$ (cf.~Figure~\ref{fig:special_path}).
\begin{figure}[!htb]
\centering
\includegraphics[width=0.7\textwidth]{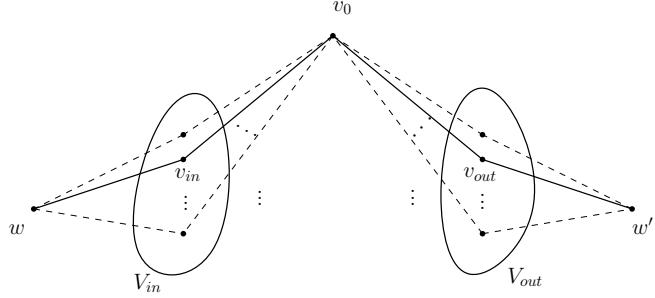}
\caption{A special path connecting $w$ and $w'$.}
\label{fig:special_path}
\end{figure}

Each special path $P=(w,v_{in},v_0,v_{out},w')$ connecting $w$ and $w'$ becomes a $(w,w')$-journey if the label $l_1$ of $\{w,v_{in}\}$ is in $A_1$, the label $l_2$ of $\{v_{in}, v_0\}$ is in $A_2$, the label $l_3$ of $\{v_0, v_{out}\}$ is in $A_3$, and the label $l_4$ of $\{v_{out}, w'\}$ is in $A_4$. Then, the probability that $P$ is a journey is at least $\left( \frac{1}{7} \right)^4$, due to independence of the labels' selection.

Since all special paths that connect $w$ and $w'$ are edge-disjoint, the probability that none of them is a $(w,w')$-journey is:
\begin{eqnarray*}
Pr[\text{no special path is a }(w,w') \text{-journey}] = \left( 1- \frac{1}{7^4} \right)^{\gamma \log{n}} &<& \left( 1- \frac{1}{2500} \right)^{\gamma \log{n}} \\
 &=&  n^{-\gamma \log{\frac{2500}{2499}}}.
\end{eqnarray*}

Therefore, we have:
\begin{lemma}\label{lem:no_journey}
For any two particular vertices $w,w'$ of $V\setminus V_R$, the probability that there is a special path $P$ from $w$ to $w'$ that is a $(w,w')$-journey is at least $1-n^{-\gamma \log_2{\frac{2500}{2499}}}$.
\end{lemma}

Now, we consider only the edges and labels of $R$ and, for each $w\in V\setminus V_R$, we consider only the edges connecting $w$ to each vertex of $R$; the sparsified graph $G'=(V,E')$ has, thus, $|E'|= \frac{(2\gamma \log{n} +1) \cdot 2 \gamma \log{n}}{2} +\big( n-(2\gamma \log{n}+1) \big) \cdot (2\gamma \log{n} +1)=\Theta(n\log{n}+ \log^2{n})$ edges. We will show that we need only consider the edges (and labels) of $G'$ to maintain $TC$ in $G$, i.e., that $G'$ itself is temporally connected, with probability at least $1-\frac{1}{n^2}$.

Consider any pair, $w,w'$, of vertices of the uniform random temporal graph on $K_n$ and a temporal router $R$. Also, consider the graph $G'$ as described above, with the labelling implied by the uniform random labelling on the clique. If $w,w'\in V_R$, then they are directly connected via a labelled edge in $G'$ and thus a journey exists both ways between them. If $w \in V_R$ and $w \in V\setminus V_R$, then again there is a direct labelled edge in $G'$ connecting $w$ and $w'$, so there is a journey between them either way.

It remains to examine the existence in $G'$ of journeys between pairs of vertices $w, w',~w\not= w'$, none of which is in $V_R$; there are at most $n^2$ such pairs of vertices. Under the random labelling on $G$, let $\mathcal{E}_1$ be the event that there exists a pair $w, w' \in V\setminus V_R$ such that there is no $(w,w')$-journey via a special path through $R$. Also, let $\mathcal{E}_2$ be the event that for a specific pair $w, w' \in V\setminus V_R$, there is no $(w,w')$-journey via a special path through $R$. Then,
\[ Pr[\mathcal{E}_1] \leq n^2 Pr[\mathcal{E}_2] \text{ (by the Union Bound).}\]

So, we have:
\[ Pr[G' \text{ is not temporally connected}] \leq n^2 n^{-\gamma \log_2{\frac{2500}{2499}}}. \]

Note that Lemma~\ref{lem:no_journey} gives an upper bound on the probability of the event $\mathcal{E}_2$. Set $\gamma$ to be $\gamma \geq 4 \cdot \frac{1}{\log_2{\frac{2500}{2499}}}$. Then, we have:

\[ Pr[G' \text{ is not temporally connected}] \leq n^{-2} . \]
\end{proof}

\subsubsection{High removal profit in dense random Erd\"os-Renyi graphs}

In this section, we consider the underlying graph $G=(V,E)$ to be an instance of the Erd\"os-Renyi graph model, $G_{n,p}$, with $p\geq \left( \frac{\gamma \ln{n}}{n} \right)^{\frac{1}{7}},~\gamma \geq 3$.

\begin{definition}[\textbf{Erd\"{o}s-Renyi graphs}]
An instance of $G_{n,p}$ is formed when for every pair of vertices $u,v$ among a total number of $n$ vertices, the edge $\{u,v\}$ is chosen to exist with probability $p$ independently of any other edge.
\end{definition}

Notice that $G_{n,p}$ is almost surely connected for any $p \geq 2 \frac{\ln{n}}{n}$~\cite{bollobasb}. As in the previous section, we consider here a uniform random temporal graph on $G$, i.e., we consider each edge of $G$ to receive exactly one label uniformly at random from a set $\{1,2,\ldots, \alpha\}$, with $\alpha \geq 4$. The selection of the label of an edge is independent of the selection of the label of any other edge. Also, the label selection process is independent of the process of selection of edges in $G_{n,p}$. As in Theorem~\ref{thm:random_complete}, we consider partitioning $\{1,2,\ldots, \alpha\}$ into \emph{four consecutive subsets, $A_1,A_2,A_3,A_4$, of consecutive positive integers}, where each subset is of size either $\lfloor \frac{\alpha}{4} \rfloor$ or $\lfloor \frac{\alpha}{4} \rfloor +1$; such a partition is always possible. Now colour \emph{green}($g$), \emph{yellow}($y$), \emph{blue}($b$) and \emph{red}($r$) the edges that are assigned a label in $A_1$, $A_2$, $A_3$ and $A_4$, respectively. As in Lemma~\ref{lem:clique_set_partition}, we have:
\begin{lemma}
For each particular edge of $G$ and for the label $l$ that it receives, it holds that $Prob[l \in A_i] \geq \frac{1}{7},~ \forall i=1,2,3,4$.
\end{lemma}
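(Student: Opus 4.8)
The plan is to mirror the argument already used for Lemma~\ref{lem:clique_set_partition}, since the only genuinely new ingredient is that the underlying graph is now a random $G_{n,p}$ rather than a clique, and this turns out not to matter for the claim. First I would note that, by Definition~\ref{def:random} and by the stated independence between the $G_{n,p}$ edge-selection process and the label-selection process, conditioning on the event that a particular edge $e$ is present in $G$ does not affect the distribution of the label $l$ assigned to $e$: it remains uniform on $\{1,2,\ldots,\alpha\}$. Hence for each $i\in\{1,2,3,4\}$ we have $\mathrm{Prob}[l\in A_i]=\frac{|A_i|}{\alpha}$.

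Next I would bound $|A_i|$ from below. By the construction of the partition, each $A_i$ has size either $\lfloor\frac{\alpha}{4}\rfloor$ or $\lfloor\frac{\alpha}{4}\rfloor+1$, so in particular $|A_i|\geq\lfloor\frac{\alpha}{4}\rfloor$, giving $\mathrm{Prob}[l\in A_i]\geq\frac{\lfloor\alpha/4\rfloor}{\alpha}$. Writing $\alpha=4k+v$ with $k\in\mathbb{N}$ and $v\in\{0,1,2,3\}$, and observing that $\alpha\geq 4$ forces $k\geq 1$, we get $\lfloor\frac{\alpha}{4}\rfloor=k$ and therefore $\frac{\lfloor\alpha/4\rfloor}{\alpha}=\frac{k}{4k+v}\geq\frac{k}{4k+3}$.

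Finally I would verify the elementary inequality $\frac{k}{4k+3}\geq\frac{1}{7}$, which is equivalent to $7k\geq 4k+3$, i.e., to $3k\geq 3$, i.e., to $k\geq 1$ — which holds. Chaining the three steps gives $\mathrm{Prob}[l\in A_i]\geq\frac{1}{7}$ for every $i\in\{1,2,3,4\}$, as claimed. I do not expect any real obstacle here: the computation is essentially identical in spirit to the clique case, and the only point worth stating carefully is the independence of the label assignment from the realisation of $G_{n,p}$, which is precisely what guarantees the conditional label distribution is still uniform and hence that the simple arithmetic bound goes through unchanged.
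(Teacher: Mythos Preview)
Your proposal is correct and follows essentially the same route as the paper: the paper proves this lemma simply by invoking Lemma~\ref{lem:clique_set_partition} (whose proof writes $\alpha=4k+v$, observes each $A_i$ has size at least $k$, and checks $\frac{k}{4k+3}\geq\frac{1}{7}$ for $k\geq 1$), relying on the already-stated independence between the $G_{n,p}$ edge process and the labelling. Your explicit remark that conditioning on the presence of an edge leaves the label distribution uniform is a welcome clarification, but the arithmetic core is identical.
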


In such instances of $G_{n,p}$, we cannot assume the existence of cliques such as the clique of the temporal router used in the previous section. Indeed, even for very dense instances of $G_{n,p}$, with $p=\frac{1}{2}$, the largest clique is at most of size $2\ln{n}$~\cite{bollobasb}.

In order to ``sparsify'' labelled instances $G$ of $G_{n,p}$, by removing labels without violating TC, we need to guarantee the existence of much sparser routing subsets of $G$.

\begin{definition}
Given two vertices $v_1,v_2$ of $G_{n,p}$, a \emph{temporal router, $R(v_1,v_2)$, in an instance $I$ of $G_{n,p}$} is a subgraph of $I$ that has vertices $v_1,v_2$ and additional vertices $a_1,\ldots, a_k$ and $b_1, \ldots, b_k$ so that:
\begin{itemize}
\item $v_1$ connects directly to each $a_i,b_i,~i=1,\ldots, k$,
\item $v_2$ connects directly to each $a_i,b_i,~i=1,\ldots, k$,
\item each pair $a_i,b_i$ is directly connected, $i=1,\ldots, k$,
\item each edge $\{a_i,b_i\}$ receives a green label, $i=1,\ldots, k$,
\item each edge $\{a_i,v_1\}$ receives a yellow label, $i=1,\ldots, k$,
\item each edge $\{v_1,b_i\}$ receives a blue label, $i=1,\ldots, k$,
\item each edge $\{v_2,a_i\}$ receives a blue label, $i=1,\ldots, k$,
\item each edge $\{b_i,v_2\}$ receives a yellow label, $i=1,\ldots, k$.
\end{itemize}
\end{definition}

Figures~\ref{fig:random_router_k_1} and~\ref{fig:random_router_k_2} show a temporal router $R(v_1,v_2)$ for $k=1$ and $k=2$ respectively.
\begin{figure}[!htb]
\centering
\includegraphics[width=0.3\textwidth]{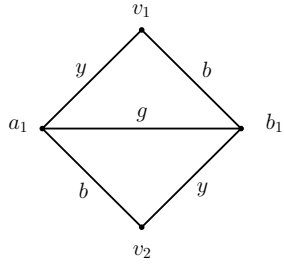}
\caption{Temporal router $R(v_1,v_2)$ for $k=1$.}
\label{fig:random_router_k_1}
\end{figure}

\begin{figure}[!htb]
\centering
\includegraphics[width=0.3\textwidth]{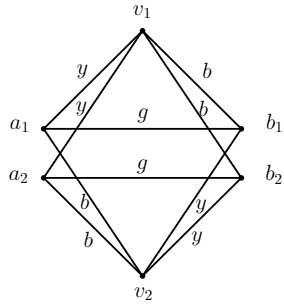}
\caption{Temporal router $R(v_1,v_2)$ for $k=2$.}
\label{fig:random_router_k_2}
\end{figure}

\begin{note*}
A temporal router $R(v_1,v_2)$ in an instance $I$ of $G_{n,p}$ is \emph{temporally connected} since:
\begin{itemize}
\item any $a_i$ can reach any $b_j$, via a journey through $v_1$, i.e., $(a_i,v_1,b_j)$ is a journey,
\item any $b_i$ can reach any $a_j$, via a journey through $v_2$, i.e., $(b_i,v_2,a_j)$ is a journey,
\item any $a_i$ can reach any $a_j\not= a_i$, via a journey through $b_i$ and then $v_2$, i.e., $(a_i,b_i,v_2,a_j)$ is a journey,
\item any $b_i$ can reach any $b_j\not= b_i$, via a journey through $a_i$ and then $v_1$, i.e., $(b_i,a_i,v_1,b_j)$ is a journey,
\item $v_1$ can reach $v_2$, via any $a_i$, i.e., $(v_1,a_i,v_2)$ is a journey,
\item $v_2$ can reach $v_1$, via any $b_i$, i.e., $(v_2,b_i,v_1)$ is a journey, and
\item all other (temporal) connections are direct.
\end{itemize}
\end{note*}

\begin{definition}
We denote by $R_i$ and call it the \emph{$i^{th}$ theta subgraph of $R(v_1,v_2)$} the labelled subgraph of $R(v_1,v_2)$ induced by the vertices $v_1,v_2,a_i$, and $b_i$, for some $i=1, \ldots, k$ (cf.~Figure~\ref{fig:theta_sub}).
\begin{figure}[!htb]
\centering
\includegraphics[width=0.3\textwidth]{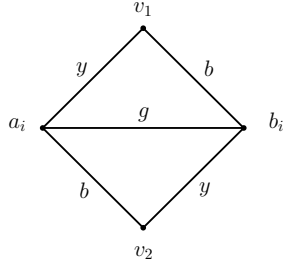}
\caption{The $i^{th}$ theta subgraph, $R_i$.}
\label{fig:theta_sub}
\end{figure}
\end{definition}

Note that the following Lemma holds:
\begin{lemma}\label{lem:rand_exp}
Let $I$ be an instance of $G_{n,p}$ with a uniform random labelling from the set $\{1,2,\ldots, \alpha\},~\alpha \geq 4$. Then, for each particular $i=1,\ldots, k$, it holds:
\[Pr[R_i \text{ exists in } I] \geq \left( \frac{p}{7} \right)^5\]
\end{lemma}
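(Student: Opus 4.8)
The plan is to compute the probability that the $i^{th}$ theta subgraph $R_i$ exists directly from its definition, treating the edge-existence events and the label-assignment events as mutually independent. The subgraph $R_i$ is determined by four vertices $v_1, v_2, a_i, b_i$ and involves exactly five edges: $\{a_i,b_i\}$, $\{a_i,v_1\}$, $\{v_1,b_i\}$, $\{v_2,a_i\}$, and $\{b_i,v_2\}$. For $R_i$ to exist (as a labelled subgraph with the prescribed colour pattern), each of these five edges must be present in the instance $I$ of $G_{n,p}$ \emph{and} must carry a label of the prescribed colour: $\{a_i,b_i\}$ green, $\{a_i,v_1\}$ yellow, $\{v_1,b_i\}$ blue, $\{v_2,a_i\}$ blue, and $\{b_i,v_2\}$ yellow.

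First I would invoke the independence of the edge-selection process (each edge present with probability $p$, independently) and the label-selection process (each present edge gets a uniformly random label, independently across edges, and independently of which edges are present). Then the event ``$R_i$ exists'' factors as the intersection of five independent events, one per edge, where the event for a given edge is ``edge present \emph{and} its label lies in the required colour class $A_j$.'' By the preceding Lemma, for each edge $e$ and each colour class $A_j$ we have $Pr[\text{label of }e \in A_j] \geq \frac{1}{7}$; combined with $Pr[e \text{ present}] = p$ and independence of the two sources of randomness, the probability of each of the five required per-edge events is at least $\frac{p}{7}$. Multiplying the five independent lower bounds gives $Pr[R_i \text{ exists in } I] \geq \left(\frac{p}{7}\right)^5$, as claimed.

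The only subtlety — and it is minor — is to be careful that the five edges involved are genuinely distinct (so that the five events are over disjoint sets of underlying random choices and hence independent); this is immediate from the definition of $R_i$ since the pairs $\{a_i,b_i\}$, $\{a_i,v_1\}$, $\{v_1,b_i\}$, $\{v_2,a_i\}$, $\{b_i,v_2\}$ are pairwise different as unordered pairs of the four vertices $v_1,v_2,a_i,b_i$. There is no real obstacle here: the statement is essentially a bookkeeping exercise in independence, and the heavy lifting (the $\frac{1}{7}$ bound on each colour class probability) has already been done in the cited Lemma. I would state the argument in a sentence or two and conclude.

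\begin{proof}
Fix $i \in \{1,\ldots,k\}$. By definition, $R_i$ is the labelled subgraph induced by the four vertices $v_1, v_2, a_i, b_i$, and the event that $R_i$ ``exists in $I$'' is precisely the event that the five edges $\{a_i,b_i\}$, $\{a_i,v_1\}$, $\{v_1,b_i\}$, $\{v_2,a_i\}$, $\{b_i,v_2\}$ are all present in $I$ and, moreover, that $\{a_i,b_i\}$ receives a green label, $\{a_i,v_1\}$ a yellow label, $\{v_1,b_i\}$ a blue label, $\{v_2,a_i\}$ a blue label, and $\{b_i,v_2\}$ a yellow label. These five pairs are pairwise distinct, so the corresponding edge-presence events are mutually independent (by the definition of $G_{n,p}$), each of probability $p$; likewise the label received by each present edge is chosen uniformly at random and independently of all other edges and of which edges are present. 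Hence the event ``$R_i$ exists in $I$'' is the intersection of five mutually independent events, one per edge, each of the form ``this edge is present \emph{and} its label lies in the prescribed colour class $A_j$''. For a single such edge $e$ and colour class $A_j$ we have, using independence of edge selection and label selection together with the preceding Lemma,
\[
Pr[e \text{ present and label of } e \in A_j] = p \cdot Pr[\text{label of } e \in A_j] \geq p \cdot \frac{1}{7} = \frac{p}{7}.
\]
Multiplying the five independent lower bounds yields
\[
Pr[R_i \text{ exists in } I] \geq \left( \frac{p}{7} \right)^5,
\]
as claimed.
\end{proof}
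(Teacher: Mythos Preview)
Your proof is correct and follows essentially the same approach as the paper: identify the five edges of $R_i$, note that each is present with probability $p$ and receives the prescribed colour with probability at least $\tfrac{1}{7}$, and multiply using independence to obtain $(p/7)^5$. The paper's version is terser and additionally remarks that different theta subgraphs $R_i$ and $R_j$ have disjoint edge sets (a fact used immediately afterwards for independence across $i$), but the argument for the lemma itself is the same.
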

\begin{proof}
Each edge of $R_i$ is realized in $G_{n,p}$ with probability $p$ and receives the correct type of label (green, yellow, blue, or red) with probability at least $\frac{1}{7}$. Note also that the edges of different theta subgraphs $R_i$ and $R_j$, $i\not= j$, are disjoint. Thus, in $G_{n,p}$, the random experiments of each of the theta subgraphs $R_i$ appearing are \emph{independent} from each other and each succeeds with probability at least $\left( \frac{p}{7} \right)^5$.
\end{proof}

Now, consider the set of vertices $V \setminus \{v_1,v_2\}$ and partition it into two almost equal sets $V_1$ and $V_2$; note that $|V_i| \geq \lfloor \frac{n}{2} \rfloor -2 = n',~i=1,2$. Consider a pairing of $n'$ vertices of $V_1$ to $n'$ vertices of $V_2$ and let the $n'$ different pairs be the possible pairs of vertices $a_i,b_i$ in a theta subgraph $R_i$. By Lemma~\ref{lem:rand_exp} and since the random experiments are independent, the number of appearances of $R_i$ is at least the number of successes in a Bernoulli distribution of $n'$ trials, with success probability $\left( \frac{p}{7} \right)^5$ per trial. Therefore, by the Chernoff bound, we have the following Lemma:
\begin{lemma}\label{lem:prob_theta_sub}
In any instance $I$ of a $G_{n,p}$ that has been labelled uniformly at random, the probability that there is a temporal router $R(v_1,v_2)$ consisting of at least $k= \frac{n'}{2} \left( \frac{p}{7} \right)^5 $ theta subgraphs $R_i$ is at least $1- e^{-\frac{1}{4} n' \left( \frac{p}{7} \right)^5} \geq 1- e^{-\frac{n}{12} \left( \frac{p}{7} \right)^5} $ (since $n' \geq \frac{n}{3}$).
\end{lemma}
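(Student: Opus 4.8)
The plan is to obtain the $k$ theta subgraphs of the desired router as the ``successes'' of $n'$ independent Bernoulli trials, and then apply a Chernoff bound to the number of successes. First I would fix the two designated vertices $v_1,v_2$ and, exactly as in the construction preceding the lemma, partition $V\setminus\{v_1,v_2\}$ into two sets $V_1,V_2$ with $|V_1|,|V_2|\ge n'=\lfloor n/2\rfloor-2\ge n/3$, and fix an arbitrary perfect-as-possible matching $\{(a_i,b_i):1\le i\le n'\}$ with $a_i\in V_1$ and $b_i\in V_2$. For each $i$, call the matched pair \emph{good} if the five edges $\{a_i,b_i\},\{a_i,v_1\},\{v_1,b_i\},\{v_2,a_i\},\{b_i,v_2\}$ are all present in the instance $I$ and carry, respectively, a green, yellow, blue, blue, yellow label --- i.e.\ if these five edges together with $v_1,v_2$ realize the $i$th theta subgraph $R_i$. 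By Lemma~\ref{lem:rand_exp}, the $i$th pair is good with probability at least $(p/7)^5$.

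The crucial observation is that for distinct indices $i\ne j$ the candidate theta subgraphs use \emph{disjoint} edge sets: since the matching has distinct endpoints, $R_i$ and $R_j$ share only the vertices $v_1,v_2$ and no edge at all (the edge $\{v_1,v_2\}$ never occurs in any $R_i$). Consequently the events ``pair $i$ is good'' depend on disjoint collections of the independent random choices --- edge existence in $G_{n,p}$ together with the independent uniform edge-labels --- hence they are mutually independent. Letting $X$ be the number of good pairs, $X$ stochastically dominates a $\mathrm{Bin}\bigl(n',(p/7)^5\bigr)$ random variable, so in particular $\mathbb{E}[X]\ge n'(p/7)^5$.

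Now I would apply a multiplicative Chernoff bound to the lower tail of $X$ with relative deviation $\tfrac12$, which gives $\Pr\bigl[X<\tfrac{n'}{2}(p/7)^5\bigr]\le e^{-\frac14 n'(p/7)^5}$. Whenever $X\ge k:=\tfrac{n'}{2}(p/7)^5$ we may select any $k$ of the good pairs and output the temporal router $R(v_1,v_2)$ consisting of $v_1,v_2$ together with those $k$ theta subgraphs; hence a temporal router with at least $k$ theta subgraphs exists with probability at least $1-e^{-\frac14 n'(p/7)^5}$. Substituting $n'\ge n/3$ into the exponent yields the stated weaker bound $1-e^{-\frac{n}{12}(p/7)^5}$.

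The only genuinely delicate point is the independence step: one must use a fixed matching (rather than arbitrary pairs of vertices) precisely so that the five-edge gadgets $R_i$ are edge-disjoint and the corresponding trials are independent, making the Chernoff bound applicable; together with pinning down the exact constant in the Chernoff exponent, this is the part that needs care. Everything else --- the counting of $n'$, the domination by a binomial, and the final substitution --- is routine bookkeeping, and the connectivity of the resulting router is already recorded in the Note preceding Definition of the theta subgraph.
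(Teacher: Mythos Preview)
Your proposal is correct and follows essentially the same approach as the paper: fix the matching of $n'$ pairs $(a_i,b_i)$, observe that the candidate theta subgraphs are edge-disjoint so the $n'$ success events are independent Bernoulli trials with success probability at least $(p/7)^5$, and apply the multiplicative Chernoff lower-tail bound with deviation $1/2$. You have in fact spelled out the independence and stochastic-domination steps more carefully than the paper does, but the route is identical.
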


\begin{corollary}
Note that, again by the Chernoff bound, $k$ asymptotically almost surely \emph{does not exceed} $\frac{3}{2}n' \left( \frac{p}{7} \right)^5 $, since $Pr[k> \left(1+\frac{1}{2} \right) E(k)] \leq e^{-\frac{1}{4} \frac{n}{12} \left( \frac{p}{7} \right)^5}$.
\end{corollary}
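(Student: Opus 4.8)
The plan is to re-run the Chernoff computation behind Lemma~\ref{lem:prob_theta_sub}, but now for the upper tail. Recall that the number $k$ of theta subgraphs actually contained in the temporal router $R(v_1,v_2)$ built over the $n'$ candidate pairs $(a_i,b_i)$ is, by construction, the number of those pairs for which the $i$th theta subgraph $R_i$ appears, i.e.\ for which all five edges of $R_i$ both exist in the given instance of $G_{n,p}$ and receive their prescribed colours. As already observed in the proof of Lemma~\ref{lem:rand_exp}, distinct $R_i$'s use disjoint edge sets (they overlap only in $v_1,v_2$), so these $n'$ events are mutually independent; hence $k$ is a sum of $n'$ independent $0/1$ indicators, i.e.\ a binomial-type random variable, with $E(k)$ of order $n'(p/7)^5$.

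First I would fix a clean binomial upper bound for $k$: dropping the colour requirements can only increase the count, so $k$ is dominated by $Y\sim\mathrm{Bin}(n',p^5)$, the number of candidate pairs whose five edges merely exist in $G_{n,p}$, and $\mu:=E(Y)=n'p^5=\Theta\!\big(n'(p/7)^5\big)$. I would then apply the standard multiplicative Chernoff upper-tail bound $\Pr[Y\ge(1+\delta)\mu]\le e^{-\delta^2\mu/3}$ with $\delta=\tfrac12$ to get $\Pr\!\big[k>\tfrac32\mu\big]\le\Pr\!\big[Y>\tfrac32\mu\big]\le e^{-\mu/12}$, which, using $n'\ge\lfloor n/2\rfloor-2\ge n/3$, yields exactly the claimed form $\Pr[k>(1+\tfrac12)E(k)]\le e^{-\Theta(n(p/7)^5)}$, up to the absolute constant hidden in the exponent.

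It then remains only to verify that this failure probability is $o(1)$, which is precisely where the density hypothesis $p\ge(\gamma\ln n/n)^{1/7}$, $\gamma\ge 3$, enters, just as in Lemmas~\ref{lem:rand_exp}--\ref{lem:prob_theta_sub}: from $p^5\ge(\gamma\ln n/n)^{5/7}$ we get $np^5\ge\gamma^{5/7}\,n^{2/7}(\ln n)^{5/7}\to\infty$, so $e^{-\Theta(n(p/7)^5)}=o(1)$ and therefore $k\le\tfrac32 n'(p/7)^5$ asymptotically almost surely, which is the statement.

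The one place that needs care — and the step I would be most careful about — is the direction of the stochastic comparison. Lemma~\ref{lem:rand_exp} only \emph{lower}-bounds each per-slot probability by $(p/7)^5$, which is the useful direction for the lower bound on $k$ in Lemma~\ref{lem:prob_theta_sub} but the wrong direction here; an honest upper-tail argument instead needs an \emph{upper} bound on the per-slot probability, which is why I would pass through $Y\sim\mathrm{Bin}(n',p^5)$ (ignoring colours) rather than through $\mathrm{Bin}(n',(p/7)^5)$. Since $p^5$ and $(p/7)^5$ differ only by the constant factor $7^5$, this is harmless for the ``asymptotically almost surely'' conclusion and affects only the absolute constant in the exponent.
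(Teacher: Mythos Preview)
The paper offers no separate proof: the corollary is a one-line invocation of the upper-tail Chernoff bound, implicitly taking $E(k)=n'(p/7)^5$ (i.e.\ treating the lower bound of Lemma~\ref{lem:rand_exp} as the actual mean) and reading off $\Pr[k>(3/2)E(k)]\le e^{-\frac{1}{4}\cdot\frac{n}{12}(p/7)^5}$.

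Your approach is more scrupulous than the paper's, and you are right to flag the direction issue: Lemma~\ref{lem:rand_exp} only bounds each success probability from \emph{below}, which is the wrong side for an upper-tail estimate. Dominating $k$ pointwise by $Y\sim\mathrm{Bin}(n',p^5)$ is a clean and correct fix. However, what your argument then establishes is $k\le\tfrac32\,E(Y)=\tfrac32\,n'p^5$ a.a.s., \emph{not} $k\le\tfrac32\,n'(p/7)^5$: the factor $7^5$ sits in the threshold on $k$, not merely in the exponent as your final paragraph suggests. Relatedly, your sentence ``yields exactly the claimed form $\Pr[k>(1+\tfrac12)E(k)]\le\ldots$'' quietly swaps $E(Y)$ for $E(k)$; since $E(k)\le E(Y)$, the event $\{k>(3/2)E(k)\}$ \emph{contains} $\{k>(3/2)E(Y)\}$, so your bound on the latter does not bound the former. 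In short, you do not recover the literal inequality in the corollary---but then neither does the paper's own one-liner, since $n'(p/7)^5$ is only a lower bound on $E(k)$.

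For the role this corollary actually plays---ensuring that the router carries $\Theta(n+np^5)$ labelled edges---only $k=O(np^5)$ a.a.s.\ is needed, and your domination argument delivers exactly that. If you want a bound with the constant $7$ still visible, apply Chernoff directly to $k\sim\mathrm{Bin}(n',\rho)$ with the true per-slot probability $\rho$: the upper tail gives $k\le\tfrac32 n'\rho$ a.a.s., and then use an \emph{upper} bound on the colour probabilities (each $A_i$ has at most $\lceil\alpha/4\rceil$ labels) to cap $\rho$ by a constant multiple of $(p/7)^5$.
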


We now condition on the event, $\mathcal{E}_1$, that the instance $I$ of a labelled $G_{n,p}$ has a temporal router $R(v_1,v_2)$ of at least $k=\frac{n}{6} \left( \frac{p}{7} \right)^5$ theta subgraphs. By Lemma~\ref{lem:prob_theta_sub}, we know that:
\[Pr[ 	\bar{\mathcal{E}_1}] \leq e^{-\frac{n}{12} \left( \frac{p}{7} \right)^5}.\]
Notice that to calculate the probability of the event $\mathcal{E}_1$, we did not examine at all the appearance of edges from a vertex outside of $R(v_1,v_2)$ to any vertex of $R(v_1,v_2)$. Given that $R(v_1,v_2)$ exists, any vertex $u$ that is not in $R(v_1,v_2)$ can reach any vertex $u'$ that is also not in $R(v_1,v_2)$ \emph{through $R(v_1,v_2)$} via a journey, if $u$ connects to some $a_i$ directly with a green edge, and  $b_i$ connects to $u'$ directly with a red edge. Then, $(u,a_i,v_1,b_i,u')$ is a journey.

The probability of the edge $\{u,a_i\}$ being green \emph{and} the edge $\{b_i,u'\}$ being red, for any $i$, is $\left( \frac{p}{7} \right)^2$ and it is independent of edge experiments inside $R(v_1,v_2)$. So, we have:

\begin{lemma}
Condition on the event $\mathcal{E}_1$ of the existence of $R(v_1,v_2)$ in $G_{n,p}$ with at least $k= \frac{n}{6} \left( \frac{p}{7} \right)^5$ theta subgraphs. Let $u,u'$ any two (different) vertices of $G$ that are not in $R(v_1,v_2)$. Then:
\[ Pr[\text{there exists a } (u,u') \text{-journey through } R(v_1,v_2)] \geq 1- \big(1- (\frac{p}{7})^2 \big)^k \]
\end{lemma}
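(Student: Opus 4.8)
The plan is to exhibit, for each of the $k$ theta subgraphs $R_1,\dots,R_k$ of $R(v_1,v_2)$, one candidate $(u,u')$-journey, and to show that these $k$ candidates succeed or fail \emph{independently} once we condition on $\mathcal{E}_1$. Fix $i\in\{1,\dots,k\}$ and consider the path $P_i=(u,a_i,v_1,b_i,u')$. By the definition of a theta subgraph the edge $\{a_i,v_1\}$ carries a label in $A_2$ (yellow) and the edge $\{v_1,b_i\}$ carries a label in $A_3$ (blue), and both facts are guaranteed by the event $\mathcal{E}_1$. Hence $P_i$ is a $(u,u')$-journey exactly when, in addition, the edge $\{u,a_i\}$ is present in $G$ and receives a label in $A_1$ (green) and the edge $\{b_i,u'\}$ is present and receives a label in $A_4$ (red): in that case the four labels along $P_i$ lie in $A_1,A_2,A_3,A_4$ in this order, so they are strictly increasing and $P_i$ is a journey.

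Next I would verify that the $2k$ edges $\{u,a_1\},\dots,\{u,a_k\},\{b_1,u'\},\dots,\{b_k,u'\}$ are pairwise distinct and are disjoint from the set of edges whose presence and labels determine $\mathcal{E}_1$. They are pairwise distinct because the $a_i$ are distinct, the $b_i$ are distinct, and an equality $\{u,a_i\}=\{b_j,u'\}$ would force $u=u'$ or would place $u$ or $u'$ inside $R(v_1,v_2)$. They are disjoint from the $\mathcal{E}_1$-edges because the only edges ever inspected to test the candidate theta subgraphs are, for each candidate pair $(a_j,b_j)$, the edge $\{a_j,b_j\}$ together with the four edges joining $a_j,b_j$ to $v_1$ or $v_2$; since $u,u'\notin R(v_1,v_2)$ are not paired with any $a_i$ or $b_i$ in the fixed pairing and are distinct from $v_1,v_2$, none of these inspected edges equals $\{u,a_i\}$ or $\{b_i,u'\}$. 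Consequently, conditionally on $\mathcal{E}_1$ the presence/label experiments on these $2k$ edges are mutually independent and independent of $\mathcal{E}_1$; by the preceding colour-probability lemma each colour occurs with probability at least $\frac17$, so $\Pr[P_i\text{ is a }(u,u')\text{-journey}\mid\mathcal{E}_1]\ge(p/7)^2$, independently over $i=1,\dots,k$. Therefore $\Pr[\text{no }P_i\text{ is a }(u,u')\text{-journey}\mid\mathcal{E}_1]\le\big(1-(p/7)^2\big)^k$, and since any $P_i$ that is a journey passes through $R(v_1,v_2)$, the claimed bound $1-\big(1-(p/7)^2\big)^k$ follows; if the router happens to contain more than $k$ theta subgraphs this only strengthens it.

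The one genuinely delicate point, which I would spell out explicitly, is the independence claim of the second paragraph: conditioning on the mere \emph{existence} of a large router must not bias the ``bridge'' edges leading from $u,u'$ into it. I would handle this by a two-stage revelation of the randomness: first expose the presence and labels of all edges used to test the candidate theta subgraphs, which determines both $\mathcal{E}_1$ and the identities of $R_1,\dots,R_k$ (hence of all the $a_i,b_i$); only then expose the $2k$ edges $\{u,a_i\},\{b_i,u'\}$, $i=1,\dots,k$, which by the previous paragraph were never touched in the first stage. The product structure of $G_{n,p}$ together with the independent uniform labelling then gives exactly the conditional independence used above, and the remainder is the routine product estimate.
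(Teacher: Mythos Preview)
Your proposal is correct and follows exactly the paper's approach: for each theta subgraph $R_i$ consider the candidate journey $(u,a_i,v_1,b_i,u')$, note that it succeeds when the two ``bridge'' edges $\{u,a_i\}$ and $\{b_i,u'\}$ exist with the right colours, and multiply the failure probabilities. The paper's own proof is extremely terse (it simply asserts the per-journey failure probability and multiplies), whereas you spell out carefully why the bridge edges are distinct from one another and from the edges inspected in deciding~$\mathcal{E}_1$, and justify conditional independence via a two-stage revelation; this is the same argument, just made rigorous. One small wording point: you write ``$P_i$ is a $(u,u')$-journey \emph{exactly when}\ldots'', but in fact green/red colouring is only sufficient, not necessary---this does not affect the inequality you need.
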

\begin{proof}
For the vertices $u,u'$ as described above and any one of the $k$ possible journeys of the form $(u,a_i,v_1,b_i,u')$, the probability that such a journey fails (i.e., is not realized) is at most $1-\left( \frac{p}{7} \right)^2$. Therefore, given $\mathcal{E}_1$, we have:
\[ Pr[\text{there exists no } (u,u') \text{-journey through } R(v_1,v_2)] \leq \big(1- (\frac{p}{7})^2 \big)^k \]
\end{proof}

Let $\mathcal{E}_2$ be the event that given $k$ pairs of vertices $a_i,b_i$ in a possible $R(v_1,v_2)$, each vertex pair $u,u'$, with $u\not= u'$ and $u,u' \not \in V\big( R(v_1,v_2) \big)$, satisfies the following: there is at least one pair of vertices $a_i,b_i$ such that $u$ connects to $a_i$ with a green edge \emph{and} $u'$ connects to $b_i$ with a red edge.


Notice that $\bar{\mathcal{E}_2}$ is the event that there is a pair of vertices $u,u'$ that are not in $R(v_1,v_2)$ that fails to connect as described above. Since the number of possible pairs of vertices $u,u'$ is less than $n^2$, we have:
\begin{eqnarray*}
Pr[\bar{\mathcal{E}_1}] & \leq & n^2  \big(1- (\frac{p}{7})^2 \big)^k \\
 & \leq & n^2 e^{-k (\frac{p}{7})^2} \\
 & = & e^{-k (\frac{p}{7})^2 +2 \ln{n}}
\end{eqnarray*}

Note now that the following Lemma holds:
\begin{lemma}
The event $\mathcal{E}_1$ and the evnt $\mathcal{E}_2$ (given $\mathcal{E}_1$) guarantee that the (labelled) instance $I$ of $G_{n,p}$ is temporally connected ``via the temporal router $R(v_1,v_2)$''.
\end{lemma}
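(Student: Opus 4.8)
The plan is to show that, conditioned on $\mathcal{E}_1$ and $\mathcal{E}_2$, a suitable \emph{sparsified} spanning subgraph $G'$ of the labelled instance $I$ already satisfies TC, so that all labels of $I$ not used by $G'$ may be deleted. I would take $G'$ to consist of (i) all time-edges of the router $R(v_1,v_2)$ guaranteed by $\mathcal{E}_1$, together with (ii) for every vertex $u\notin V(R(v_1,v_2))$, all edges of $I$ joining $u$ to $V(R(v_1,v_2))$ — in particular every green edge $\{u,a_i\}$ and every red edge $\{u,b_i\}$ that is present. (This mirrors the sparsification used for the clique in Theorem~\ref{thm:random_complete}.) It then suffices to exhibit, for every ordered pair of vertices, a journey lying entirely inside $G'$.

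I would split the verification into three cases. If both endpoints lie in $V(R(v_1,v_2))$, the required journeys in both directions already exist and use only edges of type (i): this is exactly the content of the Note asserting that every temporal router is temporally connected. If one endpoint $x$ lies in the router and the other, $u$, does not, I would first use $\mathcal{E}_2$ (applied to $u$ together with any third outside vertex — there are $\Theta(n)$ of these, since $k$ is a tiny fraction of $n$) to fix a green edge $\{u,a_i\}$ and a red edge $\{u,b_j\}$; then, because the blocks satisfy $A_1<A_2<A_3<A_4$, the sequences $(u,a_i,v_1)$, $(u,a_i,v_2)$ and, for every $\ell$, $(u,a_i,v_1,b_\ell)$ are journeys, while conversely $(v_1,b_j,u)$, $(v_2,b_j,u)$ and $(a_\ell,v_1,b_j,u)$ are journeys, and the retained direct edge $\{u,a_\ell\}$ of $G'$, when present, handles the one remaining target. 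The crucial case is $u,u'\notin V(R(v_1,v_2))$: here $\mathcal{E}_2$ yields an index $i$ with $\{u,a_i\}$ green and $\{u',b_i\}$ red, so the edges of $(u,a_i,v_1,b_i,u')$ carry labels lying in $A_1,A_2,A_3,A_4$ respectively and are therefore strictly increasing, i.e.\ this is a $(u,u')$-journey in $G'$; applying $\mathcal{E}_2$ to the reversed ordered pair gives the reverse journey. Collecting the three cases proves TC for $G'$, hence for $I$.

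I expect the main obstacle to be purely the bookkeeping of label monotonicity: since we only control which of the four blocks each relevant label falls into, every journey used must traverse edges whose colours are \emph{strictly increasing} in the order green $<$ yellow $<$ blue $<$ red — this is precisely why $\{1,\dots,\alpha\}$ is split into four \emph{consecutive} subsets and why the router and connecting edges are coloured as they are. The one mildly delicate point is the router-to-outside direction of the second case (reaching, and being reached from, the internal vertices $a_\ell,b_\ell$), which I would dispatch by composing the length-$3$ journeys through $v_1$ or $v_2$ with the retained direct edges of type (ii); everything else is a routine check that the journeys listed above only ever use edges kept in $G'$.
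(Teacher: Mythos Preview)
Your overall strategy---exhibit a sparse spanning subgraph $G'$ containing the router and verify TC by a three-way case split---is the same as the paper's, but two points differ.

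First, your sparsification is looser than the paper's. You retain \emph{all} edges of $I$ from each outside vertex to the router, whereas the paper keeps only \emph{two} edges per outside vertex $u$: one green edge $\{u,a_i\}$ and one red edge $\{u,b_j\}$ (whose existence follows from applying $\mathcal E_2$ to $(u,u')$ and to $(u',u)$ for any third outside vertex). This tighter choice is exactly what yields the $2n+\Theta(k)$ edge count used immediately after the lemma; your $G'$ carries $\Theta(nk)$ outside-to-router edges and would spoil that count.

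Second, your case-2 bookkeeping has a genuine gap. You concede that the target $u\to a_\ell$ is handled by the direct edge $\{u,a_\ell\}$ only ``when present'', but in $G_{n,p}$ this edge need not exist, and none of your listed journeys through the router reach $a_\ell$ from $u$: after the green step $u\to a_i$ you may continue yellow to $v_1$ or blue to $v_2$, but neither $v_1$ nor $v_2$ has an edge of a strictly higher colour to any $a_\ell$. You also omit $b_\ell\to u$ for $\ell\neq j$; here the four-colour chain $(b_\ell,a_\ell,v_1,b_j,u)$ with colours green, yellow, blue, red works, and this is precisely what the paper writes out in its items (c) and (d). The paper's item (b) is itself stated tersely, so the $u\to a_\ell$ direction is not spelled out there either; but your ``when present'' caveat shows you saw the difficulty and did not resolve it, so as written your plan does not close the case.
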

\begin{proof}
Condition on $\mathcal{E}_1$ and on $\mathcal{E}_2$ (given $\mathcal{E}_1$). Then, for each vertex $u\not \in V\big( R(v_1,v_2) \big)$, keep one of its green edges (to some $a_i$) and one of its red edges (to the corresponding $b_i$), since by $\mathcal{E}_2$, those exist. Remove all edges of $I$ except for the edges of $R(v_1,v_2)$ and the two edges we keep for every vertex that is not in $R(v_1,v_2)$. The resulting labelled subgraph of $I$ is temporally connected, since:
\begin{enumerate}[label=\alph*)]
\item $R(v_1,v_2)$ is temporally connected itself, by construction,
\item any $u \not \in V\big( R(v_1,v_2) \big)$ has a journey via $R(v_1,v_2)$ to any other $u' \in V$ in the graph,
\item any $a_i$ or $b_j$ can reach any $u \not \in V\big( R(v_1,v_2) \big)$ via a journey through $v_1$(using first a green edge, if we start from a $b_j$ vertex, and then using a yellow, a blue and a red edge to reach $u$), and
\item $v_1$ and $v_2$ can reach any $u \not \in V\big( R(v_1,v_2) \big)$ via a journey through some vertex $b_i$ (using first a blue -or yellow, respectively- edge to $b_i$, and then a red edge to $u$).
\end{enumerate}
\end{proof}

The temporally connected instance $I$ of $G_{n,p}$ after the removal of redundant edges as described above has a number of labelled edges (i.e., time-edges) that is at most $2n+\Theta(k)$. Since $k=\frac{n}{6} \left( \frac{p}{7}\right)^5$, $I$ has at most $\Theta(n+ n p^5)$ labels after the removal of the redundant edges.

We will now choose $p$ so that we have:
\begin{enumerate}[label=\alph*)]
\item $k \left( \frac{p}{7}\right)^2 \geq \gamma \ln{n},~\gamma >3$, and
\item $k= \Theta\big( n\left(\frac{p}{7}\right)^5 \big)$ with probability $1- e^{-\frac{n}{12} \left( \frac{p}{7} \right)^6}$.
\end{enumerate}

It suffices to have $\Theta \big(  n\left(\frac{p}{7}\right)^7 \big) \geq \gamma \ln{n}~(\gamma>3)$,i.e., \[p \geq 7 \left( \frac{\gamma \ln{n}}{n} \right)^{\frac{1}{7}} . \]

We conclude with the following Theorem:
\begin{theorem}
Consider a $G_{n,p}$, with $p \geq 7 \left( \frac{\gamma \ln{n}}{n} \right)^{\frac{1}{7}}$, for some $\gamma>3$, labelled uniformly at random. Then, any instance $I$ of $G_{n,p}$ needs only $\Theta( n + n p^5 )$ time-edges to be temporally connected, with probability at least $1-2e^{-\gamma' \ln{n}}$,for some $\gamma' \geq 1$.
\end{theorem}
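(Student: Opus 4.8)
The plan is to bolt together the chain of lemmas already established for $G_{n,p}$ into a single union-bound argument, mirroring the proof of Theorem~\ref{thm:random_complete} for the clique. First I would fix two arbitrary distinguished vertices $v_1,v_2\in V$ and, splitting $V\setminus\{v_1,v_2\}$ into two almost-equal halves paired up as the candidate $(a_i,b_i)$ pairs, let $\mathcal{E}_1$ be the event that the labelled instance $I$ contains a temporal router $R(v_1,v_2)$ with at least $k=\frac{n}{6}\left(\frac{p}{7}\right)^5$ theta subgraphs. By Lemma~\ref{lem:prob_theta_sub} (a Chernoff bound on a sum of the independent success events of Lemma~\ref{lem:rand_exp}), $Pr[\overline{\mathcal{E}_1}]\le e^{-\frac{n}{12}\left(\frac{p}{7}\right)^5}$, and by the accompanying Chernoff upper tail one also gets $k=\Theta\!\left(n\left(\frac{p}{7}\right)^5\right)$ a.a.s., so the two-sided estimate on $k$ is available.

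Conditioned on $\mathcal{E}_1$, I would let $\mathcal{E}_2$ be the event that every ordered pair $u,u'\notin V(R(v_1,v_2))$ admits an index $i\le k$ with $\{u,a_i\}$ green and $\{b_i,u'\}$ red, so that $(u,a_i,v_1,b_i,u')$ is a $(u,u')$-journey. Since these two colour events are independent of the router's internal edge experiments and independent across the $k$ vertex-disjoint theta subgraphs, the probability that a fixed pair fails is at most $\left(1-\left(\frac{p}{7}\right)^2\right)^k$, and a union bound over the fewer than $n^2$ such pairs gives $Pr[\overline{\mathcal{E}_2}\mid\mathcal{E}_1]\le n^2\left(1-\left(\frac{p}{7}\right)^2\right)^k\le e^{-k\left(\frac{p}{7}\right)^2+2\ln n}$. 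I would then invoke the structural lemma above: $\mathcal{E}_1\cap\mathcal{E}_2$ guarantees that the sublabelling consisting of the edges of $R(v_1,v_2)$ together with, for each vertex outside the router, the single green and single red edge supplied by $\mathcal{E}_2$, is temporally connected (the router is TC by construction, every outside vertex routes to every other through it, and all remaining connections are direct). This surviving design uses $2n+\Theta(k)$ time-edges, which by the two-sided bound on $k$ is $\Theta(n+np^5)$, matching the statement.

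It then remains to verify that $p\ge 7\left(\frac{\gamma\ln n}{n}\right)^{1/7}$ with $\gamma>3$ makes both failure terms polynomially small. Substituting gives $k\left(\frac{p}{7}\right)^2=\Theta\!\left(n\left(\frac{p}{7}\right)^7\right)\ge\gamma\ln n$, hence $Pr[\overline{\mathcal{E}_2}\mid\mathcal{E}_1]\le e^{-(\gamma-2)\ln n}$; moreover $n\left(\frac{p}{7}\right)^5\ge n^{2/7}(\gamma\ln n)^{5/7}$ dominates $\ln n$, so $Pr[\overline{\mathcal{E}_1}]\le e^{-\gamma'\ln n}$ for any fixed $\gamma'$ once $n$ is large. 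Using $Pr[\overline{\mathcal{E}_1\cap\mathcal{E}_2}]\le Pr[\overline{\mathcal{E}_1}]+Pr[\overline{\mathcal{E}_2}\mid\mathcal{E}_1]$, the overall failure probability is at most $2e^{-\gamma'\ln n}$ with $\gamma'=\gamma-2\ge 1$ (here is where the hypothesis $\gamma>3$ is used). The only delicate part of the write-up is exactly this last bookkeeping: choosing the constants hidden in $k$, in $n'\ge n/3$, and in the exponent $1/7$ of $p$ so that simultaneously $k(p/7)^2\ge\gamma\ln n$, $k=\Theta(np^5)$ holds with the claimed probability, and the surviving exponent $\gamma'$ is at least $1$. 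Everything else is a routine composition of the lemmas already proved.
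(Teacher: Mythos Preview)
Your proposal is correct and follows essentially the same approach as the paper: both arguments combine the router-existence event $\mathcal{E}_1$ (via Lemma~\ref{lem:prob_theta_sub}) with the all-pairs routing event $\mathcal{E}_2$ (via the union bound over $n^2$ pairs), then plug in the threshold $p\ge 7(\gamma\ln n/n)^{1/7}$ to make both failure probabilities at most $e^{-\gamma'\ln n}$. The only cosmetic difference is that the paper writes the success probability multiplicatively as $Pr[\mathcal{E}_1]\cdot Pr[\mathcal{E}_2\mid\mathcal{E}_1]$ before bounding it by $1-2e^{-\gamma'\ln n}$, whereas you phrase it as a union bound on the complements; your bookkeeping of the constant $\gamma'=\gamma-2\ge 1$ is in fact more explicit than the paper's.
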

\begin{proof}
Via the temporal router $R(v_1,v_2)$, any instance $I$ of $G_{n,p}$ becomes temporally connected by using at most $\Theta(n+n p^5)$ edges (and, thus, labels), with probability at least:
\begin{eqnarray*}
Pr[\mathcal{E}_1] \cdot Pr[\mathcal{E}_2 | \mathcal{E}_1] & \geq & \left( 1- e^{-\frac{n}{12} \left( \frac{p}{7} \right)^5} \right) \cdot \left( 1- e^{-k (\frac{p}{7})^2 +2 \ln{n}} \right)  \text{ (for }     k \geq \frac{n}{2} \left( \frac{p}{7} \right)^5  \text{)} \\
 & \geq &  1- 2e^{-\frac{n}{12} \left( \frac{p}{7} \right)^5}\\
 & \geq &  1- 2e^{-\gamma' \ln{n}} \text{ (by the chosen range of } p \text{).}
\end{eqnarray*}
\end{proof}

Note that for the sparsest possible $G_{n,p}$ here, i.e., for $p= 7 \left( \frac{\gamma \ln{n}}{n} \right)^{\frac{1}{7}} $, we need only 
$\Theta(n+n^{\frac{2}{7}} (\ln{n})^{\frac{5}{7}}) = \Theta(n)$
 edges (and, thus, labels) to satisfy TC, with probability at least $1- 2e^{-\gamma' \ln{n}},~ \gamma' \geq 1$.

\section{Conclusions and further research}
In this work, we study the complexity of testing and designing issues of nearly cost-optimal temporal networks that are temporally connected. It remains an open problem to provide a polynomial-time constant factor approximation algorithm for the computation of the removal profit in a given temporally connected temporal graph. Further research could also investigate the complexity of computing the removal profit in special classes of graphs, e.g., planar graphs or the grid. Extensions of this research also include the study of the interval temporal networks model, where edges can be available for continuous intervals of time, as well as a more in-depth study of models of random temporal networks.

\begin{acknowledgements*}

This work was supported in part by:
\begin{enumerate}[label=(i)]
\item the School of EEE/CS and its NeST initiative at the University of Liverpool
\item the FET EU IP Project MULTIPLEX under contract No. 317532, and
\item the EPSRC Grant EP/K022660/1.
\end{enumerate}
\end{acknowledgements*}

\bibliography{CoRR_design}  

\end{document}